\newcommand{\Id}{\ensuremath{\mathsf{Id}}}
\newcommand{\pullback}{\ar@{}[dr]|<<{\lrcorner}}
\newcommand{\pushout}{\ar@{}[ul]|<<{\ulcorner}}
\newcommand{\refl}{\ensuremath{\mathsf{refl}}}
\newcommand{\trans}{\ensuremath{\mathsf{trans}}}
\newcommand{\appcongr}{\ensuremath{\mathsf{app\text{-}cong}}}
\newcommand{\abscongr}{\ensuremath{\mathsf{abs\text{-}cong}}}
\newcommand{\monsubst}[1]{\{#1\}}
\newcommand{\monadfromred}[1]{\underline{#1}}
\newcommand{\Mon}{\ensuremath{\mathsf{Mon}}}
\newcommand{\GMon}{\ensuremath{\mathsf{RedMon}}}
\newcommand{\Mod}{\ensuremath{\mathsf{Mod}}}
\newcommand{\LMod}{\ensuremath{\int_R\Mod(R)}}
\DeclareMathOperator{\codom}{codom}
\DeclareMathOperator{\colim}{colim}
\newcommand{\LC}{{\mathsf{LC}}}
\newcommand{\LCcong}{\ensuremath{\LC_{1\text{-cong}}}}
\newcommand{\LCclot}{\ensuremath{\LCb^*}}
\newcommand{\LCwhb}{\ensuremath{\LC_{wh\beta}}}
\newcommand{\LCb}{\ensuremath{\LC_{\beta}}}
\newcommand{\LCbpar}{\ensuremath{\LC_{\beta\parallel}}}
\newcommand\LCex{{\LC_{\mathsf{ex}}}}
\newcommand\LCfix{{\LC_{\fix}}}
\newcommand{\var}{{\mathsf{var}}}
\newcommand{\app}{{\mathsf{app}}}
\newcommand{\abs}{{\mathsf{abs}}}
\newcommand{\fix}{{\mathsf{fix}}}
\newcommand{\esubst}{{\mathsf{esubst}}}
\newcommand{\Cat}[1]{\mathsf{#1}}
\newcommand{\CC}{\Cat{C}}
\newcommand{\Set}{{\Cat{Set}}}
\newcommand{\swap}{{\mathsf{swap}}}
\newcommand{\Red}{\ensuremath{\mathsf{Red}}}
\newcommand{\red}{\ensuremath{\mathsf{red}}}
\newcommand{\Redof}[1]{\Red(#1)}
\newcommand{\redof}[1]{\red_{#1}}
\newcommand{\id}{\ensuremath{\mathsf{id}}}
\newcommand{\Hyp}{\ensuremath{\mathsf{Hyp}}}
\newcommand{\hyp}{\ensuremath{\mathsf{hyp}}}
\newcommand{\MVar}{\ensuremath{\mathsf{MVar}}}
\newcommand{\Concl}{{\mathsf{Con}}}
\newcommand{\concl}{{\mathsf{con}}}
\newcommand{\phyp}{h}
\newcommand{\pcon}{c}
\newcommand{\fv}{\ensuremath{\mathsf{fv}}}
\newcommand{\et}{\ensuremath{\eta}}
\newcommand{\redrule}[2]{\ensuremath{{#1}
\rightsquigarrow {#2}}}
\newcommand{\redfib}[2]{\ensuremath{{#1}
\blacktriangleright {#2}}}
\newcommand{\source}[1]{\ensuremath{\mathsf{source}_{#1}}}
\newcommand{\target}[1]{\ensuremath{\mathsf{target}_{#1}}}
\newcommand{\s}{\ensuremath{\mathsf{esubst}}}
\newcommand{\dswap}[1]{#1^{\vee}}
\newcommand{\MMM}{Metavariables}
\newcommand{\HHH}{Hypotheses}
\newcommand{\CCC}{Conclusion}
\newcommand{\N}{\ensuremath{\mathbb{N}}}
\newcounter{nameOfYourChoice}
\begin{document}
\title{Reduction Monads and Their Signatures}

\author{Benedikt Ahrens}
\orcid{0000-0002-6786-4538}
\affiliation{
  \department{School of Computer Science}
  \institution{University of Birmingham}
  \country{United Kingdom}
}
\email{B.Ahrens@cs.bham.ac.uk}

\author{Andr\'e Hirschowitz}
\orcid{0000-0003-2523-1481}
\affiliation{
  \department{LJAD}
  \institution{Université Côte d'Azur, CNRS}
  \country{France}
}
\email{ah@unice.fr}

\author{Ambroise Lafont}
\orcid{0000-0002-9299-641X}
\affiliation{
  \department{Gallinette, Inria}
  \institution{IMT Atlantique}
  \country{France}
}
\email{ambroise.lafont@inria.fr}

\author{Marco Maggesi}
\orcid{0000-0003-4380-7691}
\affiliation{
  \department{DiMaI}
  \institution{Università degli Studi di Firenze}
  \country{Italy}
}
\email{marco.maggesi@unifi.it}

\theoremstyle{acmdefinition}
\newtheorem{rem}[theorem]{Remark}
\newtheorem{notation}[theorem]{Notation}

\begin{abstract}

In this work, we study \emph{reduction monads}, which are essentially the same as monads relative to the free functor from sets into multigraphs.
Reduction monads account for two aspects of the lambda calculus: 
on the one hand, in the monadic viewpoint, the lambda calculus is an object equipped with a well-behaved substitution; 
on the other hand, in the graphical viewpoint, it is an oriented multigraph whose vertices are terms and whose edges witness the reductions between two terms.  

We study presentations of reduction monads. 
To this end, we propose a notion of \emph{reduction signature}. 
As usual, such a signature plays the role of a virtual presentation,
and specifies arities for generating operations---possibly subject to equations---together with arities for generating reduction rules. 
For each such signature, we define a category of models; 
any model is, in particular, a reduction monad. 
If the initial object of this category of models exists, we call it the \emph{reduction monad presented (or specified) by the given reduction signature}.
 
Our main result identifies a class of reduction signatures which specify a reduction monad in the above sense.
We show in the examples that our approach covers several standard variants of the lambda calculus.

\end{abstract}

%% 2012 ACM Computing Classification System (CSS) concepts
%% Generate at 'http://dl.acm.org/ccs/ccs.cfm'.
\begin{CCSXML}
  <ccs2012>
  <concept>
  <concept_id>10003752.10003790.10003798</concept_id>
  <concept_desc>Theory of computation~Equational logic and rewriting</concept_desc>
  <concept_significance>500</concept_significance>
  </concept>
  <concept>
  <concept_id>10003752.10010124.10010131.10010137</concept_id>
  <concept_desc>Theory of computation~Categorical semantics</concept_desc>
  <concept_significance>500</concept_significance>
  </concept>
  </ccs2012>
\end{CCSXML}
  
\ccsdesc[500]{Theory of computation~Equational logic and rewriting}
\ccsdesc[500]{Theory of computation~Categorical semantics}

%% Keywords
%% comma separated list
%% \keywords are mandatory in final camera-ready submission
\keywords{Initial semantics, Higher-order languages,
  Reduction systems, Lambda calculus, Monads}

\maketitle

%\tableofcontents

\section{Introduction}
The lambda calculus has been a central object in theoretical computer science for decades. However, the corresponding mathematical structure does not seem to have
been identified once and for all. In particular, two complementary viewpoints on the (pure untyped)
lambda calculus have been widespread: some consider it as a
multigraph (or a relation, or a preorder, or even a category), while others view it as a monad (on the category of sets). The first account incorporates the
$\beta$-reduction, while the second addresses substitution but incorporates
only the $\beta$-equality. 
Merging these two perspectives led Lüth and Ghani \cite{DBLP:conf/ctcs/LuethG97} to consider monads on the category of preordered sets, and Ahrens \cite{ahrens_relmonads} to consider monads relative to the free functor from sets into preorders. 
In the present work, we propose a variant of their approaches. 
Here we call \emph{reduction monad} a monad relative to the discrete injection of sets in multigraphs, and of course the lambda calculus yields such a reduction monad. 
Our main contribution concerns the generation of reduction monads by syntactic (possibly binding) operations (possibly subject to equations) and reduction rules.
As is common in similar contexts, we propose a notion of signature for reduction monads,
which we call ``reduction signatures''.
Each reduction signature comes equipped with the category of its models: such a model is a reduction monad ``acted upon'' by the signature. 
A reduction signature may be understood as a virtual presentation:
when an initial model exists, it inherits a kind of presentation given by the action of the signature,
and we say that the signature is \emph{effective}. 
Our main result (Theorem~\ref{thm:alg-2-sig-initial}) provides a natural criterion for a reduction signature to be effective.
Our main examples are variants of the lambda calculus (Section~\ref{ss:examples-reduction-monads} and Example~\ref{ex:variants-lc}).

In summary, our notions of reduction monad and (effective) reduction signature
\begin{itemize}
 \item allow  ``dynamic'' reduction systems (as opposed to ``static'' ones, where reduction is understood as an equivalence relation); 
 \item cover (some) higher-order languages;
 \item allow reduction systems that are not fully congruent, e.g., weak head reduction in lambda calculus; and
 \item allow reduction systems  that are proof-relevant.
 \end{itemize}

While the first three features listed above are definitely positive, we feel the need to discuss the advantages of the fourth one. 
Basically, one can model reductions either via preorders or relations,
or via multigraphs, and we choose multigraphs.
On the one hand, multigraphs are appropriate to avoid problems with the well-know phenomena of \emph{syntactic accidents} (see~\cite[Example~2.2.9 and Section~8.2]{terese} for a classic reference).
On the other hand, the
approaches via preorders or relations can be easily recovered as special cases.
Indeed, note that the category of preordered sets (as considered, e.g., in \cite{ahrens_relmonads}) is a reflective subcategory of the category of relations  (as considered, e.g., in \cite{DBLP:journals/jlp/Plotkin04a}), which is again a reflective subcategory of the category of multigraphs.
Then, the corresponding adjunctions make it easy to customize our formalism to deal with preordered sets or relations instead of multigraphs.

\subsection{Related Work}

The search for a mathematical notion of programming language goes back at least to Turi and Plotkin \cite{DBLP:conf/lics/TuriP97} who coined the name ``Mathematical Operational Semantics'' and explained how known classes of well-behaved rules for structural operational semantics (SOS) \cite{DBLP:journals/jlp/Plotkin04a}, such  as GSOS \cite{DBLP:conf/popl/BloomIM88}, can be categorically understood via distributive laws and bialgebras (see also \cite{DBLP:journals/tcs/Klin11}). 
Their initial framework did not cover variable binding, and several authors have proposed variants which do \cite{DBLP:conf/lics/FioreT01,DBLP:conf/lics/FioreS06,DBLP:conf/lics/Staton08}, treating examples like the \(\pi\)-calculus.  More recently T.\ Hirschowitz \cite{DBLP:journals/pacmpl/Hirschowitz19} proposed an alternative categorical approach to SOS allowing variable binding.
However, none of these approaches covers higher-order languages like the lambda calculus.

Meanwhile, similar research has been done from the point of view of ``Rewriting Systems'' or ``Equational Systems''.
(A rewriting system is intended to specify a  fully congruent relation, while an equational system is intended to specify a fully congruent equivalence relation.)
In particular:
\begin{itemize}
\item Lüth and Ghani \cite{DBLP:conf/ctcs/LuethG97} interpreted a class of rewriting systems  (without considering bindings) in monads in the category of preordered sets.
\item  Hamana \cite{Hamana:2003:TRV:888251.888266} studied his ``binding term rewriting system (BTRS)'' via preorder-valued functors.
\item 
Building upon \cite{FPT}, Fiore and Hur \cite{FH07} have studied a large class of ``term equational systems'', covering in particular the lambda calculus (see also \cite{DBLP:conf/mfcs/FioreM10}).
In our recent work \cite{FSCD2019}, we have proposed a variant of their work in terms of signatures for monads; this is reviewed below in Section~\ref{sss:1-sigs} since it is the starting building block for the present work.
\item T.~Hirschowitz \cite{DBLP:journals/corr/Hirschowitz14} approached higher-order rewriting systems via Cartesian closed 2-categories.
\item Ahrens \cite{ahrens_relmonads} provided a signature for  the lambda calculus viewed as a monad relative to the inclusion of sets into preorders mapping a set to its discrete preorder.
\end{itemize}
This last work  was a starting point of the present one and we stress two main differences.
The first difference is technical but important: our signatures are built from
modules with values in $\Cat{Set}$ (not in the category of multigraphs) while in~\cite{ahrens_relmonads} signatures involve modules with values in $\Cat{Preorder}$ (not in $\Cat{Set}$). This latter choice is responsible for the fact that full congruence of reductions is hard-coded in~\cite{ahrens_relmonads}.
A second difference is that the format for reduction rules considered in \cite{ahrens_relmonads} does not allow rules with hypotheses, as present for instance in the ``non-full'' congruence rules which specify the head-\(\beta\)-reduction (see the table at the end of Section~\ref{ss:ex-reduction-rules}). 

\subsection{Plan of the Paper}
In Section~\ref{sec:review}, we review those notions from our previous work \cite{ahrens_et_al:LIPIcs:2018:9671,FSCD2019} that we build on in the present work.
In Section~\ref{sec:reduction-monads}, we define the category of reduction monads.
In Section~\ref{sec:reduction-rules}, we give our notion of reduction rules. Building on this notion, we define signatures for reduction monads---\emph{reduction signatures}---in Section~\ref{sec:signatures}.
There, we also state our main result (Theorem~\ref{thm:alg-2-sig-initial}), which provides a simple criterion for such a signature to be effective.
Section~\ref{s:effective-alg-2sig} is devoted to the proof of this result.
Then, in Section~\ref{s:ex-lex}, we give a detailed example of a reduction signature specifying Kesner's lambda calculus with explicit substitutions \cite{lambda-ex}. 
Finally, in Section~\ref{sec:recursion},
we explain the recursion principle which, as usual, can be derived from initiality in our categories of models. We then use this recursion principle to specify a translation from Kesner's lambda calculus with explicit substitutions to the pure lambda calculus.

\section{\texorpdfstring{$\Cat{C}$}{C}-modules and signatures for monads}
\label{sec:review}

The present work is devoted to reduction monads and their signatures.
The first building block of a reduction monad is a monad, and accordingly, the first building block of one of our signatures for reduction monads will be a signature for monads. 
In the present section, we review monads  and their signatures as introduced in \cite{FSCD2019}. 
In that paper, one can also find a review of related work on the generation of monads.

A signature prescribes a monad by specifying:
\begin{enumerate}
\item a family of \emph{constructions};
\item a family of \emph{equations} among these constructions.
\end{enumerate}
As an example, in the case of the lambda calculus, the constructions are application and abstraction (in the following denoted $\app$ and $\abs$).
Equations that can be considered over those constructions are the $\beta$- and $\eta$-equations.
Notice, however, that later in this paper we will opt for considering $\beta$ and $\eta$ as \textbf{reductions} (as opposed to equivalences, or equations).

In Section~\ref{ss:monads-modules} we review the notions of monads and modules (on sets).
In Section~\ref{ss:c-modules} we introduce the notion of $\CC$-module for a category $\CC$.
This notion is used in our definition of signature.
In our formalism, constructions are specified by \emph{1-signatures} (reviewed in Section~\ref{sss:1-sigs}), and equalities between constructions specified by 1-signatures are specified by \emph{equations} (reviewed in Section~\ref{sss:2-sigs}). 
A 2-signature is a pair consisting of a 1-signature and a family of equations over that signature.
We refer to both 1- and 2-signatures simply as \textbf{signatures for monads} when the distinction between the two notions is not relevant.

\subsection{Monads and Modules}
\label{ss:monads-modules}

In this section we review the notions of monad and module to the extent required in this work.
We write $g\circ f$ for the composition of morphisms $f : A \to B$ and $g : B \to C$ in any category. Similarly, functor composition is written $G \cdot F$.
For the purpose of this work, we restrict ourselves to the category $\Mon$ of monads over the category $\Set$ of sets.  

A \textbf{monad} consists of a function $R:\Set \to \Set$, a family of functions $\et_X:X\to R(X)$, and a family of functions $\_\monsubst{\_} : RX \times (X \to RY) \to RY$ called \emph{substitution},
satisfying the equations $t\monsubst{\et_X} = t$, $\et_X(x)\monsubst{f} = f(x)$,
$t\monsubst{f}\monsubst{g} = t\monsubst{x \mapsto  f(x)\monsubst{g}}$ for $t \in RX$, $f : X\to RY$ and $g : Y \to RZ$.
Sometimes, we identify an element $x\in X$ with the corresponding term $\et_X(x) \in R(X)$ when no confusion can arise.
A \textbf{module $M$ over a monad $R$} is a function $\Set \to \Set$ with a family of functions
$\_\monsubst{\_} : MX \times (X \to RY) \to MY$ called \emph{substitution}, satisfying the equations
$t\monsubst{\et_X} = t$ and $t\monsubst{f}\monsubst{g} = t\monsubst{x\mapsto f(x)\monsubst{g}}$ for $t \in MX$, $f : X\to RY$ and $g : Y \to RZ$. 
Note that we use the same notation for the substitution of monads and of modules;
in the last equation, the inner substitution on the right hand side is the one from the monad $R$.
A monad $R$ (resp.\ a module $M$ over a monad $R$) gives rise to a functorial action $RX \times (X \to Y) \to RY$ (resp.\ $MX \times (X\to Y) \to MY$) by $(t,g) \mapsto t\monsubst{\et_Y \circ g}$.
A \textbf{morphism $R \to S$ of monads} is a family of functions $\alpha_X : RX \to SX$ 
commuting with $\et$ and $\_\monsubst{\_}$ of $R$ and $S$, in the sense that
$\alpha(\eta^R(x)) = \eta^S(x)$ and $\alpha(t\monsubst{f}) = \alpha(t)\monsubst{\alpha \circ f}$ for $x\in X$, $t\in RX$ and $f : X \to RY$. 
With the obvious composition operation, this yields the \textbf{category $\Mon$ of monads on sets}.
A \textbf{morphism of modules $M \to N$ over a monad $R$} is a family of functions $\beta_X : MX \to NX$ commuting with $\_\monsubst{\_}$ of $M$ and $N$ in the sense that $\beta(t\monsubst{f}) = \beta(t)\monsubst{f}$ for $t\in M(X)$ and $f : X \to RY$.
Monad and module morphisms are (in particular) natural transformations.

If $M$ is a module over $S$ and $\alpha:R\to S$ is a monad morphism, then we denote $\alpha^* M$ the $R$-module with the underlying function $X \mapsto M(X)$ on sets, and with substitution induced by that of $M$ by $(t,f) \mapsto t\monsubst{\alpha_Y \circ f}$; this is called the \textbf{reindexing of $M$ along $\alpha$}.

We denote by $\Mod(R)$ the \textbf{category of modules over a fixed monad $R$}, and by $\LMod$ the \textbf{total category of modules}, fibered over the category $\Mon$ of monads
\cite[Proposition 7]{ahrens_et_al:LIPIcs:2018:9671}: objects of $\LMod$ are pairs of a monad and a module over it, and a morphism from $(R,M)$ to $(S,N)$ is a monad morphism $\alpha:R\to S$ together with a $R$-module morphism between $M$ and $\alpha^* N$.

A prominent example of monad, used heavily in the rest of this paper, is the (untyped, syntactic) lambda calculus: given a set $X$ (to be understood as a collection of free variables), $\LC(X)$ is the set of lambda terms with free variables in $X$ modulo $\alpha$-equivalence, and the usual parallel, capture-avoiding substitution endows $\LC$ with a monadic structure (see \cite{Alt-Reus}).

Modules and morphisms of modules are used to specify constructions and equations within a signature.  To this end, we recall some basic module constructions that are used compositionally.

First of all, a simple observation is that every monad $R$ is (trivially) a module over itself.  We denote $\Theta(R)$ (sometimes just $R$) the monad $R$ regarded as an $R$-module.  The assignment $R\mapsto \Theta(R)$ is the object map of the \textbf{tautological} section $\Theta: \Mon \to \LMod$.

Next, in the category $\Mod(R)$ of modules over a fixed monad $R$, we have arbitrary limits and
colimits, which are computed pointwise.
In particular, we have the initial and final modules and, for two $R$-modules $M$ and $N$, the binary product $M\times N$ and the binary coproduct $M + N$.

Given an $R$-module $M$, we have the \textbf{derived} $R$-module $M'$ defined as follows.
We denote by $X + \{*\}$ the disjoint union of a set $X$ with a one-element set.  Then, we set $M'(X) := M(X + \{*\})$.
The $R$-module structure on $M$ induces naturally an $R$-module structure on $M'$.
This derivation yields an endofunctor on the category of modules $\Mod(R)$ for any monad $R$.
It can be iterated; we call $M^{(k)}$ the $k$-th derivative of $M$.

\subsection{\texorpdfstring{$\CC$}{C}-Modules}
\label{ss:c-modules}

In this section, $\CC$ is a category equipped with a functor $U : \CC\to\Mon$.
We introduce the notion of $\CC$-module, generalizing $\Sigma$-modules of~\cite{FSCD2019}.

\begin{definition}
A \textbf{$\CC$-module} is 
  a functor $T$ from the category $\CC$ 
   to the category $\LMod$ commuting with
  the functors to the category $\Mon$ of monads,
 \[
  \begin{xy}
   \xymatrix{
                    \CC \ar[rd]_U \ar[rr]^{T}  &        &   \LMod \ar[dl] \
                    \\
                                     & \Mon
   }
  \end{xy}
 \]
 More concretely, a $\CC$-module $T$ maps an object $c$ of $\CC$ to a module
 $T(c)$ over $U(c)$ and a morphism $f:c\to c'$ to a $U(c)$-module morphism
 $T(f):T(c)\to U(f)^* T(c')$.
 In this paper, $U$ will arise as a forgetful functor to the category $\Mon$ of monads and will be tacitly omitted.
\end{definition}

\begin{definition}
The \textbf{tautological 
$\CC$-module $\Theta$} maps an object $c$ of $\CC$ to the module $\Theta(U(c))$.
\end{definition}
\begin{definition}
Let $A$ be a $\CC$-module.
The \textbf{derivative of $A$} is the $\CC$-module $A'$ mapping an object $c$ of
$\CC$ to the $U(c)$-module $A(c)'$.
\end{definition}
  
We typically apply (iteratively) the derivation construction to
the tautological $\CC$-module $\Theta$:
The $\CC$-module $\Theta'$ maps an object $c$ of $\CC$ to the $U(c)$-module $U(c)'$,
and the $\CC$-module $\Theta^{(n)}$ maps $c$ to the $U(c)$-module $U(c)^{(n)}$.
  
We introduce the \emph{category of $\CC$-modules}.
Importantly, as we will see later, our \emph{term-pairs} are built from 
pairs of parallel morphisms of $\CC$-modules for a suitable category $\CC$.
  
\begin{definition}
Let $S$ and $T$ be $\CC$-modules.
  A \textbf{morphism of $\CC$-modules} from $S$ to $T$ is a natural
  transformation from $S$ to $T$ which becomes the identity of $U$ when postcomposed with the
  forgetful functor from the category of modules $\LMod$ to the category of monads. 
\end{definition}  
\begin{proposition}
As defined above, $\CC$-modules and their morphisms, with the obvious composition and identity,
 form a category.
\end{proposition}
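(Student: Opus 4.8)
The plan is to realize the putative category of $\CC$-modules as a fiber of an ordinary functor category, so that identities, composition, and the associativity and unit laws are all inherited rather than checked by hand.

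First I would recall that functors from $\CC$ to $\LMod$ together with all natural transformations between them form the functor category $[\CC,\LMod]$, which is a category without any further work. Write $\pi : \LMod \to \Mon$ for the forgetful functor. Postcomposition (whiskering) with $\pi$ yields a functor
\[
  \pi_* : [\CC,\LMod] \longrightarrow [\CC,\Mon], \qquad T \mapsto \pi\cdot T,\quad \phi \mapsto \pi\cdot\phi,
\]
whose functoriality is immediate from the whiskering laws. By the very definitions given above, a $\CC$-module is exactly an object $T$ of $[\CC,\LMod]$ with $\pi_*T = U$, and a morphism of $\CC$-modules $S \to T$ is exactly a morphism $\phi$ of $[\CC,\LMod]$ with $\pi_*\phi = \id_U$. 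Thus the collection of $\CC$-modules and their morphisms is precisely the fiber $\pi_*^{-1}(U)$.

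Second, I would verify that this fiber is closed under identities and composition. For identities: given a $\CC$-module $T$, the identity natural transformation $\id_T$ satisfies $\pi_*(\id_T) = \id_{\pi_* T} = \id_U$, so $\id_T$ is a morphism of $\CC$-modules. For composition: if $\phi : S \to T$ and $\psi : T \to V$ are morphisms of $\CC$-modules, then functoriality of $\pi_*$ gives $\pi_*(\psi\circ\phi) = \pi_*\psi \circ \pi_*\phi = \id_U\circ\id_U = \id_U$, so $\psi\circ\phi$ is again a morphism of $\CC$-modules. Concretely, this composite is computed componentwise: each $\phi_c$ is a genuine $U(c)$-module morphism, i.e.\ a morphism of $\LMod$ lying over $\id_{U(c)}$, and the componentwise composite $\psi_c\circ\phi_c$ is again a $U(c)$-module morphism, so no data lying over a nonidentity monad morphism is ever introduced.

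Finally, the associativity and left/right unit laws hold because composition and identities in $\pi_*^{-1}(U)$ coincide with those of the ambient category $[\CC,\LMod]$, where these laws are already known. I do not expect any genuine obstacle: the only point requiring a moment's care is that composition of $\CC$-module morphisms is truly inherited from $[\CC,\LMod]$ — that is, that vertically composing two natural transformations whose components lie over identities again lands over identities — and this is exactly what functoriality of the whiskering functor $\pi_*$ records.
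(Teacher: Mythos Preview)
Your proof is correct. The paper itself provides no proof for this proposition, treating it as routine; your argument cleanly formalizes the ``obvious composition and identity'' by exhibiting the category of $\CC$-modules as the fiber of $\pi_* : [\CC,\LMod] \to [\CC,\Mon]$ over $U$, which is exactly the right way to see why nothing needs to be checked by hand.
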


\begin{definition}
Let $A$ and $B$ be $\CC$-modules, and let $f:A\to B$ be a morphism of
$\CC$-modules. The \textbf{derivative of $f$} is the $\CC$-module morphism from
$A'$ to $B'$ mapping an object $c$ of $\CC$ to the $U(c)$-module morphism $f_c'$.
\end{definition}

\subsection{1-Signatures}
\label{sss:1-sigs}

In this section we review the 1-signatures introduced in \cite{ahrens_et_al:LIPIcs:2018:9671}.
\begin{definition}[1-Signatures and their models, {\cite[Defs.\ 14 and 27]{ahrens_et_al:LIPIcs:2018:9671}}]
 A \textbf{1-signature} is a $\Mon$-module, hence 
 it is just a section to the forgetful functor $\pi : \LMod \to \Mon$.

 Let $\Sigma : \Mon \to \LMod$ be a 1-signature. A \textbf{model of $\Sigma$}
 is a pair of a monad $R$ and a module morphism $\Sigma(R) \to R$, called an \textbf{action of $\Sigma$ in $R$}.
   A \textbf{morphism  from $(R, r)$ to $(S, s)$} is a
  morphism of monads $m : R \to S$ making the following diagram of
  $R$-modules commutes:
  \[
    \xymatrix{
      **[l] \Sigma(R) \ar[r]^{r}\ar[d]_{\Sigma(m)} & **[r] R \ar[d]^m \\
      **[l] m^* (\Sigma(S)) \ar[r]_{m^*s} & **[r] m^* S}
  \]
  Recall that the module $m^*S$ is the reindexing of $S$ along $m$ in the fibration $\LMod \to \Mon$.
  We call $\Mon^{\Sigma}$ the induced category of models of $\Sigma$.
\end{definition}

The desired property for a signature is \enquote{effectivity}:

\begin{definition}
 Given a 1-signature $\Sigma$,
 the initial object in $\Mon^\Sigma$, if it exists, is denoted by
 $\hat{\Sigma}$.
  In this case, 
   the 1-signature $\Sigma$ is called
   \textbf{effective}.
\end{definition}

\begin{example}
 \label{ex:1-sig-lc}
 Consider the 1-signature $\Sigma_\LC$ given on objects by $R \mapsto R \times R + R'$. 
A model of this signature is a triple $(R, \app, \abs)$ with $\app : R \times R \to R$ and $\abs : R' \to R$. The initial model is the triple $(\LC,\app,\abs)$ of the monad $\LC$ of lambda terms, with application and abstraction as module morphisms.
Note that $\alpha$-equivalence corresponds to syntactic equality of lambda terms.
\end{example}
\begin{example}[Non-effective 1-signature]
 \label{ex:non-effective-one-sig}
 The 1-signature mapping a monad $R$  to the $R$-module $\mathcal{P}\cdot R$,
 where $\mathcal{P}$ is the powerset functor, is not effective, for cardinality reasons.
\end{example}
\begin{rem}
\label{rk:sigma-arity-1-sig}
Let $\CC$ be a category equipped with a functor $U : \CC\to\Mon$.
Any 1-signature $\Psi$ (i.e. any functor from $\Mon$ to $\LMod$ commuting with the forgetful functors to $\Mon$) induces a $\CC$-module still denoted $\Psi$, by precomposition with $U$.
\end{rem}

\subsection{Equations and 2-Signatures}
\label{sss:2-sigs}

In this section we review the 2-signatures introduced in \cite{FSCD2019}.

Let $\Sigma$ be a 1-signature.
In the following, we use the term ``$\Sigma$-module'' for $\Mon^\Sigma$-modules.

\begin{definition}[$\Sigma$-equation, {\cite[Def.\ 8]{FSCD2019}}]
  We define a \textbf{$\Sigma$-equation} 
  to be a pair of
  parallel morphisms of $\Sigma$-modules. 
\end{definition}

\begin{definition}[2-signature, {\cite[Def.\ 12]{FSCD2019}}]
  A \textbf{2-signature} is a pair $(\Sigma,E)$ of a 1-signature $\Sigma$ and a
  family $E$ of $\Sigma$-equations. 
\end{definition}

\begin{definition}[Model of a 2-signature, {\cite[Def.\ 17]{FSCD2019}}]
  We say that a model $M$ of a 1-signature $\Sigma$ 
  \textbf{satisfies the $\Sigma$-equation $(e_1, e_2)$} if $e_{1}(M) = e_{2}(M)$.
  If $E$ is a family of $\Sigma$-equations, we say that a
  model $M$ of $\Sigma$ \textbf{satisfies $E$} if $M$ satisfies each
  $\Sigma$-equation in $E$.

  Given a monad $R$ and a 2-signature $\Upsilon=(\Sigma,E)$, an \textbf{action
  of $\Upsilon$ in $R$} is an action of $\Sigma$ in $R$ such that the induced
model satisfies all the equations in $E$.

  For a 2-signature
  $(\Sigma,E)$, we define the \textbf{category $\Mon^{(\Sigma,E)}$ of
  models of $(\Sigma, E)$} to be the full subcategory of the
  category of models of $\Sigma$ whose objects are
  models of $\Sigma$ satisfying $E$, or equivalently, monads 
  equipped with an action of $(\Sigma , E)$.
\end{definition}

As for 1-signatures, we define a notion of \emph{effectivity}:
\begin{definition}
  A 2-signature $\Upsilon$ is said to be
  \textbf{effective}
  if its category of models $\Mon^{\Upsilon}$ has an initial object,
  denoted $\hat{\Upsilon}$.
\end{definition}

As an example, consider the signature $\Sigma_\Lambda$ obtained from $\Sigma_\LC$ by imposing the $\beta$- and $\eta$-equalities.
Then, the monad $\Lambda$ of lambda calculus modulo $\beta\eta$-equality is the initial object of the category of models $\Mon^{\Sigma_\Lambda}$ \cite[Theorem 3]{Hirschowitz-Maggesi-2010}.

As 1-signatures are particular 2-signatures with an empty set of equations, Example~\ref{ex:non-effective-one-sig} yields a non-effective 2-signature.
Another example is a 2-signature containing the equation
\( \mathsf{inl},\mathsf{inr} : \Theta \rightrightarrows \Theta + \Theta \)
given by the left and right inclusions.
An important class of effective signatures for monads is the one of \emph{algebraic}
(2-)signatures~\cite[Theorem 32]{FSCD2019}, which covers both $\Sigma_\LC$ and $\Sigma_\Lambda$ as particular cases.

\section{Reduction monads}
\label{sec:reduction-monads}

Here below, we define \emph{the category of reduction monads} in Section~\ref{ss:def-reduction-monad}.
We also consider some examples of reduction monads, in Section~\ref{ss:examples-reduction-monads}.

\subsection{The Category of Reduction Monads}
\label{ss:def-reduction-monad}

\begin{definition}
A \textbf{reduction monad} $R$ is given by:
\begin{enumerate} 
    \item a monad on sets, that we still denote by $R$, or by $\monadfromred{R}$ when we want to be explicit;
    \item an $R$-module $\Redof R$ (the module of \emph{reductions});
    \item a morphism of $R$-modules $\redof R : \Redof R \to R\times R$ {(\emph{source} and \emph{target} of rules)}.
\end{enumerate}

We set $\source{R} := \pi_1 \circ \redof{R} : \Redof{R} \to R$, and
$\target{R} := \pi_2 \circ \redof{R} : \Redof{R} \to R$

\end{definition}

For a reduction monad $R$, a set $X$, and elements $s,t\in R(X)$,
we think of the fiber $\redof{R}(X)^{-1}(s,t)$ as the set of ``reductions from $s$ to $t$''.
We sometimes write $m:\redfib{s}{t}:R(X)$, or even $m : \redfib{s}{t}$ when there is
no ambiguity, instead of $m \in \redof{R}(X)^{-1}(s,t)$.

\begin{rem}
 \label{r:red-mon-proof-relevant}
 Note that for a given reduction monad $R$, set $X$, and $s,t : R(X)$, there can be multiple 
 reductions from $s$ to $t$, that is, the fiber $\redfib{s}{t}$ is not necessarily a subsingleton.
\end{rem}

\begin{rem}
\label{r:congr-weak-subst}
Let $R$ be a reduction monad, $X$ and $Y$ two sets, $f: X \to R(Y)$ a map, and $u$ and $v$ two elements of $R(X)$ related by
$m : \redfib{u}{v}$.
The module structure on $\Redof{R}$ yields a reduction denoted $m\monsubst{f}$ between $u\monsubst{f}$ and $v\monsubst{f}$.

However, if we are given two maps $f$ and $g$, and for all $x\in X$, 
a reduction $m_x:\redfib{f(x)}{g(x)}$, then it does not follow that there is a reduction between $u\monsubst{f}$ and $u\monsubst{g}$. 
This leaves the door open for non-congruent reductions.
\end{rem}

Our main examples of reduction monads are given by variants of the lambda calculus.
We have collected these examples in Section~\ref{ss:examples-reduction-monads}.

\begin{definition}
 A \textbf{morphism of reduction monads} from $R$ to $S$ is 
  given by a pair $(f, \alpha)$ of
  \begin{enumerate}
      \item a monad morphism $f:R\to S$, and
      \item a natural transformation $\alpha:\Redof{R} \to \Redof{S}$ 
     \setcounter{nameOfYourChoice}{\value{enumi}}
\end{enumerate}
satisfying the following two conditions:
 \begin{enumerate}
\setcounter{enumi}{\value{nameOfYourChoice}}
    \item $\alpha$ is an $R$-module morphism between $\Redof{R}$ and the
      reindexed module 
      $f^* \Redof{S}$ of $\Redof{S}$ along $f$, and
           \label{enum:reduction-mor-module-mor}
    \item the square
            \[
            \xymatrix{
                   \Redof{R} \ar[r]^\alpha \ar[d]_{\redof{R}}  &  \Redof{S} \ar[d]^{\redof{S}} \\
                  R \times R \ar[r]_{f\times f} & S \times S
                    }
               \]
           commutes in the category of functors and natural transformations.
           \label{enum:reduction-mor-commutes}
 \end{enumerate}
\end{definition}
In Section~\ref{sec:recursion} we specify morphisms of reduction monads via 
a recursion principle.

Intuitively, a morphism $(f,\alpha)$ as above maps terms of $R$ to terms of $S$ via $f$,
and reductions of $R$ to reductions of $S$ via $\alpha$.
Condition~\ref{enum:reduction-mor-module-mor} states compatibility of the map of reductions with substitution:
$\alpha(m\monsubst{g}) = \alpha(m)\monsubst{f_Y\circ g}$ for any
reduction $m : \redfib{u}{v}$ and any map $g:X\to R(Y)$.
Condition~\ref{enum:reduction-mor-commutes} states preservation of source and target by the 
map of reductions: a reduction $m : \redfib{u}{v}$ between elements of $R(X)$ is mapped by $\alpha$
to a reduction $\alpha(m) : \redfib{f_X(u)}{f_X(v)}$.

\begin{proposition}[Category of reduction monads]
  Reduction monads and their morphisms, with the obvious composition and identity,
  form a category $\GMon$, equipped with a forgetful functor to the category of monads.
\end{proposition}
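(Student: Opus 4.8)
The plan is to verify directly the defining axioms of a category and then exhibit the forgetful functor. Since the monad component of every datum lives in the category $\Mon$, which is already known to be a category, and the reduction component consists of natural transformations, most of the work reduces to transporting the module-morphism condition and the commuting square through composition and checking that they cohere.

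First I would define the identity on a reduction monad $R$ to be the pair $(\id_{\monadfromred R}, \id_{\Redof R})$. Condition~\ref{enum:reduction-mor-module-mor} holds because $\id^*\Redof R = \Redof R$ and the identity is trivially a module morphism; condition~\ref{enum:reduction-mor-commutes} holds because $\id\times\id$ together with the two copies of $\redof R$ makes the square degenerate into $\redof R = \redof R$.

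Next, given morphisms $(f,\alpha) : R \to S$ and $(g,\beta) : S \to T$, I would set the composite to be $(g \circ f, \beta \circ \alpha)$, where $g \circ f$ is the composite in $\Mon$ and $\beta \circ \alpha$ is the vertical composite of natural transformations. The first thing to check is condition~\ref{enum:reduction-mor-module-mor}, namely that $\beta\circ\alpha$ is an $R$-module morphism $\Redof R \to (g\circ f)^*\Redof T$. Here $\alpha$ is an $R$-module morphism $\Redof R \to f^*\Redof S$, and reindexing $\beta$ along $f$ yields an $R$-module morphism $f^*\beta : f^*\Redof S \to f^*g^*\Redof T$; composing it with $\alpha$ produces the required $R$-module morphism, whose underlying natural transformation is exactly $\beta\circ\alpha$. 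Condition~\ref{enum:reduction-mor-commutes} then follows by pasting the two commuting squares of $(f,\alpha)$ and $(g,\beta)$ and using that $(g\times g)\circ(f\times f) = (g\circ f)\times(g\circ f)$.

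It remains to check the category laws. Associativity and the unit laws for the monad components hold because they hold in $\Mon$; for the reduction components they hold because vertical composition of natural transformations is associative and unital. The forgetful functor $\GMon \to \Mon$ is then the evident first projection, sending $(R,\Redof R,\redof R)\mapsto \monadfromred R$ and $(f,\alpha)\mapsto f$, and it preserves identities and composites by construction. The one point requiring care throughout is the bookkeeping of reindexing along the fibration $\LMod \to \Mon$, and I expect the main (if modest) obstacle to be verifying that the module-structure witnesses compose correctly, i.e.\ the functoriality identity $f^*g^*\Redof T = (g\circ f)^*\Redof T$ on modules; everything else is a routine diagram chase.
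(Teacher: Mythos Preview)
Your proposal is correct: the identities, composites, category laws, and forgetful functor are exactly as you describe, and the only subtlety---the reindexing identity $f^*g^*\Redof T = (g\circ f)^*\Redof T$---is indeed the routine bookkeeping needed to see that condition~\ref{enum:reduction-mor-module-mor} is closed under composition.

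By way of comparison, the paper gives no proof at all for this proposition; it is stated as self-evident and left to the reader. Your argument is precisely the direct verification one would supply if asked to fill in the details, so there is no methodological difference to discuss---you have simply written out what the authors regarded as obvious.
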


It turns out that reduction monads are the same as monads relative to the free functor from sets to multigraphs
(for the definition of relative monads, see
\cite[Definition~2.1]{DBLP:journals/corr/AltenkirchCU14}):
\begin{theorem}
  \label{t:structure-thm}
  The category of reduction monads is isomorphic to the category of monads relative to the functor mapping a set to its discrete multigraph.
\end{theorem}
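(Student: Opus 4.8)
The plan is to establish a \emph{strict} isomorphism of categories by observing that the two definitions package identical data differently, and then building mutually inverse functors. First I would fix a concrete description of the category $\Graph$ of multigraphs: an object is a quadruple $(V,E,s,t)$ with $s,t : E \to V$, and a morphism is a pair of a vertex map and an edge map commuting with $s$ and $t$. The relevant functor $J : \Set \to \Graph$ sends a set $X$ to its discrete multigraph $(X,\emptyset)$, and the elementary but crucial observation is that, since $JX$ has no edges, a graph morphism $JX \to (V,E,s,t)$ carries no edge data and is \emph{exactly} a function $X \to V$.

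Next I would unfold a monad relative to $J$ in the sense of \cite[Definition~2.1]{DBLP:journals/corr/AltenkirchCU14}: it consists of an object assignment $T$ with $TX = (R(X), E(X), s_X, t_X)$, a unit $\et_X : JX \to TX$, and for each $f : JX \to TY$ a Kleisli extension $f^{\dagger} : TX \to TY$, subject to the left-unit law $f^{\dagger}\circ\et = f$, the right-unit law $\et^{\dagger} = \id$, and associativity $(g^{\dagger}\circ f)^{\dagger} = g^{\dagger}\circ f^{\dagger}$. By the observation above $\et_X$ is a function $X \to R(X)$ and $f$ ranges over functions $X \to R(Y)$. Reading off the vertex components yields exactly a monad on $\Set$ in Kleisli form, with substitution the vertex part of $(-)^{\dagger}$; reading off the edge components yields an $R$-module $E$ with substitution the edge part of $(-)^{\dagger}$, and the fact that each $f^{\dagger}$ is a graph morphism says precisely that $(s_X,t_X) : E(X) \to R(X)\times R(X)$ is a morphism of $R$-modules. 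This produces a reduction monad with $\Redof{R} := E$ and $\redof{R} := (s,t)$. Conversely, from a reduction monad I set $TX := (R(X), \Redof{R}(X), \source{R}, \target{R})$, take $\et$ as unit, and define $f^{\dagger}$ on vertices by $t \mapsto t\monsubst{f}$ and on edges by $m \mapsto m\monsubst{f}$; the module-morphism property of $\redof{R}$ is exactly what makes $f^{\dagger}$ a graph morphism.

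I would then check that the three relative-monad laws decompose, componentwise, into the monad and module laws. On vertices, the left-unit, right-unit, and associativity laws recover the three Kleisli equations for a monad; on edges, the right-unit law gives $m\monsubst{\et} = m$ and associativity gives the nested-substitution law for the module, while the edge component of the left-unit law is vacuous because $JX$ has no edges (matching the fact that modules carry no analogue of $\et(x)\monsubst{f}=f(x)$). For morphisms, a morphism of relative monads is a family $\sigma_X : TX \to T'X$ of graph morphisms commuting with unit and Kleisli extension; its vertex parts assemble into a monad morphism $f$ and its edge parts into a natural transformation $\alpha$, the commutation-with-extension condition translating precisely into condition~(\ref{enum:reduction-mor-module-mor}) and the graph-morphism condition on each $\sigma_X$ into condition~(\ref{enum:reduction-mor-commutes}). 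Since every translation is a mere re-bracketing of the same data, the two assignments are strictly inverse on objects and morphisms and preserve composition and identities, giving the claimed isomorphism of categories.

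The step I expect to be most delicate is the componentwise bookkeeping just sketched: a single Kleisli extension $f^{\dagger}$ simultaneously encodes the term substitution, the reduction substitution, and their compatibility with source and target, so I must make sure that when each relative-monad law is split across vertex and edge components no equation is lost or silently double-counted. In particular I would verify carefully that the single associativity law $(g^{\dagger}\circ f)^{\dagger} = g^{\dagger}\circ f^{\dagger}$ yields, for the \emph{same} vertex maps $f,g$, both the monad nested-substitution law and---independently---the module nested-substitution law, and that these are the only constraints it imposes.
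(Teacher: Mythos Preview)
Your proposal is correct and follows exactly the approach the paper indicates: the paper's proof is the single sentence ``This is obvious after unfolding the definitions,'' and what you have written is precisely that unfolding carried out in detail. Nothing in your bookkeeping is missing or mistaken.
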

\begin{proof}
This is obvious after unfolding the definitions.
\end{proof}

\subsection{Examples of Reduction Monads}
\label{ss:examples-reduction-monads}

We are interested in reduction monads with underlying monad $\LC$,
the monad of syntactic lambda terms specified in Example~\ref{ex:1-sig-lc}.
We start with a detailed simple example. The other ones
(Examples~\ref{ex:lambda-whb},~\ref{ex:lc-congruent-beta},
and~\ref{ex:lc-parbeta}) are more informal. Reduction
signatures specifying them will be given later in Example~\ref{ex:variants-lc}.

\begin{example}[Lambda calculus with \textbf{top-$\beta$-reduction}]
\label{ex:lambda-calculus-top-beta-reduction}
  Consider the reduction monad $\LC_{\text{top-}\beta}$ given as follows:
  \begin{enumerate}
      \item the underlying monad is $\LC$;
      \item $\Redof{R}$ is the module $\LC' \times \LC$;
      \item $\redof{R}(X)$ is the morphism $(u,v) \mapsto \bigl(\app(\abs(u),v), u\monsubst{*:=v}\bigr)$.
  \end{enumerate}
\end{example}
\begin{example}[Lambda calculus with \textbf{weak head $\beta$-reduction}]
  \label{ex:lambda-whb}
 We introduce the reduction monad $\LCwhb$.
 A reduction $m\in\Redof{\LCwhb}(X)$ in the reduction monad $\LCwhb$ is a
 leftmost $\beta$-reduction in a chain of applications. Using the standard
 syntax of lambda calculus, $(\lambda
 x.t)\ u_1\ u_2\ \dots\ u_n$ reduces to $t\monsubst{x := u_1}\ u_2\ \dots\ u_n$.
\end{example}
\begin{example}[Lambda calculus with \textbf{congruent $\beta$-reduction}]
 % \label{ex:reduction-monad-lc-beta-eta-cong-one-step}
  \label{ex:lc-congruent-beta}
 We introduce the reduction monad $\LCb$.
 A reduction $m\in\Redof{\LCb}(X)$ in the reduction monad $\LCb$ is ``one step'' of $\beta$-reduction, anywhere in the source term.
\end{example}
\begin{example}[Lambda calculus with \textbf{parallel $\beta$-reduction}]
  \label{ex:lc-parbeta}
 A reduction $m\in\Redof{\LCbpar}(X)$ in the reduction monad $\LCbpar$
 is the simultaneous $\beta$-reduction of a ``parallel'' set of redexes
in the source term.
\enquote{Parallel} here means that the subtrees of the contracted redexes are disjoint.
\end{example}

Next, we consider the \emph{closure under identity and composition of reductions}.
\begin{definition}
  \label{d:transitive-closure-redmonad}
Given a reduction monad $R$, we define the reduction monad $R^*$ as follows:
\begin{enumerate}
    \item the underlying monad on sets is still $R$;
    \item the $R$-module $\Redof{R^*}$ is defined as follows.
	  For $n \in \N$ we define the module $\Redof{R}^n$ of ``$n$ composable reductions'',
	  namely as the limit of the diagram
	  \[
	   \begin{xy}
	    \xymatrix@C=3em@R=1em{
	          & \Redof{R} \ar[ld]_{\source{R}} \ar[rd]^{\target{R}} &     & \Redof{R} \ar[ld]_{\source{R}} \ar[rd]^{\target{R}} & & \ldots \ar[ld]_{\source{R}}
	          \\
	       R  &           & R   &           & R
	    }
	   \end{xy}
	  \]
          with $n$ copies of $\Redof{R}$ (and hence $n+1$ copies of $R$).
          We obtain $n+1$ projections $\pi_i : \Redof{R}^n \to R$,
          and we call $p_n := (\pi_0,\pi_n) :  \Redof{R}^n \to R \times R$.
	  We set $\Redof{R^*} := \coprod_{n} \Redof{R}^n$.
    \item the module morphism is $\redof{R^*} := [p_n]_{n \in \N} : \coprod_{n} \Redof{R}^n \to R \times R$
           the universal morphism induced by the family $(p_n)_{n\in N}$.
\end{enumerate}
\end{definition}

\begin{example}[The reduction monad of the lambda calculus]
\label{ex:real-lambda-calculus}
  The \textbf{reduction monad of the lambda calculus} is defined to be the reduction monad
  $\LCclot$.
\end{example}

In Section~\ref{sec:signatures} we introduce signatures that allow for the specification of reduction monads. 

\section{Reduction rules}
\label{sec:reduction-rules}
In this section, we define an abstract notion of
reduction rule over a signature for monads $\Sigma$ (Section~\ref{ss:reduction-rules}).
We first focus, in Section~\ref{ss:ex-congr-rule-app}, on the example
of the congruence rule for the application construction in the signature $\Sigma_\LC$ (cf.\ Example~\ref{ex:1-sig-lc}) for the monad of the lambda calculus, in order to motivate the definitions.
The purpose of a reduction rule over $\Sigma$ is to be \enquote{modeled} in
a reduction monad equipped with an action of $\Sigma$ (this is what we will call a \emph{reduction $\Sigma$-model} in Section~\ref{ss:reduction-sigma-monads}).
We make this notion of model precise in Section~\ref{ss:action-reduction-rule}, as an action of the reduction rule in the reduction $\Sigma$-model.
Finally, we give a protocol for specifying reduction rules in Section~\ref{ss:protocole-reduction-rules} that we
apply in Section~\ref{ss:ex-reduction-rules} to some examples.

\begin{notation}
In the following, we use the shorter terminology of $\Sigma$-modules for
$\Mon^\Sigma$-modules, when $\Sigma$ is any signature for monads (either a
1-signature or a 2-signature).
\end{notation}

\subsection{Example: Congruence Rule for Application}
\label{ss:ex-congr-rule-app}

We give some intuitions of the definition of reduction rule with the example of the congruence rule for application, given, e.g., in Selinger's lecture notes \cite{selinger-lc}, as follows:
\begin{equation*}
\label{pt:congr-app-selinger}
\frac{
\redrule{T}{T'} 
\qquad
\redrule{U}{U'}
}
{\redrule{\app(T,U)}{\app(T',U')} }
\end{equation*}

This rule is parameterized by four \emph{metavariables}: $T$, $T'$, $U$, and $U'$. The conclusion and the hypotheses are given by pairs of terms built out of these metavariables. 

We formalize this rule as follows: for any monad $R$ equipped with an application operation $\app : R\times R\to R$, we associate a module of metavariables $\mathcal{V}(R)=R\times R \times R \times R$, one factor for each of the metavariables $T$, $T'$, $U$, and $U'$. 
Each hypothesis or conclusion is described by a parallel pair of morphisms from $\mathcal{V}(R)$ to $R$: for example, the conclusion $c_R: \mathcal{V}(R) \to R$ maps a set $X$ and a quadruple $(T,T',U,U')$ to the pair $(\app(T,U),\app(T',U'))$. These assignments are actually functorial in $R$, and abstracting over $R$ yields our notion of \emph{term-pair} over the $\Sigma$-module $\mathcal{V}$, as morphisms from $\mathcal{V}$ to $\Theta\times\Theta$, where $\Sigma$ is any signature including a single first-order binary operation $\app$ (for example, $\Sigma_\LC$).
The three term-pairs, one for each hypothesis and one for the
conclusion, define the desired reduction rule.

Now, we explain in which sense such a rule can be modeled in a reduction monad $R$: intuitively, it means that for any set $X$, any quadruple $(T,T',U,U')\in \mathcal{V}(R)$, any reductions $s:\redfib{T}{T'}$ and $t:\redfib{U}{U'}$, there is a reduction $\appcongr(s,t):\redfib{\app(T,U)}{\app(T',U')}$.
Of course, this only makes sense if the monad $\monadfromred{R}$ underlying the reduction monad is equipped with an application operation, that is, with an operation of $\Sigma_\app$. We will call such a structure a \emph{reduction $\Sigma_\app$-model} (see Section~\ref{ss:reduction-sigma-monads}).

\subsection{Definition of Reduction Rules}
\label{ss:reduction-rules}

In this subsection, $\Sigma$ is a signature for monads. We present our notion of \emph{reduction rule over $\Sigma$},
from which we build \emph{reduction signatures} in Section~\ref{sec:initiality}.

We begin with the definition of \emph{term-pair}, 
alluded to already in Section~\ref{ss:ex-congr-rule-app}:

\begin{definition}
Given a $\Sigma$-module $\mathcal{V}$,
  a \textbf{term-pair from $\mathcal{V}$}
  is a pair $(n,p)$ of a natural number $n$ and 
  a morphism of $\Sigma$-modules $p:\mathcal{V} \to \Theta^{(n)}\times \Theta^{(n)}$. 
\end{definition}

Many term-pairs are of a particularly simple form, namely a pair of projections,
which intuitively picks two among the available metavariables.
Because of their ubiquity, we introduce the following notation:
\begin{definition}
  \label{d:pair-projection}
  Let $n_1$, \dots, $n_p$ be a list of natural numbers.
  For $i,j\in\{1 ,\dots p\}$,
  we define the projections $\pi_i$ and $\pi_{i,j}$ as the following $\Sigma$-module
  morphisms, for any signature $\Sigma$:
  \[
  \begin{array}{rl}
     \pi_{i,j} :& \Theta^{(n_1)}\times\dots\Theta^{(n_p)}\to\Theta^{(n_i)}\times\Theta^{(n_j)} \\
     \pi_{i} :& \Theta^{(n_1)}\times\dots\Theta^{(n_p)}\to\Theta^{(n_i)} 
  \end{array}
  \qquad
  \begin{array}{rc}
      \pi_{i,j,R,X}  (T_1,\dots,T_p)  =& (T_i, T_j) \\
        \pi_{i,R,X}  (T_1,\dots,T_p)  =& T_i
  \end{array}
  \]
  \end{definition}

Some term-pairs, such as the conclusions of the congruence rules for application and abstraction, are more complicated: intuitively, they are obtained by applying term constructions to metavariables.

\begin{example}[term-pair of the conclusion of the congruence for application]
\label{ex:red-judg-concl-app}
  The term-pair corresponding to the conclusion $\redrule{\app(T,U)}{\app(T',U')}$ of congruence for application (Section~\ref{ss:ex-congr-rule-app})
  is given by $(0,\pcon)$, on the $\Sigma_\LC$-module $\Theta^4$. 
  Here, we have
  \begin{align*}
      &\pcon : \mathcal{V} \to \Theta \times \Theta
      \\
      &\pcon_{R,X}(T,T',U,U') := \Bigl(\app_{R,X}(T,U), \app_{R,X}(T',U') \Bigr)
  \end{align*}
  More schematically:
   \[
      \xymatrix@C=25pt@R=2pt{
  \pcon : \Theta^4 
  \ar[rrrr]^-{\app\circ\pi_{1,3},\app\circ\pi_{2,4}}
  &&&
  &
  \Theta\times\Theta
    }
     \]
\end{example}

We now give our definition of \emph{reduction rule},
making precise the intuition developed in Section~\ref{ss:ex-congr-rule-app}.

\begin{definition}
\label{def:elementary-arity}
 A \textbf{reduction rule $\mathcal{A}=(\mathcal{V},(n_i,\phyp_i)_{i\in I},(n,\pcon))$ over $\Sigma$} is given by:
 \begin{itemize}
     \item Metavariables:
     a $\Sigma$-module $\mathcal{V}$ of metavariables, that we sometimes denote by $\MVar_\mathcal{A}$;
     \item Hypotheses: a finite family of term-pairs $(n_i,\phyp_i)_{i\in I}$ from $\mathcal{V}$;
     \item Conclusion: a term-pair $(n,\pcon)$ from $\mathcal{V}$.
 \end{itemize}
  \end{definition}
  \begin{example}[Reduction rule for congruence of application]
  \label{ex:congr-app-rule}
  The reduction rule $\mathcal{A}_\appcongr$ for congruence of application (Section~\ref{ss:ex-congr-rule-app}) is defined as follows:
   \begin{itemize}
     \item \MMM: $\mathcal{V}=\Theta^4$ for the four metavariables $T$, $T'$, $U$, and $U'$;
     \item \HHH: 
        Given by two term-pairs $(0,\phyp_1)$ and $(0,\phyp_2)$:
         \begin{equation*}
  \xymatrix@C=25pt@R=2pt{
  \phyp_1 : \Theta^4 
  \ar[rr]^-{\pi_{1,2}}
  &
  &
  \Theta\times\Theta
    }
    \qquad
    \xymatrix@C=25pt@R=2pt{
  \phyp_2 : \Theta^4 
  \ar[rr]^-{\pi_{3,4}}
  &
  &
  \Theta\times\Theta
    }
\end{equation*}
     
     \item \CCC: Given by the term-pair $(0,\pcon)$ of Example~\ref{ex:red-judg-concl-app}.
 \end{itemize}
  \end{example}

More examples of reduction rules are given in Section~\ref{ss:ex-reduction-rules}.

\subsection{Reduction \texorpdfstring{$\Sigma$}{Sigma}-Models}
\label{ss:reduction-sigma-monads}

As already said, the purpose of a reduction rule is to be modeled in a reduction monad $R$.
However, as the hypotheses or the conclusion of the reduction rule may refer to some operations specified by a signature $\Sigma$ for monads, this reduction monad $R$ must be equipped with an action of $\Sigma$, hence the following definition:

\begin{definition}
Let $\Sigma$ be a signature for monads. 
The \textbf{category $\GMon^\Sigma$ of reduction  $\Sigma$-models}  is 
defined as the following pullback:
\[
\xymatrix{
\GMon^\Sigma
\ar[r]\ar[d]
\pullback{}
&
\GMon \ar[d]
\\
\Mon^\Sigma
\ar[r]
&
\Mon
}
\]

More concretely, 
\begin{itemize}
    \item a \textbf{reduction $\Sigma$-model} is 
   a reduction monad $R$ equipped with an \textbf{action} $\rho$ of $\Sigma$ in $R$, thus inducing a $\Sigma$-monad that we denote also by $R$, or by $\monadfromred{R}$ when we want to be explicit;
   \item a \textbf{morphism of reduction  $\Sigma$-models} $R\to S$
  is a morphism $f : R\to S$ of reduction monads compatible with the action of $\Sigma$, i.e, whose underlying monad morphism is
  a morphism of models of $\Sigma$.
\end{itemize}
\end{definition}

\subsection{Action of a Reduction Rule}
\label{ss:action-reduction-rule}
Let $\Sigma$ be a signature for monads.
In this section, we introduce the notion of \emph{action
of a reduction rule over $\Sigma$ in a reduction $\Sigma$-model}.
Intuitively, such an action is a ``map from the hypotheses to the conclusion''
of the reduction rule.
To make this precise, we need to first take the product of the hypotheses;
this product is, more correctly, a \emph{fibered} product.

\begin{definition}
Let $(n,p)$ be a term-pair from a $\Sigma$-module $\mathcal{V}$, and $R$ be a reduction $\Sigma$-model.
We denote by $p^*(\Redof{R}^{(n)})$
     the pullback of $\redof{R}^{(n)}:\Redof{R}^{(n)}\to R^{(n)}\times R^{(n)}$ along $p_R:\mathcal{V}(R)\to R^{(n)}\times R^{(n)}$:
     \[
     \xymatrix{
     p^*(\Redof{R}^{(n)})\pullback{}
     \ar[r]
     \ar[d]
     &
     \Redof{R}^{(n)}
     \ar[d]^{\redof{R}^{(n)}}
     \\
     \mathcal{V}(R)\ar[r]_{p_R}
     &
     R^{(n)}
     \times
     R^{(n)}
     }
     \]
     We denote by $p^*(\redof{R}^{(n)}):p^*(\Redof{R}^{(n)}) \to \mathcal{V}(R)$ the projection morphism on the left.
     \end{definition}
     \begin{definition}
     \label{d:mod-hyp-concl}
     Let $\mathcal{A} = (\mathcal{V},(n_i,\phyp_i)_{i\in I}, (n,\pcon))$ be a reduction rule,
      and $R$ be a reduction $\Sigma$-model.
The \textbf{$R$-module $\Hyp_\mathcal{A}(R)$ of hypotheses of $\mathcal{A}$} 
is $\displaystyle{\prod_{i \in I}}{\phantom{\Big\|\!\!}}_{\mathcal{V}(R)} \phyp_i^*{\Redof{R}^{(n_i)}}$, i.e., the fiber product of 
  all the $R$-modules $\phyp_i^*{\Redof{R}^{(n_i)}}$ along their projection to $\mathcal{V}(R)$. It thus comes with a projection $\hyp_\mathcal{A}(R):\Hyp_\mathcal{A}(R)\to\mathcal{V}(R)$
  
  The \textbf{$R$-module $\Concl_\mathcal{A}(R)$ of conclusion of $\mathcal{A}$} 
is $\pcon^*{\Red(R)^{(n)}}$, and comes with a projection $\concl_\mathcal{A}(R):\Concl_\mathcal{A}(R)\to \mathcal{V}(R)$.
\end{definition}
\begin{example}
\label{ex:app-congr-pullbacks}
Let $R$ be a reduction $\Sigma_\LC$-model.
The $R$-module of conclusion of the congruence reduction rule for application (Example~\ref{ex:congr-app-rule}, see also Section~\ref{ss:ex-congr-rule-app}) maps a set $X$ to the
  set of quintuples $(T,T',U,U',m)$ where $(T,T',U,U') \in R^4(X)$ and $m$ is a reduction $m:\redfib{\app(T,U)}{\app(T',U')}$.
  The $R$-module of hypotheses of this reduction rule
  maps a set $X$ to the set of sextuples $(T,T',U,U',m,n)$
  where $(T,T',U,U')\in R^4(X)$, $m : \redfib{T}{T'}$, and $n  : \redfib{U}{U'}$.
\end{example}

\begin{definition} 
\label{d:action-red-rule}
   Let $\mathcal{A}$ be a
   reduction rule over $\Sigma$. 
  An \textbf{action of $\mathcal{A}$ in a reduction  $\Sigma$-model $R$}
  is a morphism between $\hyp_\mathcal{A}(R)$
  and $\concl_\mathcal{A}(R)$ in the slice category $\Mod(R)/\MVar_\mathcal{A}(R)$, that is,
   a morphism of $R$-modules 
  \[
  \tau:\Hyp_\mathcal{A}(R) \to \Concl_\mathcal{A}(R)
  \]
  making the following diagram commute:
    \begin{equation}
    \label{eq:def-action-pbs}
\xymatrix{
\Hyp_\mathcal{A}(R)
\ar[rr]^\tau\ar[rd]
& &
 \Concl_\mathcal{A}(R)
\ar[ld]
\\
&
\MVar_\mathcal{A}(R)
}
\end{equation}
\end{definition}

\begin{example}[Action of the congruence rule for application]
\label{ex:action-app-congr}
Consider the reduction rule of the congruence for application of Example~\ref{ex:congr-app-rule}.
Let $R$ be a reduction $\Sigma_\LC$-model $R$. 
An action $\tau$ in $R$ is an $R$-module morphism which, for each set $X$, maps a sextuple $(T,T',U,U',r,s)$ with $T,T',U,U' \in R(X)$, $r:\redfib{T}{T'}$,
and $s:\redfib{U}{U'}$ to a quintuple $(T,T',U,U',m)$ with $m:\redfib{\app(T,U)}{\app(T',U')}$.
The fact that $\tau$ is fully determined by its last component allows us to present it as in triangle $\eqref{eq:def-action-pbs}$.
 
Alternatively (as justified formally by Lemma~\ref{l:alt-def-action}), an action is a morphism mapping the same sextuple to a reduction $m:\redfib{\app(T,U)}{\app(T',U')}$.
\end{example}

\subsection{Protocol for Specifying Reduction Rules}
\label{ss:protocole-reduction-rules}
In Section~\ref{ss:ex-reduction-rules},
we adopt the following schematic presentation of a reduction rule over a signature $\Sigma$:
    \begin{prooftree}
     \AxiomC{$\redrule{s_1}{t_1} $} 
     \AxiomC{$\dots $} 
     \AxiomC{$\redrule{s_n}{t_n} $} 
\TrinaryInfC{$\redrule{s_0}{t_0} $} 
\end{prooftree}
where $s_i$ and $t_i$ are expressions depending on \enquote{metavariables} $X_1$, \dots, $X_q$. Each pair $(s_i,t_i)$ defines a term-pair as follows:
\begin{align}
p_i & : M_1\times\dots\times M_q 
\to \Theta^{(m_i)}\times  \Theta^{(m_i)}
\notag
\\
p_{i,R,X}(T_1,\dots,T_q) &
:= (s_i[\vec{X}/\vec{T}], t_i[\vec{X}/\vec{T}])
\label{eq:def-p-i}
\end{align}
where $s_i[\vec{X}/\vec{T}]$ is $s_i$ where each
metavariable $X_i$ has been replaced with $T_i$.
 The $\Sigma$-modules $M_1$, \dots, $M_q$, and the natural numbers $m_0$,
\dots, $m_n$ are inferred for Equation~\eqref{eq:def-p-i} to be well defined for all $i\in\{0,\dots,n\}$.

The induced reduction rule is:
\begin{itemize}
    \item \MMM: 
    %\AL{dire qu'en pratique on peut choisir l'ordre des metavariables mais que c'est pareil à iso près}
    the $\Sigma$-module of metavariables is $\mathcal{V}=M_1\times\dots\times
    M_q$;  %\Theta^{(n_1)}\times\dots\times\Theta^{(n_q)}$;
    \item \HHH: the hypotheses are the term-pairs $(m_i,p_i)_{i\in \{1,\dots,n\}}$;
    \item \CCC: the conclusion is the term-pair $(m_0,p_0)$.
\end{itemize}
Typically, $M_i = \Theta^{(n_i)}$ for some natural number $n_i$, as in the
examples that we consider in this section.
In practice, there are several choices for building the reduction rule out of such a schematic presentation, depending on the order in which the metavariables are picked. 
This order is irrelevant: the different possible versions of reduction rules are all equivalent, in the sense that taking one or the other as part of a reduction signature yields isomorphic categories of models.

\subsection{Examples of Reduction Rules}
\label{ss:ex-reduction-rules}

This section collects a list of motivating examples of reduction rules.

For the rest of this section, we assume that we have fixed a signature
for monads $\Sigma$.  Figure~\ref{fig:ex-reds} shows some notable
examples of reduction rules.  In order, they are: reflexivity,
transitivity, congruence for $\abs$, $\beta$-reduction,
$\eta$-expansion, and expansion of the fixpoint operator.
\begin{figure}
  %\centering
  % Reflexivity
  \begin{equation*}
    \AxiomC{}
    \RightLabel{\scriptsize{Refl}}
    \UnaryInfC{$\redrule{T}{T}$}
    \DisplayProof
    \qquad
    % Transitivity
    \AxiomC{$\redrule{T}{U}$}
    \AxiomC{$\redrule{U}{W}$}
    \RightLabel{\scriptsize{Trans}}
    \BinaryInfC{$\redrule{T}{W}$}
    \DisplayProof
  \end{equation*}
  \bigskip
  \begin{equation*}
    % Congruence for abs
    \AxiomC{$\redrule{T}{U}$}
    \RightLabel{\scriptsize{$\abs$-Cong}}
    \UnaryInfC{$\redrule{\abs(T)}{\abs(U)}$}
    \DisplayProof
    \qquad
    %  binary application congruence
    \AxiomC{$\redrule{T}{T'}$}
    \AxiomC{$\redrule{U}{U'}$}
    \RightLabel{\scriptsize{$\app$lr-Cong}}
    \BinaryInfC{$\redrule{\app(T,U)}{\app(T',U')}$}
    \DisplayProof
  \end{equation*}
  \bigskip
  % left application congruence
  \begin{equation*}
    \AxiomC{$\redrule{T}{T'}$}
    \RightLabel{\scriptsize{$\app$l-Cong}}
    \UnaryInfC{$\redrule{\app(T, U)}{\app(T',U)}$}
    \DisplayProof
    \qquad
  % right application congruence
    \AxiomC{$\redrule{U}{U'}$}
    \RightLabel{\scriptsize{$\app$r-Cong}}
    \UnaryInfC{$\redrule{\app(T, U)}{\app(T,U')}$}
    \DisplayProof
  \end{equation*}
  \bigskip
  \begin{equation*}
    % beta reduction
    \AxiomC{}
    \RightLabel{\scriptsize{$\beta$-Red}}
    \UnaryInfC{$\redrule{\app(\abs(T),U)}{T\monsubst{*:=U}}$}
    \DisplayProof
    \quad
    % fixpoint operator
    \AxiomC{}
    \RightLabel{\scriptsize{$\mathsf{fix}$-Exp}}
    \UnaryInfC{$\redrule{\fix(T)}{T\monsubst{*:=\fix(T)}}$}
    \DisplayProof
  \end{equation*}
  \bigskip
  \begin{equation*}
    % eta expansion
    \AxiomC{}
    \RightLabel{\scriptsize{$\eta$-Exp}}
    \UnaryInfC{$\redrule{T}{\abs(\app(\iota(T),*))}$}
    \DisplayProof
    \quad
    % eta contraction
    \AxiomC{}
    \RightLabel{\scriptsize{$\eta$-Contr}}
    \UnaryInfC{$\redrule{\abs(\app(\iota(T),*))}{T}$}
    \DisplayProof
  \end{equation*}

  \bigskip
Here, $\iota : \Theta \to \Theta'$ denotes the canonical morphism $\iota_{R,X} : R(X)
\to R(X+1)$.
\caption{Examples of reduction rules.}
\Description{The figure specifies various reduction rules in our framework.}
  \label{fig:ex-reds}
\end{figure}

For the example of the fixpoint operator (rule $\mathsf{fix}$-Exp), we consider the signature $\Sigma_\fix$, as described in \cite[Section 6.4]{FSCD2019}
(but without enforcing the fixpoint equation, which is replaced here by the reduction rule under consideration).
A model of $\Sigma_\fix$ is a monad $R$ equipped with an $R$-module morphism $\fix : R' \to R$.

Figure~\ref{fig:ex-reds-details} lists the modules and term pairs for
hypotheses and conclusion of each of these reduction
rules. There, $\pi_{i,j}$ designates the pair projection described in Definition~\ref{d:pair-projection}.
\begin{figure}
  \centering
  \begin{tabular}{l|c|c|c|c}
    Rule & Signature %for monads
    & Metavariables & Hypotheses & Conclusion\\
    \hline
    Refl & any & $\Theta$ && $(0, \langle\id,\id\rangle)$ \\
    Trans & any & $\Theta^3$ for $(T,U,W)$&  $(0,\pi_{1,2})$, $(0,\pi_{2,3})$  & $(0, \pi_{1,3})$ \\
    $\abs$-Cong & $\Sigma_\LC$ & $\Theta'\times\Theta'$ for $(T,U)$ & $(1,\id)$ & $(0, \abs\times\abs)$ \\
    $\app$lr-Cong & $\Sigma_\LC$ & $\Theta^4$ for $(T,T',U,U')$
                    & $(0,\pi_{1,2}), (0,\pi_{3,4})$
                                 & $(0, \app\times\app)$ \\
    $\app$l-Cong & $\Sigma_\LC$ & $\Theta^3$ for $(T,T',U)$ & $(0,\pi_{1,2})$
                                 & $(0, \langle \app\circ\pi_{1,3} , \app\circ\pi_{2,3} \rangle)$ \\
    $\app$r-Cong & $\Sigma_\LC$ & $\Theta^3$ for $(T,U,U')$ & $(0,\pi_{2,3})$
                                 & $(0, \langle \app\circ\pi_{1,2} , \app\circ\pi_{1,3} \rangle)$ \\
    $\beta$-Red & $\Sigma_\LC$ & $\Theta'\times\Theta$ for $(T,U)$ & &
                       $(0, c_{\beta\text{-Red}})$
    \\
    $\fix$-Exp & $\Sigma_\fix$ & $\Theta'$ & & $(0, c_{\fix\text{-Exp}})$ \\
    $\eta$-Exp & $\Sigma_\LC$ & $\Theta$ & & $(0, \langle \id, b_\eta \rangle )$ \\
    $\eta$-Contr & $\Sigma_\LC$ & $\Theta$ & & $(0, \langle b_\eta, \id\rangle)$
  \end{tabular}

  \begin{align*}
    c_{\beta\text{-Red},R,X}(T,U) & = \langle\app(\abs(T),U) , T\monsubst{*:= U}\rangle \\
    c_{\fix\text{-Exp},R,X}(T) & = \langle\fix(T) , T\monsubst{*:= \fix(T)}\rangle \\
    b_\eta(T) & = \abs (\app (\iota\,T), *)
\end{align*}
                       % $(0, (T,U) \mapsto (\app(\abs(T),U) , T\monsubst{*:= U}))$
\caption{Modules and term pairs relative to the
    reduction rules of Figure~\ref{fig:ex-reds}. }
\label{fig:ex-reds-details}
\Description{The figure specifies the modules and term pairs of the reduction rules given in the previous figure.}
\end{figure}
Below we present different sets of reduction rules over the signature
$\Sigma_\LC$  for different variants of lambda calculus. 

\begin{center}
  \begin{tabular}{c|l}
    Variant of lambda calculus & Associated reduction rules \\
    \hline
    weak head $\beta$ (Example~\ref{ex:lambda-whb})& $\beta$-Red, $\app$l-Cong \\
    congruent $\beta$ (Example~\ref{ex:lc-congruent-beta}) & $\beta$-Red, $\abs$-Cong, $\app$l-Cong, $\app$r-Cong \\
    parallel $\beta$ (Example~\ref{ex:lc-parbeta}) & $\beta$-Red, $\abs$-Cong, $\app$lr-Cong 
  \end{tabular}
\end{center}

\section{Signatures for reduction monads and Initiality}
\label{sec:signatures}
\label{sec:initiality}
In this section, we define the notion of \emph{reduction signature}, consisting of a signature for monads $\Sigma$ and a family of reduction rules over $\Sigma$ (see Section~\ref{ss:signatures-models}).
As usual, we assign to each such signature a \emph{category of models}.
We call a reduction signature \emph{effective} if the associated category of models has an initial object.
Our main result, Theorem~\ref{thm:alg-2-sig-initial} (see Section~\ref{ss:main-result}), states that a reduction signature is effective as soon as its underlying signature for monads is effective.

\subsection{Signatures and their Models}
\label{ss:signatures-models}

We define here \emph{reduction signatures} and their \emph{models}.

\begin{definition}
  A \textbf{reduction signature} is a pair $(\Sigma,\mathfrak{R})$ of a signature $\Sigma$ for monads and a
  family $\mathfrak{R}$ of reduction rules over ${\Sigma}$. 
\end{definition}

\begin{definition}
  Given a reduction monad $R$ and a reduction signature $\mathcal{S}=(\Sigma,\mathfrak{R})$, an \textbf{action
  of $\mathcal{S}$ in $R$} consists of an action of $\Sigma$ in 
  its underlying monad $\monadfromred{R}$
  %the underlying monad of $R$ 
  and an action of each reduction rule of $\mathfrak{R}$ in $R$.
\end{definition}

\begin{definition}
Let $\mathcal{S}=(\Sigma,\mathfrak{R})$ be a reduction signature. A \textbf{model of $\mathcal{S}$} 
is a 
reduction monad equipped with an action of $\mathcal{S}$, or equivalently, a reduction $\Sigma$-model equipped with an action of each reduction rule of $\mathfrak{R}$.
\end{definition}

\subsection{The Functors \texorpdfstring{$\Hyp_\mathcal{A}$}{Hyp A} and \texorpdfstring{$\Concl_\mathcal{A}$}{Concl A}}
\label{ss:hyp-concl-functors}
The definition of morphism between models of a reduction signature relies on the functoriality of the assignments $R\mapsto \Hyp_\mathcal{A}(R)$
and $R\mapsto \Concl_\mathcal{A}(R)$, for a given reduction rule $\mathcal{A}$ on a signature $\Sigma$ for monads.
\begin{definition}
\label{d:red-sigma-module}
Let $\Sigma$ be a signature for monads, and  $\mathcal{A}$ be a reduction rule
over $\Sigma$. Definition~\ref{d:mod-hyp-concl} assigns to each model $R$ of $\Sigma$ the $R$-modules $\Hyp_\mathcal{A}(R)$ and $\Concl_\mathcal{A}(R)$. These assignments extend to functors $\Hyp_\mathcal{A},\Concl_\mathcal{A}:\GMon^\Sigma\to\LMod$.
 %We call such a functor a \textbf{reduction $\Sigma$-module}.
\end{definition}
\begin{proposition}
\label{prop:hyp-concl-preserve-mon}
Given the same data, the functors 
$\Hyp_\mathcal{A}$ and $\Concl_\mathcal{A}$ 
are $\GMon^\Sigma$-modules, i.e., they
 commute with the forgetful functors to $\Mon$:
\[
  \begin{xy}
   \xymatrix{
                    \GMon^\Sigma \ar[rd] \ar@<+.5ex>[rr]^{\Hyp_\mathcal{A}}
                    \ar@<-.5ex>[rr]_{\Concl_\mathcal{A}} 
                    &        &   \LMod \ar[dl] \
                    \\
                                     & \Mon
   }
  \end{xy}
 \]
\end{proposition}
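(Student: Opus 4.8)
The plan is to establish the claim by showing that the functors $\Hyp_\mathcal{A}$ and $\Concl_\mathcal{A}$, when postcomposed with the forgetful functor $\LMod \to \Mon$, recover the composite $\GMon^\Sigma \to \Mon$ that sends a reduction $\Sigma$-model to its underlying monad. Recall that a $\GMon^\Sigma$-module is by definition (Definition in Section~\ref{ss:c-modules}, with $\CC = \GMon^\Sigma$) a functor to $\LMod$ that commutes with the respective forgetful functors to $\Mon$; so the entire content of the proposition is the commutativity of the displayed triangle. The first step is therefore to recall the construction of these functors from Definition~\ref{d:red-sigma-module} and check, objectwise, that both $\Hyp_\mathcal{A}(R)$ and $\Concl_\mathcal{A}(R)$ are genuinely modules over $\monadfromred{R}$ (and not over some other monad). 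For $\Concl_\mathcal{A}(R) = \pcon^*\Redof{R}^{(n)}$ this is immediate: it is a pullback, computed pointwise in $\Mod(\monadfromred{R})$, of the $\monadfromred R$-module morphism $\redof R^{(n)}$ along the $\monadfromred R$-module morphism $\pcon_R$, and pullbacks exist in $\Mod(\monadfromred R)$ and are preserved by the forgetful functor to $\Mon$. For $\Hyp_\mathcal{A}(R)$ the same argument applies to each $\phyp_i^*\Redof{R}^{(n_i)}$, and then to the fibered product $\prod_{i\in I}{}_{\mathcal{V}(R)}$ over $\mathcal{V}(R)$, again a pointwise limit in $\Mod(\monadfromred R)$.

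Next I would verify the morphism part. Given a morphism $(f,\alpha):R\to S$ of reduction $\Sigma$-models, the functoriality established in Definition~\ref{d:red-sigma-module} produces morphisms $\Hyp_\mathcal{A}(f,\alpha)$ and $\Concl_\mathcal{A}(f,\alpha)$ in $\LMod$; the point to check is that each of these lies over the monad morphism $f:\monadfromred R \to \monadfromred S$, i.e.\ that applying the forgetful functor $\LMod \to \Mon$ returns exactly $f$. This follows because all the ingredients are over $f$: the induced map $\Redof{R}^{(n)} \to f^*\Redof{S}^{(n)}$ is the $n$-fold derivative of $\alpha$, which by the definition of a morphism of reduction monads (Condition~\ref{enum:reduction-mor-module-mor}) is a morphism over $f$; the map on metavariables $\mathcal{V}(R) \to f^*\mathcal{V}(S)$ is over $f$ since $\mathcal{V}$ is a $\Sigma$-module and $f$ is a morphism of $\Sigma$-models; and $\pcon,\phyp_i$ being $\Sigma$-module morphisms are the identity on the underlying monad. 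Since the pullbacks and fibered products are universal constructions built from maps all lying over $f$, the induced maps on $\Concl_\mathcal{A}$ and $\Hyp_\mathcal{A}$ also lie over $f$, which is precisely the assertion that postcomposition with the forgetful functor yields the underlying-monad functor $\GMon^\Sigma \to \Mon$.

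The main obstacle, such as it is, is bookkeeping rather than conceptual: one must be careful that the reindexing $f^*$ in the fibration $\LMod \to \Mon$ interacts correctly with the pullback and fibered-product constructions, so that $\Concl_\mathcal{A}(f^*S) = f^*\Concl_\mathcal{A}(S)$ (and likewise for $\Hyp_\mathcal{A}$) up to the canonical identifications. Concretely, one checks that $f^*$ preserves the relevant limits---which it does, since reindexing along a monad morphism is computed pointwise on sets and pullbacks and finite fibered products of sets are preserved by any such pointwise functor---so that forming the pullback before or after reindexing agrees. Once this compatibility is recorded, the commutativity of the triangle is automatic, and there is nothing further to prove. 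I would close by remarking that, having exhibited $\Hyp_\mathcal{A}$ and $\Concl_\mathcal{A}$ as $\GMon^\Sigma$-modules, the two projections $\hyp_\mathcal{A}$ and $\concl_\mathcal{A}$ to the tautological $\GMon^\Sigma$-module $\MVar_\mathcal{A} = \mathcal{V}$ are morphisms of $\GMon^\Sigma$-modules, which is the shape we need for defining morphisms between models of a reduction signature.
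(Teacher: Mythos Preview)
The paper states this proposition without proof, treating it as immediate from the construction in Definition~\ref{d:red-sigma-module}. Your argument is correct and supplies exactly the bookkeeping the paper omits: the object-level check that the pullbacks and fibered products defining $\Concl_\mathcal{A}(R)$ and $\Hyp_\mathcal{A}(R)$ are computed in $\Mod(\monadfromred R)$, the morphism-level check that all the ingredients (the derivative of $\alpha$, the action of $\mathcal{V}$ on $f$, and the term-pairs $\pcon,\phyp_i$) lie over $f$, and the observation that reindexing $f^*$ preserves these finite limits so that the induced maps on the universal constructions again lie over $f$. There is nothing to compare approaches against; your write-up is simply an honest unfolding of what the paper leaves to the reader.
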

\subsection{The Main Result}
\label{ss:main-result}
\label{ss:category-models}

For a reduction signature $\mathcal{S}$, we define here the notion of $\mathcal{S}$-model morphism, inducing a \textbf{category of models of $\mathcal{S}$}.
We then state our main result, Theorem~\ref{thm:alg-2-sig-initial}, which gives
a sufficient condition for $\mathcal{S}$  to admit an initial model.

\begin{definition}
\label{d:model-mor}
Let $\mathcal{S}=(\Sigma,\mathfrak{R})$ be a reduction signature.
A morphism between models $R$ and $T$ of $\mathcal{S}$ is a morphism $f$ of
reduction $\Sigma$-models commuting with the action of any reduction rule
$\mathcal{A}\in \mathfrak{R}$, i.e., such that, for any such $\mathcal{A}$, the
following diagram of natural transformations commutes:
\[
  \xymatrix{
  \Hyp_\mathcal{A}(R) \ar[r] \ar[d]_{\Hyp_\mathcal{A}(f)} & \Concl_\mathcal{A}(R)
  \ar[d]^{\Concl_\mathcal{A}(f)}
  \\
  \Hyp_\mathcal{A}(T) \ar[r] & \Concl_\mathcal{A}(T)
  }
  \]
\end{definition}

\begin{example}[Example~\ref{ex:action-app-congr} continued]
 Consider the reduction signature consisting of the signature $\Sigma_\app$ of a binary
 operation $\app$ and the reduction rule of congruence (Section~\ref{ss:ex-congr-rule-app}) for application (Example~\ref{ex:congr-app-rule}).
 
 Let $R$ and $T$ be models for this signature: they are reduction
 $\Sigma_\app$-models equipped with an action $\rho$ and $\tau$, in the
 alternative sense of Example~\ref{ex:action-app-congr}. A reduction $\Sigma_\app$-model morphism $(f,\alpha)$ between $R$ and $T$ is a model morphism if, for any set $X$, any sextuple
 $(A,A',B,B',m,n)$
  with $(A,A',B,B')\in R^4(X)$, $m : \redfib{A}{A'}$, and $n  : \redfib{B}{B'}$, the reduction $\rho(A,A',B,B'):\redfib{\app(A,B)}{\app(A',B')}$ is mapped to the reduction $\tau(f(A),f(A'),f(B),f(B'))$ by $\alpha : \Redof{R}\to\Redof{T}$.
\end{example}

\begin{proposition}
  Let
  $\mathcal{S}=(\Sigma,\mathfrak{R})$ be a reduction signature. Models of $\mathcal{S}$ and their morphisms, with the obvious composition and identity, define a category that we denote by  $\GMon^{\mathcal{S}}$, equipped with a forgetful functor to $\GMon^\Sigma$.
\end{proposition}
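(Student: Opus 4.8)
The plan is to present $\GMon^{\mathcal{S}}$ as sitting over the category $\GMon^\Sigma$ of reduction $\Sigma$-models: an object is a reduction $\Sigma$-model together with a choice of action for each rule of $\mathfrak{R}$, and a morphism is an underlying $\GMon^\Sigma$-morphism satisfying the commutation conditions of Definition~\ref{d:model-mor}. Since composition and identities are taken to be those of $\GMon^\Sigma$, the associativity and unit laws are automatic once we know that the class of morphisms satisfying Definition~\ref{d:model-mor} is closed under identities and composition. Thus the whole argument reduces to these two closure checks, for which the load-bearing input is the functoriality of $\Hyp_\mathcal{A}$ and $\Concl_\mathcal{A}$ recorded in Definition~\ref{d:red-sigma-module}.

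First I would treat identities. Given a model $R$ of $\mathcal{S}$, its identity as a reduction $\Sigma$-model is a candidate morphism; applying the functors $\Hyp_\mathcal{A}$ and $\Concl_\mathcal{A}$ and using functoriality gives $\Hyp_\mathcal{A}(\id_R)=\id_{\Hyp_\mathcal{A}(R)}$ and $\Concl_\mathcal{A}(\id_R)=\id_{\Concl_\mathcal{A}(R)}$, so the square of Definition~\ref{d:model-mor} has identity vertical edges and commutes trivially. Hence $\id_R$ is a model morphism for every rule $\mathcal{A}\in\mathfrak{R}$ simultaneously.

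For composition, let $f:R\to T$ and $g:T\to U$ be model morphisms and fix a rule $\mathcal{A}\in\mathfrak{R}$. The composite $g\circ f$ is already a reduction $\Sigma$-model morphism because $\GMon^\Sigma$ is a category. Stacking the commuting squares for $f$ and $g$,
\[
\xymatrix{
\Hyp_\mathcal{A}(R) \ar[r] \ar[d]_{\Hyp_\mathcal{A}(f)} & \Concl_\mathcal{A}(R) \ar[d]^{\Concl_\mathcal{A}(f)} \\
\Hyp_\mathcal{A}(T) \ar[r] \ar[d]_{\Hyp_\mathcal{A}(g)} & \Concl_\mathcal{A}(T) \ar[d]^{\Concl_\mathcal{A}(g)} \\
\Hyp_\mathcal{A}(U) \ar[r] & \Concl_\mathcal{A}(U)
}
\]
the outer rectangle commutes, and functoriality identifies its vertical edges with $\Hyp_\mathcal{A}(g\circ f)$ and $\Concl_\mathcal{A}(g\circ f)$. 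This is exactly the condition of Definition~\ref{d:model-mor} for $g\circ f$ and $\mathcal{A}$; since $\mathcal{A}$ was arbitrary, $g\circ f$ is a model morphism.

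With closure under identities and composition established, $\GMon^{\mathcal{S}}$ is a category, the associativity and unit laws being inherited verbatim from $\GMon^\Sigma$. The forgetful functor to $\GMon^\Sigma$ is the evident assignment sending a model of $\mathcal{S}$ to its underlying reduction $\Sigma$-model and a model morphism to its underlying $\GMon^\Sigma$-morphism; it preserves identities and composition by construction, and it is faithful, since being a model morphism is a \emph{property} of the underlying $\GMon^\Sigma$-morphism rather than extra data. I do not expect any genuine obstacle: the only non-formal ingredient is the functoriality of $\Hyp_\mathcal{A}$ and $\Concl_\mathcal{A}$, which is precisely why Definition~\ref{d:red-sigma-module} was isolated beforehand; everything else is diagram pasting and inheritance from $\GMon^\Sigma$.
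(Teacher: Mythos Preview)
Your argument is correct and is precisely the routine verification the paper has in mind: the paper states this proposition without proof, taking it as evident once $\Hyp_\mathcal{A}$ and $\Concl_\mathcal{A}$ have been made into functors (Definition~\ref{d:red-sigma-module} and Proposition~\ref{prop:hyp-concl-preserve-mon}). Your closure-under-identities-and-composition check, with associativity and units inherited from $\GMon^\Sigma$, is exactly the intended content of ``the obvious composition and identity''.
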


\begin{definition}
  A reduction signature $\mathcal{S}$ is said to be
  \textbf{effective}
  if its category of models $\GMon^{\mathcal{S}}$ has an initial object,
  denoted $\widehat{\mathcal{S}}$.
  In this case, we say that $\widehat{\mathcal{S}}$
  (or more precisely the underlying reduction monad)
  is \textbf{generated (or specified) by $\mathcal{S}$}.
\end{definition}

We now have all the ingredients required to state our main result:

\begin{theorem}
\label{thm:alg-2-sig-initial}
Let $(\Sigma,\mathfrak{R})$ be a reduction signature. 
If $\Sigma$ is effective, then so is $(\Sigma,\mathfrak{R})$.
\end{theorem}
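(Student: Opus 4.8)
The plan is to construct the initial model explicitly and verify its universal property directly. Since every model of $(\Sigma,\mathfrak{R})$ has an underlying model of $\Sigma$, and $\Sigma$ is effective with initial model $\hat{\Sigma}$, I will build the sought reduction monad on top of the fixed underlying $\Sigma$-monad $\hat{\Sigma}$. The structural fact I would record first is that the forgetful functor $U : \GMon^\Sigma \to \Mon^\Sigma$ is a fibration: the cartesian lift of a $\Sigma$-monad morphism $f : R \to \monadfromred{S}$ against a reduction $\Sigma$-model $S$ reindexes $\Redof{S}$ along $f$ and pulls back $\redof{S}$ along $f\times f$, so the reductions of the lift are triples $(s,t,m)$ with $s,t\in R(X)$, $m\in\Redof{S}(X)$ and $\redof{S}(m)=(f(s),f(t))$, with source and target $(s,t,m)\mapsto(s,t)$. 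This fibration restricts to $U : \GMon^{(\Sigma,\mathfrak{R})}\to\Mon^\Sigma$, since an action of each rule transports along cartesian lifts by naturality of $\Hyp_\mathcal{A}$ and $\Concl_\mathcal{A}$ (Proposition~\ref{prop:hyp-concl-preserve-mon}). Consequently, producing an object over the \emph{initial} base object $\hat{\Sigma}$ that is initial in its fiber yields an object initial in the whole total category.

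\textbf{The fiber over $\hat{\Sigma}$ as a category of algebras.} The fiber of $U$ over $\hat{\Sigma}$ has as objects the reduction structures on $\hat{\Sigma}$, that is, objects $(\mathcal{D},d)$ of the slice $\Mod(\hat{\Sigma})/(\hat{\Sigma}\times\hat{\Sigma})$ (with $d:\mathcal{D}\to\hat{\Sigma}\times\hat{\Sigma}$ the source--target morphism) equipped with an action of every rule $\mathcal{A}\in\mathfrak{R}$; its morphisms are slice morphisms commuting with all actions. Unfolding Definition~\ref{d:action-red-rule} with the pullback description of $\Concl_\mathcal{A}$ (Definition~\ref{d:mod-hyp-concl}), an action of $\mathcal{A}$ with conclusion arity $n$ is exactly a morphism $\Hyp_\mathcal{A}(\mathcal{D})\to\mathcal{D}^{(n)}$ lying over $\hat{\Sigma}^{(n)}\times\hat{\Sigma}^{(n)}$. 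Using that iterated derivation $(-)^{(n)}$ on $\Mod(\hat{\Sigma})$ admits a left adjoint, such a morphism transposes to a morphism $F_\mathcal{A}(\mathcal{D})\to\mathcal{D}$ in the slice over $\hat{\Sigma}\times\hat{\Sigma}$, where $F_\mathcal{A}$ sends $\mathcal{D}$ to the module of \enquote{reductions freshly generated by $\mathcal{A}$}, with source and target dictated by $\pcon$. Setting $F:=\coprod_{\mathcal{A}\in\mathfrak{R}}F_\mathcal{A}$, the fiber over $\hat{\Sigma}$ is precisely the category of $F$-algebras, and I would obtain its initial object as the initial $F$-algebra via Ad\'amek's initial chain $0\to F0\to F^2 0\to\cdots$, whose colimit converges once $F$ is finitary. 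Here I use that $\Mod(\hat{\Sigma})$, hence its slice, is cocomplete with colimits computed pointwise.

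\textbf{Finitariness of $F$.} It remains to check that $F$ preserves filtered colimits, which I would verify one factor at a time in the variable $\mathcal{D}$. First, derivation $\mathcal{D}\mapsto\mathcal{D}^{(n_i)}$ preserves all colimits, being evaluation at a shifted set while colimits of modules are pointwise. Second, each hypothesis pullback $\phyp_i^{*}(-)$ is pullback along a fixed morphism in $\Set$, which has a right adjoint (dependent product) and so preserves all colimits. Third, $\Hyp_\mathcal{A}(\mathcal{D})$ is the fiber product of the \emph{finitely} many $\phyp_i^{*}\mathcal{D}^{(n_i)}$ over the fixed base $\MVar_\mathcal{A}(\hat{\Sigma})$, which does not depend on $\mathcal{D}$; being a finite limit it commutes with filtered colimits in $\Set$, so $\Hyp_\mathcal{A}$ preserves them. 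The adjoint transposition and the coproduct $\coprod_{\mathcal{A}}$ preserve all colimits, whence $F$ is finitary, the initial chain converges, and the initial $F$-algebra---the initial object $\widehat{\mathcal{S}}$ of the fiber---exists.

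\textbf{Assembling and the main obstacle.} To conclude, given any model $S$ of $(\Sigma,\mathfrak{R})$, the unique $\Sigma$-monad morphism $\hat{\Sigma}\to\monadfromred{S}$ has a cartesian lift reindexing $S$ to an action-carrying object of the fiber over $\hat{\Sigma}$; initiality of $\widehat{\mathcal{S}}$ in that fiber yields a unique comparison morphism, and the universal property of the cartesian lift shows this is the unique morphism $\widehat{\mathcal{S}}\to S$ in $\GMon^{(\Sigma,\mathfrak{R})}$ (its monad part being forced by initiality of $\hat{\Sigma}$). Thus $\widehat{\mathcal{S}}$ is initial and $(\Sigma,\mathfrak{R})$ is effective. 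I expect the main obstacle to be the bookkeeping that keeps the entire construction inside \emph{modules} rather than bare functors $\Set\to\Set$: one must check that $F$ is a genuine endofunctor of the slice of $\Mod(\hat{\Sigma})$---that the freshly generated reductions carry a substitution compatible with $\hat{\Sigma}$ and that the induced source and target form a module morphism---and that the left adjoint to iterated derivation exists at the module level and interacts correctly with the slice over $\hat{\Sigma}\times\hat{\Sigma}$. A secondary, more routine, point is confirming that rule actions are stable under the reindexing used for the fibration, which is exactly what upgrades fiberwise initiality to global initiality.
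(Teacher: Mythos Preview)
Your proposal is correct and follows essentially the same route as the paper: fibration of models over $\Mon^\Sigma$, identification of the fiber over $\hat{\Sigma}$ with algebras for a finitary endofunctor on the slice $\Mod(\hat{\Sigma})/(\hat{\Sigma}\times\hat{\Sigma})$, Ad\'amek's chain, and the standard ``initial-in-fiber over initial base implies initial in total'' lemma. The only presentational difference is that the paper first \emph{normalizes} every rule to conclusion arity $n=0$ (its Lemma~\ref{l:normalized-red-rule}) before defining the endofunctor, whereas you absorb that step by transposing along the derivation adjunction on the fly; these are the same maneuver, and the paper's finitariness argument (a direct computation with the finite limit defining $\Hyp_\mathcal{A}$ and commutation of finite limits with filtered colimits in $\Set$) matches your decomposition into derivation, pullback, and finite fiber product.
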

The proof of this theorem is given in Section~\ref{s:effective-alg-2sig}.
\begin{definition}
A reduction signature $(\Sigma,\mathfrak{R})$ is called \textbf{algebraic} if
$\Sigma$ is (in the sense of \cite{FSCD2019}).
\end{definition}
Theorem~32 of \cite{FSCD2019}
%\cite[Theorem 3]{Hirschowitz-Maggesi-2010} 
entails the following corollary:
\begin{corollary}
\label{cor:effective-if-2-sig-alg}
Any algebraic reduction signature is effective.
\end{corollary}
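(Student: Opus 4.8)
The plan is to prove Theorem~\ref{thm:alg-2-sig-initial} by constructing the initial model of $(\Sigma,\mathfrak{R})$ explicitly, starting from the initial $\Sigma$-monad $\hat\Sigma$ provided by the hypothesis that $\Sigma$ is effective. The key observation is that the forgetful functor $\GMon^{\mathcal{S}}\to\Mon^\Sigma$ factors the problem: once we fix the underlying $\Sigma$-monad to be $\hat\Sigma$, what remains is to equip it with a module of reductions $\Redof{R}$, a source-target morphism $\redof{R}$, and an action of each reduction rule, in an initial way. So first I would show that the underlying $\Sigma$-monad of the initial model of $(\Sigma,\mathfrak{R})$ must be $\hat\Sigma$ itself; intuitively, the reduction rules impose no constraints on the monad structure, only on the reductions, so initiality in $\Mon^\Sigma$ is inherited.

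Next I would construct the module of reductions. The idea is to take $\Redof{\hat\Sigma}$ to be the \emph{initial} $\hat\Sigma$-module carrying, for each reduction rule $\mathcal{A}\in\mathfrak{R}$, an action $\Hyp_\mathcal{A}(R)\to\Concl_\mathcal{A}(R)$ over $\MVar_\mathcal{A}(R)$. Concretely, I would build $\Redof{\hat\Sigma}$ as a colimit (an inductive/free construction) generated by the conclusions of all the rules, where each rule's hypotheses---being themselves built from $\Redof{\hat\Sigma}$ via the pullbacks $\phyp_i^*\Redof{\hat\Sigma}^{(n_i)}$ of Definition~\ref{d:mod-hyp-concl}---feed back into the construction. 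Because $\Hyp_\mathcal{A}$ depends on $\Redof{R}$ (the hypotheses reference prior reductions) while $\Concl_\mathcal{A}$ produces new reductions, this is genuinely a fixed-point construction, and I would present $\Redof{\hat\Sigma}$ as the initial algebra for the endofunctor on $\Mod(\hat\Sigma)/(\text{source-target into }\hat\Sigma\times\hat\Sigma)$ sending a candidate reduction module $M$ to $\coprod_{\mathcal{A}\in\mathfrak{R}}\Concl_\mathcal{A}$ computed relative to $M$. The source-target morphism $\redof{\hat\Sigma}$ is then determined componentwise by the conclusion term-pairs $\pcon$, and the action of each rule is the canonical coproduct injection.

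The main obstacle I expect is establishing that this endofunctor actually has an initial algebra, i.e.\ that the generating process converges. Since each $\Hyp_\mathcal{A}$ is built from \emph{finite} fibered products of pullbacks of $M^{(n_i)}$ along morphisms of modules, and modules over a fixed monad have all (pointwise) colimits, the endofunctor should be finitary---or at least accessible---in the reduction-module argument, so one obtains the initial algebra as the colimit of the initial chain $\emptyset\to F\emptyset\to F^2\emptyset\to\cdots$ (the usual $\omega$-colimit, or a transfinite iteration if cofinality issues arise). The delicate point is that the hypotheses pull back along the source and target maps, which themselves evolve as reductions are added; I would need to check that these pullbacks commute with the relevant filtered colimits so that $F$ preserves the colimit defining its own fixed point. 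This is where the proof-relevant, multigraph-valued setting matters: because $\Redof{R}$ need not be a subsingleton (Remark~\ref{r:red-mon-proof-relevant}), the construction is a genuine free generation rather than a closure, and I must track multiplicities carefully through the fibered products.

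Finally, once $\widehat{\mathcal{S}}$ is constructed, I would verify initiality directly: given any model $R$ of $\mathcal{S}$, the unique $\Sigma$-monad morphism $\hat\Sigma\to\monadfromred{R}$ lifts uniquely to a morphism of reduction monads commuting with every rule's action, by induction on the initial-algebra structure of $\Redof{\widehat{\mathcal{S}}}$. The universal property of the initial algebra yields the map on reductions, and the source-target compatibility (condition~\ref{enum:reduction-mor-commutes}) together with substitution-compatibility (condition~\ref{enum:reduction-mor-module-mor}) follow because these conditions hold at each generator and are preserved by the colimit. Uniqueness is immediate from the universal property, completing the proof that $(\Sigma,\mathfrak{R})$ is effective.
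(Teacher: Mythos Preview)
Your plan is essentially the paper's own proof of Theorem~\ref{thm:alg-2-sig-initial}; the corollary then follows immediately because algebraic signatures for monads are effective by \cite[Theorem~32]{FSCD2019}. Three points of comparison are worth flagging. First, a terminological slip: the endofunctor on $\Mod(\hat\Sigma)/\hat\Sigma^2$ should send a candidate $(M,p)$ to the coproduct over $\mathcal{A}\in\mathfrak{R}$ of the \emph{hypotheses} modules $\Hyp_\mathcal{A}$ computed from $M$, equipped with the structure map to $\hat\Sigma\times\hat\Sigma$ determined by the conclusion term-pair $\pcon$; you wrote $\Concl_\mathcal{A}$, which reverses the input/output roles (though your surrounding prose shows you understand which way the data flows). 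Second, the paper packages your informal step ``the underlying $\Sigma$-monad of the initial model must be $\hat\Sigma$'' via a clean fibration argument: it proves that the forgetful functor $\GMon^{\mathcal{S}}\to\Mon^\Sigma$ is a fibration (Lemma~\ref{prop:model-model-fibration}) and then invokes the general fact that an object initial in the fiber over an initial base object is initial in the total category (Lemma~\ref{lem:initial-fibration}). Third, the paper inserts a preliminary normalization step (Section~\ref{ss:normalizing}) reducing every conclusion to the case $n=0$, which is what makes the endofunctor land in $\Mod(\hat\Sigma)/\hat\Sigma^2$ rather than in something involving $\hat\Sigma^{(n)}$; you would need this too. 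Your finitarity argument---finite fibered products of pullbacks of derivatives, colimits in $\Mod(R)$ computed pointwise, and finite limits commuting with filtered colimits in $\Set$---is exactly the content of Lemma~\ref{lem:finitary-elem-arity}, so no transfinite iteration is needed.
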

All the examples of reduction signatures considered here
satisfy the condition of Corollary~\ref{cor:effective-if-2-sig-alg}.
\begin{example}[Variants of lambda calculus]
  \label{ex:variants-lc}
 In Section~\ref{ss:ex-reduction-rules}, we considered different sets of
 reduction rules for different variants of lambda calculus. Each such set
 $\mathfrak{R}$ defines an algebraic reduction signature $(\Sigma_\LC,\mathfrak{R})$.
\begin{center}
  \begin{tabular}{c|c|c}
    Variant of lambda calculus &   Signature & Initial model \\
    \hline
    weak head $\beta$ (Example~\ref{ex:lambda-whb})&  $\mathcal{S}_{\LCwhb}$ & $\LCwhb$ \\
    congruent $\beta$ (Example~\ref{ex:lc-congruent-beta}) & 
                                                             $\mathcal{S}_{\LCb}$ & $\LCb$ \\
    parallel $\beta$ (Example~\ref{ex:lc-parbeta})
                               & 
$\mathcal{S}_{\LCbpar}$ & $\LCbpar$ 
  \end{tabular}
\end{center}
\end{example}

\begin{rem}[Continuation of Remark~\ref{r:red-mon-proof-relevant}]
 \label{rem:signatures-proof-relevant}
 Just as our reduction monads are ``proof-relevant'' (cf.\ Remark~\ref{r:red-mon-proof-relevant}),
 our notion of reduction signature allows for the specification of multiple reductions between terms. As a trivial example , duplicating the $\beta$-rule in the signature $\mathcal{S}_{\LCb}$ yields two distinct $\beta$-reductions in the initial model.
\end{rem}

\begin{example}[Reduction signature of lambda calculus with a fixpoint operator]
  \label{ex:redsig-lc-fixpoint}
  The signature $\mathcal{S}_\LCfix$ specifying the reduction monad $\LCfix$ of the lambda calculus with a fixpoint operator 
  extends the signature $\mathcal{S}_{\LCb}$ of Example~\ref{ex:variants-lc}
  with:
  \begin{itemize}
    \item a new operation $\fix:\Theta'\to \Theta$ (thus extending the signature for
      monads $\Sigma_\LC$);
      \item the reduction rule for the fixpoint reduction (cf.\ Section~\ref{ss:ex-reduction-rules});
      \item a congruence rule for $\fix$:
        \[
          \frac{
           \redrule{T}{T'}
          }
          {
            \redrule{\fix(T)}{\fix(T')}
          }
        \]
    \end{itemize}
 \end{example}

\section{Proof of Theorem~\ref{thm:alg-2-sig-initial}}
\label{s:effective-alg-2sig}

This section details the proof of Theorem~\ref{thm:alg-2-sig-initial}.

Let $\mathcal{S}=(\Sigma , (\mathcal{A}_i)_{i\in I}))$ be a reduction signature.
We denote by $\mathcal{U}^\Sigma$ the forgetful functor from the category of
reduction $\Sigma$-models to the category of models of $\Sigma$.

In Section~\ref{ss:normalizing}, we first reduce to the case of reduction rules
$(\mathcal{V},(n_j,\phyp_j)_{j\in J},(n,\pcon))$
for which $n=0$, that we call \emph{normalized}. Then, in Section~\ref{ss:alt-def-models}, we give an alternative definition of the category of models that we make use of in the proof of effectivity, in Section~\ref{ss:initiality}.

\subsection{Normalizing Reduction Rules}
\label{ss:normalizing}
\begin{definition}
A reduction rule $(\mathcal{V},(n_j,\phyp_j)_{j\in J},(n,\pcon))$ is said to be \textbf{normalized} if $n=0$.
\end{definition}

\begin{lemma}
\label{l:normalized-red-rule}
Let $\mathcal{A}=(\mathcal{V},(n_j,\phyp_j)_{j\in J},(n,\pcon))$  be a reduction rule over $\Sigma$.
Then there exists a normalized reduction rule  $\mathcal{A}'$ over $\Sigma$ such that the induced notion of action is equivalent, in the sense that:
\begin{itemize}
    \item for a reduction $\Sigma$-model $R$, there is a bijection between actions of $\mathcal{A}$ in $R$ and actions of $\mathcal{A}'$ in $R$;
    \item a morphism between reduction $\Sigma$-models equipped with an action of $\mathcal{A}$ preserves the action (in the sense of Definition~\ref{d:model-mor}) if and only if it preserves the corresponding action of $\mathcal{A}'$ through the bijection.
\end{itemize}
\end{lemma}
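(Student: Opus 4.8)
The plan is to build $\mathcal{A}'$ by turning the $n$ fresh variables appearing in the conclusion of $\mathcal{A}$ into $n$ additional term-metavariables and then substituting them back in. Concretely, writing $\mathcal{A}=(\mathcal{V},(n_i,\phyp_i)_{i\in I},(n,\pcon))$ with $\pcon:\mathcal{V}\to\Theta^{(n)}\times\Theta^{(n)}$, I would set
\[
  \mathcal{A}' := \bigl(\mathcal{V}\times\Theta^{n},\ (n_i,\phyp_i\circ\pi_{\mathcal{V}})_{i\in I},\ (0,\pcon')\bigr),
\]
where $\pi_{\mathcal{V}}:\mathcal{V}\times\Theta^{n}\to\mathcal{V}$ is the projection, the $n$ new copies of $\Theta$ stand for metavariables $W_1,\dots,W_n$, and $\pcon':\mathcal{V}\times\Theta^{n}\to\Theta\times\Theta$ is obtained by postcomposing $\pcon\times\id_{\Theta^{n}}$ with the canonical substitution morphism $\Theta^{(n)}\times\Theta^{n}\to\Theta$, $(t,\vec{W})\mapsto t\monsubst{*:=\vec{W}}$, in each of the two components. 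Every ingredient is a morphism of $\Sigma$-modules, hence so is $\pcon'$, and $\mathcal{A}'$ is a normalized reduction rule over $\Sigma$ (its conclusion index is $0$; the hypotheses keep their indices $n_i$). Since the hypotheses are reindexed only along the projection, the hypotheses module satisfies $\Hyp_{\mathcal{A}'}(R)\cong\Hyp_{\mathcal{A}}(R)\times\Theta^{n}$ over $\MVar_{\mathcal{A}'}(R)=\MVar_{\mathcal{A}}(R)\times R^{n}$.

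Next I would establish the bijection of actions, phrasing both sides via the alternative description of Lemma~\ref{l:alt-def-action}: an action of $\mathcal{A}$ in $R$ is a module morphism $a:\Hyp_{\mathcal{A}}(R)\to\Redof{R}^{(n)}$ whose composite with $\redof{R}^{(n)}$ equals $\pcon_R$ precomposed with the projection to $\MVar_{\mathcal{A}}(R)$, whereas an action of $\mathcal{A}'$ is a morphism $a':\Hyp_{\mathcal{A}}(R)\times R^{n}\to\Redof{R}$ satisfying the analogous source/target condition. Given $a$, I set $a'(h,\vec{W}):=a(h)\monsubst{*:=\vec{W}}$; compatibility of $\redof{R}$ with substitution shows the source/target condition holds. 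Conversely, given $a'$, I weaken $h\in\Hyp_{\mathcal{A}}(R)(X)$ along $X\hookrightarrow X+n$ to $\hat h\in\Hyp_{\mathcal{A}}(R)(X+n)$ and set $a(h):=a'(\hat h,\vec{*})$, where $\vec{*}$ is the tuple of the $n$ new variables of $X+n$; this lands in $\Redof{R}(X+n)=\Redof{R}^{(n)}(X)$.

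The crux is verifying that these two assignments are mutually inverse and natural, and this is where the bookkeeping is delicate: the value $a(h)$ lives over $X+n$, so the weakening introduces a second copy of the $n$ fresh variables, and one must check that substituting the weakened conclusion's fresh variables by $\vec{*}$ returns the original source and target. Naturality of $\pcon$ for the inclusion $X\hookrightarrow X+n$ rewrites the source of $\hat h$ as the weakening of the source of $h$, and then the monad laws $t\monsubst{\et}=t$ together with $\et(*_j)\monsubst{*:=\vec{W}}=W_j$ collapse the composite of the two substitutions to the identity; the two round-trips reduce to exactly these laws. I expect this verification, rather than the construction itself, to be the main obstacle, as it requires careful management of the two coproduct summands in $X+n+n$.

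Finally, for the second clause of the statement I would observe that the bijection is assembled entirely from substitution and from the functorial (weakening) action, both of which a morphism $(f,\alpha)$ of reduction $\Sigma$-models commutes with, by the condition that $\alpha$ is an $R$-module morphism into $f^{*}\Redof{S}$ (condition~\ref{enum:reduction-mor-module-mor}). Consequently the commuting square of Definition~\ref{d:model-mor} for $\mathcal{A}$ transports to the corresponding square for $\mathcal{A}'$ and back, so $f$ preserves the action of $\mathcal{A}$ if and only if it preserves the action of $\mathcal{A}'$ corresponding to it under the bijection. Because the hypotheses data and the new metavariables are carried along unchanged, this compatibility is immediate once the object-level bijection has been pinned down.
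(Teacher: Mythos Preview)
Your construction of $\mathcal{A}'$ coincides exactly with the paper's: same enlarged metavariable module $\mathcal{V}\times\Theta^{n}$, same reindexed hypotheses along the projection, and your $\pcon'$ built from the substitution map $\Theta^{(n)}\times\Theta^{n}\to\Theta$ is precisely what the paper calls the $n$-th transpose of $\pcon$ under the adjunction $\_\times R \dashv \_'$ in $\Mod(R)$.

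The only difference is in how the bijection of actions is obtained. The paper simply invokes that adjunction: an action of $\mathcal{A}$ is a square with codomain $\Redof{R}^{(n)}$, and transposing across $\_\times R \dashv \_'$ (iterated $n$ times) turns it into the corresponding square with codomain $\Redof{R}$, which is exactly an action of $\mathcal{A}'$. Your proposal instead unpacks this adjunction by hand, writing down the two directions explicitly via substitution and weakening and then checking the round-trips. What you identify as the ``delicate bookkeeping'' with the two copies of the $n$ fresh variables is nothing other than the verification of the triangle identities for that adjunction; once you recognize this, you can replace the entire manual check by a one-line appeal to the adjunction (already recorded in \cite[Proposition~13]{ahrens_et_al:LIPIcs:2018:9671}), and the naturality of the bijection---hence the second clause about preservation by morphisms---comes for free. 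Your argument is correct, but the paper's packaging is shorter and avoids the variable-management you flag as the main obstacle.
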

Before tackling the proof, we give an alternative definition of action and model morphism:
\begin{lemma}
\label{l:alt-def-action}
Let $\mathcal{A}=(\mathcal{V}, (n_i,\phyp_i)_{i\in I},(n,\pcon))$ be a
%   Let $\mathcal{A}$ be a
   reduction rule over $\Sigma$. 
   By universal property of the pullback $\Concl_\mathcal{A}(R)=\pcon^*\Redof{R}^{(n)}$, an action
   can be alternatively be defined as an $R$-module morphism $\sigma : \Hyp_\mathcal{A}(R)\to \Redof{R}^{(n)}$ making the following diagram commute
    \begin{equation}
    \label{eq:action-alt-def}
    \xymatrix{
\Hyp_\mathcal{A}(R)
 \ar[d] \ar[r]^\sigma & {\Redof R}^{(n)}\ar[d]^{{\redof R}^{(n)}} \\
\mathcal{V}(R)\ar[r]_{\pcon} &  R^{(n)}\times R^{(n)}
}
\end{equation}
\end{lemma}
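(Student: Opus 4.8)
The plan is to unfold the two definitions and exhibit the claimed bijection directly as an instance of the universal property of the pullback. Recall that $\Concl_\mathcal{A}(R)$ was defined in Definition~\ref{d:mod-hyp-concl} to be exactly the pullback $\pcon^*{\Redof{R}^{(n)}}$, fitting into the square
\[
\xymatrix{
\Concl_\mathcal{A}(R)\pullback{}\ar[r]\ar[d]_{\concl_\mathcal{A}(R)} & \Redof{R}^{(n)}\ar[d]^{\redof{R}^{(n)}} \\
\mathcal{V}(R)\ar[r]_{\pcon} & R^{(n)}\times R^{(n)}.
}
\]
An action of $\mathcal{A}$, per Definition~\ref{d:action-red-rule}, is an $R$-module morphism $\tau : \Hyp_\mathcal{A}(R)\to \Concl_\mathcal{A}(R)$ over $\mathcal{V}(R)$, i.e.\ making the triangle $\eqref{eq:def-action-pbs}$ commute. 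So the core of the argument is the observation that, since $\Concl_\mathcal{A}(R)$ is a pullback, morphisms into it are in bijection with compatible pairs of morphisms into its two legs.

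Concretely, I would argue as follows. First I would note that giving a morphism $\tau:\Hyp_\mathcal{A}(R)\to\Concl_\mathcal{A}(R)$ into the pullback is, by its universal property, the same as giving a pair $(g,\sigma)$ of $R$-module morphisms $g:\Hyp_\mathcal{A}(R)\to\mathcal{V}(R)$ and $\sigma:\Hyp_\mathcal{A}(R)\to\Redof{R}^{(n)}$ satisfying $\redof{R}^{(n)}\circ\sigma = \pcon\circ g$. The commuting triangle $\eqref{eq:def-action-pbs}$ forces $g$ to equal the structural projection $\hyp_\mathcal{A}(R):\Hyp_\mathcal{A}(R)\to\mathcal{V}(R)$ from Definition~\ref{d:mod-hyp-concl}. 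Therefore the datum of $\tau$ collapses to the single morphism $\sigma$, subject to the equation
\[
\redof{R}^{(n)}\circ\sigma = \pcon\circ\hyp_\mathcal{A}(R),
\]
which is precisely the commutativity of square $\eqref{eq:action-alt-def}$ (reading its lower-left composite as $\pcon\circ\hyp_\mathcal{A}(R)$ and noting the square's lower edge is $\pcon$ applied after projection to $\mathcal{V}(R)$). This establishes the bijection $\tau\leftrightarrow\sigma$ in both directions: from $\tau$ one recovers $\sigma$ by postcomposing with the top projection $\Concl_\mathcal{A}(R)\to\Redof{R}^{(n)}$, and from a square-$\eqref{eq:action-alt-def}$ morphism $\sigma$ one recovers $\tau$ as the mediating morphism into the pullback.

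There is no real obstacle here; the statement is essentially a restatement of the universal property, and the only care needed is bookkeeping. The one point I would make sure to spell out is that everything takes place in the category $\Mod(R)$ of $R$-modules, so I must check that the mediating morphism produced by the pullback's universal property is genuinely an $R$-module morphism and not merely a natural transformation of underlying functors. This is automatic because $\Concl_\mathcal{A}(R)$ is a pullback computed in $\Mod(R)$ (limits in $\Mod(R)$ being computed pointwise, as recalled in Section~\ref{ss:monads-modules}), so the universal property holds internally to $\Mod(R)$ and the bijection is one of $R$-module morphisms throughout.
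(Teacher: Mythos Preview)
Your proposal is correct and matches the paper's approach: the paper does not give a separate proof for this lemma, since the statement itself already invokes the universal property of the pullback as its justification. You have simply spelled out the standard bijection between morphisms into a pullback and compatible cones, together with the observation that the triangle condition \eqref{eq:def-action-pbs} pins down the $\mathcal{V}(R)$-component to be $\hyp_\mathcal{A}(R)$; your remark that the universal property holds in $\Mod(R)$ (because the pullback is taken there) is the only point requiring care, and you handle it correctly.
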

\begin{lemma}
\label{l:alt-commut-action}
Using this alternative definition of action,
a morphism between models $R$ and $T$ of a reduction signature $\mathcal{S}=(\Sigma,\mathfrak{R})$ is a morphism $f$ of reduction $\Sigma$-models making the following diagram commute, for any reduction rule $\mathcal{A}=(\mathcal{V}, (n_i,l_i,r_i)_{i\in I},(n,l,r))$ of $ \mathfrak{R}$:
\[
  \xymatrix{
  \Hyp_\mathcal{A}(R) \ar[r] \ar[d]_{\Hyp_\mathcal{A}(f)} & \Redof{R}^{(n)}\ar[d]^{\Redof{f}^{(n)}}
  \\
  \Hyp_\mathcal{A}(T) \ar[r] & \Redof{T}^{(n)}
  }
  \]
\end{lemma}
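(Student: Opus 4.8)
The plan is to transport the characterization of model morphisms from Definition~\ref{d:model-mor} across the bijection of Lemma~\ref{l:alt-def-action}. Recall that under that bijection an action $\tau : \Hyp_\mathcal{A}(R) \to \Concl_\mathcal{A}(R)$ corresponds to the $R$-module morphism $\sigma : \Hyp_\mathcal{A}(R) \to \Redof{R}^{(n)}$ obtained by postcomposing $\tau$ with the top projection $\Concl_\mathcal{A}(R) = \pcon^*\Redof{R}^{(n)} \to \Redof{R}^{(n)}$ of the defining pullback. So I first record that this projection is natural in $R$: it assembles into a natural transformation from the functor $\Concl_\mathcal{A}$ to the functor $R \mapsto \Redof{R}^{(n)}$, and that by construction (Definition~\ref{d:red-sigma-module}) the morphism $\Concl_\mathcal{A}(f)$ is exactly the map of pullbacks induced by $\Redof{f}^{(n)}$ over the map $\mathcal{V}(f)$ on metavariables. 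In particular $\Concl_\mathcal{A}(f)$ commutes with both projections, the one to $\Redof{\cdot}^{(n)}$ and the one to $\MVar_\mathcal{A}(\cdot)=\mathcal{V}(\cdot)$.

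Second, I would invoke the universal property of the pullback defining $\Concl_\mathcal{A}(T)$: its two projections, to $\Redof{T}^{(n)}$ and to $\mathcal{V}(T)$, are jointly monic. Hence the square of Definition~\ref{d:model-mor} (whose horizontal arrows are the actions $\tau_R$, $\tau_T$ and whose vertical arrows are $\Hyp_\mathcal{A}(f)$, $\Concl_\mathcal{A}(f)$) commutes if and only if its two composites $\Hyp_\mathcal{A}(R) \to \Concl_\mathcal{A}(T)$ agree after postcomposition with each of these two projections.

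Third, I analyze the two projections separately. Postcomposing with the projection $\concl_\mathcal{A}(T):\Concl_\mathcal{A}(T)\to\mathcal{V}(T)$ yields, on both paths, the projection $\hyp_\mathcal{A}(R)$ followed by $\mathcal{V}(f)$: this uses that actions live in the slice over $\MVar_\mathcal{A}$ (so $\concl_\mathcal{A}(R)\circ\tau_R=\hyp_\mathcal{A}(R)$, and likewise for $T$) together with the compatibility of $\Hyp_\mathcal{A}(f)$ and $\Concl_\mathcal{A}(f)$ with the projections to $\mathcal{V}$, both of which are part of the functoriality established in Section~\ref{ss:hyp-concl-functors}. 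Thus the $\mathcal{V}$-component always agrees, independently of $f$. Postcomposing with the projection to $\Redof{T}^{(n)}$, on the other hand, turns $\tau_R$ into $\sigma_R$, turns $\tau_T$ into $\sigma_T$, and turns $\Concl_\mathcal{A}(f)$ into $\Redof{f}^{(n)}$; so agreement after this second projection is precisely the commutation of the square displayed in the statement. Combining the two points, the $\Concl$-square of Definition~\ref{d:model-mor} commutes if and only if the $\Redof{\cdot}^{(n)}$-square of the statement does, which is the claimed reformulation.

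I expect the only genuine subtlety --- the main obstacle --- to be the bookkeeping of the reindexing along $f$: every module on the $T$-side must be read as a reindexed $R$-module $f^*(-)$, and I must check that the projection $\Concl_\mathcal{A}(T) \to \Redof{T}^{(n)}$ and the functorial action $\Redof{f}^{(n)}$ are compatible with this reindexing, so that the factorization used above is exactly the one delivered by Definition~\ref{d:red-sigma-module} and Proposition~\ref{prop:hyp-concl-preserve-mon}. Once this identification is in place, the joint-monicity argument is immediate and the equivalence follows.
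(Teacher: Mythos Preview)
The paper does not give a proof of this lemma; it is stated immediately after Lemma~\ref{l:alt-def-action} and before the proof of Lemma~\ref{l:normalized-red-rule}, with no argument in between. Your proposal is therefore not to be compared against an existing proof but assessed on its own.

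Your argument is correct and is exactly the expected one. The essential content is the joint monicity of the two projections out of the pullback $\Concl_\mathcal{A}(T)=\pcon^*\Redof{T}^{(n)}$: the square of Definition~\ref{d:model-mor} commutes if and only if both projected squares do; the $\mathcal{V}$-projected square commutes automatically because actions live in the slice over $\MVar_\mathcal{A}$ and because $\Hyp_\mathcal{A}(f)$, $\Concl_\mathcal{A}(f)$ are over $\mathcal{V}(f)$ by construction; and the $\Redof{\cdot}^{(n)}$-projected square is precisely the one in the statement, since $\sigma$ is by definition $\tau$ postcomposed with the top pullback projection and $\Concl_\mathcal{A}(f)$ is the map of pullbacks induced by $\Redof{f}^{(n)}$ and $\mathcal{V}(f)$. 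The reindexing bookkeeping you flag is genuine but routine: it is exactly what is packaged in Definition~\ref{d:red-sigma-module} and Proposition~\ref{prop:hyp-concl-preserve-mon}, and once you note that reindexing $f^*$ preserves limits (being a right adjoint between module categories), the identification of $\Concl_\mathcal{A}(f)$ with the induced map of pullbacks is immediate.
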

We now prove Lemma~\ref{l:normalized-red-rule} using these alternative definitions:
\begin{proof}[Proof of Lemma \ref{l:normalized-red-rule}]
The reduction rule $\mathcal{A}'=(\mathcal{V}',(n_j,\phyp_j')_{j\in J},(0,\pcon'))$ is defined as follows:
\begin{itemize}
    \item \MMM:
    $\mathcal{V}' = \mathcal{V}\times \Theta^n$
    \item \HHH:
    For each $j\in J$,
    $\phyp_j':\mathcal{V}'\to \Theta^{(n_j)}\times\Theta^{(n_j)}$ is defined as the composition of $\pi_1 : \mathcal{V}\times \Theta^n\to\mathcal{V}$ with $\phyp_j:\mathcal{V}\to \Theta^{(n_j)}\times\Theta^{(n_j)}$.
    \item \CCC:
    The morphism     $\pcon':\mathcal{V}\times\Theta^n \to \Theta\times\Theta$
    maps a model $R$ of $\Sigma$ to the $n^{th}$ transpose of $\pcon:\mathcal{V}(R) \to R^{(n)}\times R^{(n)}$ with respect to the adjunction $\_\times R \dashv \_'$ in $\Mod(R)$ described in \cite[Proposition 13]{ahrens_et_al:LIPIcs:2018:9671}.
\end{itemize}
Now, consider an action for the reduction rule $\mathcal{A}$ in a reduction $\Sigma$-model $R$: it is an 
$R$-module morphism $\tau :\Hyp_\mathcal{A}(R) \to \Redof{R}^{(n)}$ such that the following square commutes:
\[
\xymatrix{
\Hyp_\mathcal{A}(R) \ar[d] \ar[rr]^\tau && \Redof{R}^{(n)}\ar[d]^{\redof{R}^{(n)}} \\
\mathcal{V}(R)\ar[rr]_{\pcon} && R^{(n)}\times R^{(n)}
}
\]
Equivalently, through the adjunction mentioned above, it is given by an $R$-module morphism $\tau^* : \Hyp_R\times R^m \to M$ such that the following diagram commutes:
\[
\xymatrix{
\Hyp_\mathcal{A}(R) \times R^n \ar[d] \ar[rr]^{\tau^*} && \Redof{R} \ar[d]^{\redof{R}} \\
\mathcal{V}(R)\times R^n\ar[rr]_{\pcon^*} && R \times R
}
\]
This is exactly the definition of an action of $\mathcal{A}'$. It is then straightforward to check that one action is preserved by a reduction monad morphism if and only if the other one is.
\end{proof}

\begin{corollary}
For each reduction signature, there exists a reduction signature yielding an isomorphic category of models and whose underlying reduction rules are all normalized.
\end{corollary}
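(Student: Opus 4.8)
The plan is to apply Lemma~\ref{l:normalized-red-rule} termwise to the family of reduction rules and assemble the resulting pointwise bijections into an isomorphism of categories. Write the given reduction signature as $\mathcal{S} = (\Sigma, \mathfrak{R})$ with $\mathfrak{R} = (\mathcal{A}_i)_{i\in I}$. For each index $i$, Lemma~\ref{l:normalized-red-rule} furnishes a normalized reduction rule $\mathcal{A}_i'$ over the \emph{same} signature $\Sigma$ for monads. Setting $\mathfrak{R}' := (\mathcal{A}_i')_{i\in I}$ yields a reduction signature $\mathcal{S}' := (\Sigma, \mathfrak{R}')$ whose reduction rules are all normalized by construction. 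It then remains to exhibit an isomorphism $\GMon^{\mathcal{S}} \cong \GMon^{\mathcal{S}'}$.

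First I would describe the comparison functor on objects. Both signatures share the underlying signature for monads $\Sigma$, so a model of either is a reduction $\Sigma$-model together with a choice of action for every reduction rule of the respective family. A model of $\mathcal{S}$ is thus a reduction $\Sigma$-model $R$ together with a tuple $(\tau_i)_{i\in I}$ of actions of the $\mathcal{A}_i$; by the first clause of Lemma~\ref{l:normalized-red-rule}, for every $i$ there is a bijection between actions of $\mathcal{A}_i$ in $R$ and actions of $\mathcal{A}_i'$ in $R$. I send $(R,(\tau_i)_{i\in I})$ to the same reduction $\Sigma$-model $R$ equipped with the tuple $(\tau_i')_{i\in I}$ of corresponding actions under these bijections. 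This assignment is a bijection on objects, because the reduction $\Sigma$-model component is untouched and the action data are transported componentwise by bijections.

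On morphisms the functor acts as the identity: a morphism of models of $\mathcal{S}$ is a morphism $f$ of reduction $\Sigma$-models commuting with the action of each $\mathcal{A}_i$, and I take it to the very same underlying $f$. This is well defined by the second clause of Lemma~\ref{l:normalized-red-rule}: $f$ preserves the action of $\mathcal{A}_i$ if and only if it preserves the corresponding action of $\mathcal{A}_i'$, so $f$ is a morphism of models of $\mathcal{S}$ precisely when it is a morphism of models of $\mathcal{S}'$. Since the assignment is the identity on underlying reduction $\Sigma$-model morphisms, it trivially preserves identities and composition, and it has an evident inverse (run each bijection of Lemma~\ref{l:normalized-red-rule} backwards). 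Hence it is an isomorphism of categories.

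The only point requiring care—and the closest thing to an obstacle—is that the equivalence of Lemma~\ref{l:normalized-red-rule} is stated rule by rule, whereas a model carries an action of \emph{every} rule simultaneously. I must therefore check that a model of a reduction signature is precisely the data of its underlying reduction $\Sigma$-model together with an independent action for each rule, and that the morphism condition is the conjunction over $i\in I$ of the per-rule commutation squares, so that the bijections may be applied componentwise and in parallel over $I$. Both facts are immediate from the definitions of model and of morphism of a reduction signature, so the pointwise data of Lemma~\ref{l:normalized-red-rule} glue without interference into the desired isomorphism.
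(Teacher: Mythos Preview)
Your proof is correct and follows exactly the same approach as the paper: replace each reduction rule by the normalized one provided by Lemma~\ref{l:normalized-red-rule}. The paper's proof is a one-liner (``Just replace each reduction rule with the one given by Lemma~\ref{l:normalized-red-rule}''), whereas you have spelled out explicitly why this yields an isomorphism of categories of models; your additional verification is sound and merely makes explicit what the paper leaves to the reader.
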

\begin{proof}
Just replace each reduction rule with the one given by Lemma~\ref{l:normalized-red-rule}.
\end{proof}
Thanks to this lemma, we assume in the following that all the reduction rules of the given signature $\mathcal{S}$ are normalized. 
\subsection{Models as Vertical Algebras}
\label{ss:alt-def-models}
In this section, we give an alternative definition for the category of models of $\mathcal{S}$ that is convenient in the proof of effectivity.

First we rephrase the notion of action of a reduction rule as an algebra structure for a suitably chosen endofunctor.
Indeed, an action of a normalized reduction rule $\mathcal A=(\mathcal{V},(n_j,\phyp_j)_{j\in J},(0,\pcon))$ in a reduction  $\Sigma$-model $R$ is given by a $R$-module morphism $\tau :\Hyp_\mathcal{A}(R) \to \Redof{R}$ such that the following square commutes:
\[
\xymatrix{
\Hyp_\mathcal{A}(R) \ar[d] \ar[rr]^\tau && \Redof{R}\ar[d]^{\redof{R}} \\
\mathcal{V}(R)\ar[rr]_{p} && R\times R
}
\]
We can rephrase this commutation by stating that this morphism $\tau$ is a morphism in the slice category $\Mod(\monadfromred{R})/\monadfromred{R}^2$ from an object that we denote by $F_{\mathcal{A}|\monadfromred{R}}(\Redof{R},\redof{R})$, to $(\Redof{R},\redof{R})$.
Actually, the domain is functorial in its argument, and thus
the action $\tau$ can be thought of as an algebra structure on $(\Redof{R},\redof{R})$:

\begin{lemma}
\label{l:def-alt-action}
Given any model $R$ of $\Sigma$,
the assignment $(M,p:M\to R\times R) \mapsto F_{\mathcal{A}|R}(M,\pcon)$ yields an endofunctor $F_{\mathcal{A}|R}$ on $\Mod(R)/R^2$. An action of $\mathcal{A}$ in a reduction $\Sigma$-model $R$ is exactly the same as an algebra structure for this endofunctor on $(\Redof{R},\redof{R})\in \Mod(R)/R^2$.

Furthermore, the assignment $R\mapsto
F_{\mathcal{A}|\monadfromred{R}}(\Redof{R},\redof{R})$ yields an endofunctor
$F_\mathcal{A}$ on the category of reduction $\Sigma$-models. This functor
preserves the underlying model of $\Sigma$, in the sense that
 $\mathcal{U}^\Sigma\cdot F_\mathcal{A}=\mathcal{U}^\Sigma$.
 \end{lemma}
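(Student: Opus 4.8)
The plan is to realise $F_{\mathcal{A}|R}$ by exactly the fibred‑product construction that defines $\Hyp_\mathcal{A}$, but with the module of reductions $\Redof{R}$ promoted to a formal variable. Concretely, for an object $(M,q:M\to R\times R)$ of $\Mod(R)/R^2$ I would set
\[
  F_{\mathcal{A}|R}(M,q) := \Bigl(\textstyle\prod_{j\in J}{}_{\mathcal{V}(R)}\,\phyp_j^{*}\bigl(M^{(n_j)}\bigr),\ \pcon\circ\pi\Bigr),
\]
where $M^{(n_j)}$ is the iterated derivative of $M$, equipped with the morphism $q^{(n_j)}:M^{(n_j)}\to R^{(n_j)}\times R^{(n_j)}$ obtained by deriving $q$ (derivation being computed pointwise, it preserves the product on the target); where $\phyp_j^{*}$ denotes pullback along the fixed morphism $\phyp_j:\mathcal{V}(R)\to R^{(n_j)}\times R^{(n_j)}$; and where $\pi$ is the canonical projection of the fibred product onto $\mathcal{V}(R)$. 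The decisive feature is that the structure map $\pcon\circ\pi$ into $R^2$ factors through $\mathcal{V}(R)$ and hence is independent of the algebra argument, which is what makes the target in the slice canonical.

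First I would verify that $F_{\mathcal{A}|R}$ is genuinely an endofunctor, and this is the step I expect to be the main obstacle, since it is where the construction must be shown to be functorial in the \emph{module variable} $M$ rather than merely defined objectwise. Because $\Mod(R)$ has all limits, computed pointwise, each of the three operations involved---iterated derivation (an endofunctor on $\Mod(R)$), pullback along the fixed $\phyp_j$, and the fibred product over $\mathcal{V}(R)$---is functorial in $M$. A slice morphism $g:(M,q)\to(M',q')$ thus induces $g^{(n_j)}:M^{(n_j)}\to M'^{(n_j)}$ commuting with the projections to $R^{(n_j)}\times R^{(n_j)}$, hence a morphism on the pullbacks over $\mathcal{V}(R)$, hence one on the fibred products; the latter commutes with $\pi$ and therefore with $\pcon\circ\pi$, so it lies in $\Mod(R)/R^2$. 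Preservation of identities and composition is inherited from the three constituent functors, the only care required being the bookkeeping that derivation commutes with products and pullbacks so that every structure map lands where intended.

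The identification of actions with algebra structures is then an unfolding. Taking $(M,q)=(\Redof{R},\redof{R})$ recovers $\Hyp_\mathcal{A}(R)$ on the nose as the underlying module of $F_{\mathcal{A}|R}(\Redof{R},\redof{R})$, with structure map $\pcon\circ\hyp_\mathcal{A}(R)$. An $F_{\mathcal{A}|R}$-algebra structure on $(\Redof{R},\redof{R})$ is precisely a morphism $\tau:\Hyp_\mathcal{A}(R)\to\Redof{R}$ in $\Mod(R)/R^2$, which is exactly the commuting square displayed in the statement (equivalently, the normalized action of Lemma~\ref{l:alt-def-action}); so actions of $\mathcal{A}$ in $R$ coincide with such algebra structures.

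Finally, for the global endofunctor $F_\mathcal{A}$ I would send a reduction $\Sigma$-model $R$ to $F_{\mathcal{A}|\monadfromred R}(\Redof{R},\redof{R})$, that is, to the reduction $\Sigma$-model with the same underlying monad $\monadfromred{R}$ and the same $\Sigma$-action, but with module of reductions $\Hyp_\mathcal{A}(R)$ and source--target morphism $\pcon\circ\hyp_\mathcal{A}(R)$. Since neither the monad nor the $\Sigma$-action is touched, $\mathcal{U}^\Sigma\cdot F_\mathcal{A}=\mathcal{U}^\Sigma$ is immediate. The action of $F_\mathcal{A}$ on morphisms, and its functoriality, are then supplied directly by Proposition~\ref{prop:hyp-concl-preserve-mon}: because $\Hyp_\mathcal{A}:\GMon^\Sigma\to\LMod$ is already a $\GMon^\Sigma$-module, a morphism $f:R\to S$ yields $\Hyp_\mathcal{A}(f)$ over the same monad morphism, and its compatibility with $\pcon\circ\hyp_\mathcal{A}$ follows from naturality of the $\Sigma$-module morphism $\pcon$. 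Thus the global statement reduces entirely to the functoriality already established for $\Hyp_\mathcal{A}$ together with the local construction above.
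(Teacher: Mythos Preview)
Your proposal is correct and follows essentially the same approach as the paper. The paper's own proof is the single line ``This is a consequence of the functoriality of $\Hyp_\mathcal{A}$, as noticed in Section~\ref{ss:hyp-concl-functors}'', and your argument unpacks precisely that: you make explicit the fibred-product formula for $F_{\mathcal{A}|R}$ with $\Redof{R}$ replaced by a formal module variable, verify functoriality from the constituent limit operations, and then reduce the global statement to Proposition~\ref{prop:hyp-concl-preserve-mon}---which is exactly what the paper's one-liner is pointing to.
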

 \begin{proof}
 This is a consequence of the functoriality of $\Hyp_\mathcal{A}$, as noticed in Section~\ref{ss:hyp-concl-functors}.
 \end{proof}
 
 Now, we give our alternative definition of the category of models:
 \begin{proposition}
\label{prop:models-vert-alg}
Let $F_\mathcal{S} : \GMon^\Sigma \to \GMon^\Sigma$ be the coproduct $\coprod_i F_{\mathcal{A}_i}$. Then, the category of models of $\mathcal{S}$ is isomorphic to the \textbf{category of vertical algebras} of $F_\mathcal{S} $ defined as follows:
\begin{itemize}
    \item an object is an algebra $r:F_{\mathcal{S}}(R)\to R$ such that  $r$ is mapped to the identity by $\mathcal{U}^\Sigma$
    \item morphisms are the usual $F_\mathcal{S}$-algebra morphisms.
\end{itemize}
\end{proposition}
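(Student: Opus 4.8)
The plan is to lift the per-rule reformulation of Lemma~\ref{l:def-alt-action} to the whole signature by means of the universal property of the coproduct, and then to match morphisms on both sides using Lemma~\ref{l:alt-commut-action}.

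First I would fix the fibrational picture. Write $\underline R$ for the $\Sigma$-model underlying a reduction $\Sigma$-model $R$. The forgetful functor $\mathcal{U}^\Sigma:\GMon^\Sigma\to\Mon^\Sigma$ has as fiber over a $\Sigma$-model $A$ the category $\Mod(A)/A^2$, which is cocomplete since $\Mod(A)$ is and coproducts in a slice are computed as in the base. By Lemma~\ref{l:def-alt-action} each $F_{\mathcal{A}_i}$ is vertical, i.e.\ $\mathcal{U}^\Sigma\cdot F_{\mathcal{A}_i}=\mathcal{U}^\Sigma$; hence their coproduct must be formed among vertical endofunctors, and $F_\mathcal{S}=\coprod_i F_{\mathcal{A}_i}$ is computed fiberwise. (This is forced: in the full category $\GMon^\Sigma$ the coproduct would sum the underlying $\Sigma$-models as well, and then the verticality condition $\mathcal{U}^\Sigma(r)=\id$ on an algebra $r:F_\mathcal{S}(R)\to R$ could never hold.) Consequently $F_\mathcal{S}(R)=\coprod_i F_{\mathcal{A}_i}(R)$ lies in the fiber $\Mod(\underline R)/\underline R^2$, has the same underlying $\Sigma$-model as $R$, and comes with injections $\kappa_i:F_{\mathcal{A}_i}(R)\to F_\mathcal{S}(R)$; a vertical morphism out of it is exactly a morphism in that fiber.

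On objects I would argue as follows. A vertical algebra structure $r:F_\mathcal{S}(R)\to R$ is, by the previous paragraph, a morphism out of the fiberwise coproduct in $\Mod(\underline R)/\underline R^2$, hence by the universal property of the coproduct the same as a family $(r_i)_{i\in I}$ of fiber morphisms $r_i:F_{\mathcal{A}_i}(R)\to R$, that is, of vertical $F_{\mathcal{A}_i}$-algebra structures on $R$. By Lemma~\ref{l:def-alt-action} each $r_i$ is precisely an action of the reduction rule $\mathcal{A}_i$ in $R$, and a family of such actions (one per rule) is by definition an action of $\mathcal{S}$, i.e.\ the structure of a model of $\mathcal{S}$ on $R$. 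This sets up a bijection, natural in $R$, between vertical $F_\mathcal{S}$-algebra structures and models of $\mathcal{S}$ having underlying reduction $\Sigma$-model $R$. On morphisms I would use Lemma~\ref{l:alt-commut-action}: a morphism of vertical $F_\mathcal{S}$-algebras $f:(R,r)\to(T,s)$ is a morphism of reduction $\Sigma$-models with $f\circ r=s\circ F_\mathcal{S}(f)$, and precomposing this equality with $\kappa_i$ and using naturality of the injections, $F_\mathcal{S}(f)\circ\kappa_i=\kappa_i\circ F_{\mathcal{A}_i}(f)$, the single square commutes if and only if $f\circ r_i=s_i\circ F_{\mathcal{A}_i}(f)$ for every $i$. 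By Lemma~\ref{l:alt-commut-action} this family of commutations is exactly the condition that $f$ be a morphism of models of $\mathcal{S}$. Since both constructions leave the underlying morphism of reduction $\Sigma$-models untouched, they preserve identities and composition and are mutually inverse, yielding the desired isomorphism of categories.

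The step I expect to be the main obstacle is the verticality bookkeeping around the coproduct in the second paragraph: one has to be precise that $\coprod_i F_{\mathcal{A}_i}$ is the coproduct of vertical endofunctors—computed in the cocomplete fibers $\Mod(A)/A^2$—rather than in $\GMon^\Sigma$ itself, and that vertical maps out of $F_\mathcal{S}(R)$ are detected by precomposition with the $\kappa_i$. Once this is settled, the object and morphism correspondences are a routine transcription through Lemmas~\ref{l:def-alt-action} and~\ref{l:alt-commut-action}.
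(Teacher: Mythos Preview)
The paper does not give an explicit proof of this proposition: it states it and immediately adopts it as the working definition, relying on the preceding Lemma~\ref{l:def-alt-action} as the per-rule ingredient. Your proposal is correct and spells out precisely the argument the paper leaves implicit; in particular, your observation that the coproduct $\coprod_i F_{\mathcal{A}_i}$ must be read fiberwise (in $\Mod(A)/A^2$, among vertical endofunctors) rather than in $\GMon^\Sigma$ is right and is confirmed later in the paper by Lemma~\ref{l:fiber-cat-model}, where $F_{\mathcal{S}|R}=\coprod_i F_{\mathcal{A}_i|R}$ is computed in the slice. Your object-level and morphism-level correspondences via Lemma~\ref{l:def-alt-action} and Lemma~\ref{l:alt-commut-action} are exactly the intended route; after normalization ($n=0$) the square in Lemma~\ref{l:alt-commut-action} is literally the $F_{\mathcal{A}_i}$-algebra morphism square, so the identification is immediate.
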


We adopt this definition in the following. We show now a property of the category of models that will prove useful in the proof of effectivity:
\begin{lemma}
\label{prop:model-model-fibration}
The forgetful functor from the category of models of $\mathcal{S}$ to the category of   
 models of $\Sigma$ is a fibration.
\end{lemma}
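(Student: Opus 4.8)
The plan is to exploit the description of the category of models of $\mathcal S$ as the category of vertical $F_{\mathcal S}$-algebras (Proposition~\ref{prop:models-vert-alg}), and to realize the forgetful functor to $\Mon^\Sigma$ as the composite
\[
\GMon^{\mathcal S}\longrightarrow \GMon^\Sigma \xrightarrow{\ \mathcal U^\Sigma\ }\Mon^\Sigma .
\]
The proof then splits into two parts: first I would show that $\mathcal U^\Sigma$ is itself a fibration, and second that a vertical $F_{\mathcal S}$-algebra structure can be transported along the cartesian lifts of $\mathcal U^\Sigma$. Combining the two yields a cartesian lift in $\GMon^{\mathcal S}$ over any morphism of $\Mon^\Sigma$.

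For the first part, recall that $\GMon^\Sigma$ is defined as the pullback of the forgetful functor $\GMon\to\Mon$ along the forgetful functor $\Mon^\Sigma\to\Mon$. Since fibrations are stable under change of base, it suffices to check that $\GMon\to\Mon$ is a fibration. Given a monad morphism $f:A\to N$ and a reduction monad $M$ with underlying monad $N$, I would build the cartesian lift $f^*M$ as the reduction monad over $A$ whose module of reductions is the pullback, in $\Mod(A)$ (computed pointwise), of $f^*\redof{M}:f^*\Redof{M}\to f^*(N\times N)$ along the pair $f\times f:A\times A\to f^*N\times f^*N$ induced by the module morphism $f:\Theta(A)\to f^*\Theta(N)$; the left projection provides $\redof{f^*M}$. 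This uses the fibration $\LMod\to\Mon$ to reindex $\Redof{M}$, and the universal property of the pullback immediately gives the required factorization, so the evident morphism $\bar f:f^*M\to M$ is cartesian over $\Mon$, hence over $\Mon^\Sigma$ after base change.

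For the second part, let $(M,r)$ be a model of $\mathcal S$ over $N$, i.e.\ a vertical algebra $r:F_{\mathcal S}(M)\to M$, and let $\bar f:f^*M\to M$ be the cartesian lift just constructed. Because $F_{\mathcal S}$ preserves the underlying model of $\Sigma$, namely $\mathcal U^\Sigma\cdot F_{\mathcal S}=\mathcal U^\Sigma$ (Lemma~\ref{l:def-alt-action}), the morphism $F_{\mathcal S}(\bar f):F_{\mathcal S}(f^*M)\to F_{\mathcal S}(M)$ again lies over $f$ while its source lies over $A$; hence the composite $r\circ F_{\mathcal S}(\bar f)$ lies over $f$ and, by cartesianness of $\bar f$, factors uniquely as $\bar f\circ r'$ with $r'$ vertical. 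This $r'$ is a vertical $F_{\mathcal S}$-algebra structure on $f^*M$, so $(f^*M,r')$ is a model of $\mathcal S$ over $A$, and $\bar f$ is by construction a morphism of models. Concretely, for a single rule $\mathcal A$, the transported action sends a hypothesis datum $(\xi,(\mu_i)_i)$ over $f^*M$ to $r$ applied to its image $(\mathcal V(f)(\xi),(\mu_i)_i)$ over $M$, paired with the source/target $\pcon(\xi)$; this lands in the pullback $\Redof{f^*M}$ precisely because $\pcon$ and the $\phyp_i$ are morphisms of $\Sigma$-modules, so that $\redof{M}$ of the transported reduction equals $(f\times f)\circ\pcon(\xi)$. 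Cartesianness of $\bar f$ in $\GMon^{\mathcal S}$ then follows from the standard fibred-algebra argument: any model morphism into $(M,r)$ whose base map factors through $f$ lifts uniquely to a morphism in $\GMon^\Sigma$ by cartesianness there, and this lift is automatically an algebra morphism by the uniqueness part of the cartesian universal property.

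I expect the main subtlety to lie in the first part — correctly constructing and verifying the cartesian lifts of reduction monads as the pullback of the module of reductions — together with the clean observation that the transport in the second part uses only that $F_{\mathcal S}$ is \emph{vertical}. In particular, it does \emph{not} require $F_{\mathcal S}$ (equivalently the $\Hyp_{\mathcal A}$) to preserve cartesian morphisms, which in general fails, since the module of metavariables $\mathcal V$ genuinely changes under reindexing along $f$ (the comparison $\mathcal V(f)$ is not invertible). The argument must therefore be organized so as to avoid any appeal to such preservation and rely only on the identity $\mathcal U^\Sigma\cdot F_{\mathcal S}=\mathcal U^\Sigma$.
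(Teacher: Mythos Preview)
Your proposal is correct and follows essentially the same route as the paper: both factor through the general fact that vertical algebras for an endofunctor $F$ satisfying $p\cdot F=p$ are fibered over the base (your ``second part'' is exactly the paper's Lemma~\ref{l:vert-alg-fibration}), and both reduce the remaining work to showing that $\mathcal U^\Sigma$ is a fibration via pullback stability from $\GMon\to\Mon$. The only minor difference is that you construct the cartesian lifts in $\GMon\to\Mon$ explicitly as pullbacks of modules, whereas the paper recognizes $\GMon\to\Mon$ abstractly as a pullback of the codomain fibration $\codom:V(\LMod)\to\LMod$ (which is a fibration because $\LMod$ has fibered finite limits); these are two packagings of the same construction.
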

The proof relies on some additional lemmas, in particular the following one, that we will specialize by taking $p=\mathcal{U}^\Sigma$ 
(requiring to show that $\mathcal{U}^\Sigma:\GMon^\Sigma\to \Mon^\Sigma$ is a fibration)
and $F=F_\mathcal{S}$:
\begin{lemma}
\label{l:vert-alg-fibration}
  Let $p: E \to B$ be a fibration and $F$ and endofunctor on $E$ satisfying $p\cdot F = p$.
  Then the category of vertical algebras of $F$ is fibered over $B$.
\end{lemma}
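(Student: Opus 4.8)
The plan is to equip the category of vertical algebras of $F$ with the evident projection $(R,r)\mapsto p(R)$ on objects and $g\mapsto p(g)$ on morphisms (well defined, since an $F$-algebra morphism is in particular a morphism of $E$), and then to show directly that this projection is a fibration by constructing a cartesian lift of every morphism of $B$ out of the cartesian lifts supplied by $p$.

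First I would fix a vertical algebra $(S,s)$ and a morphism $u : b \to p(S)$ in $B$, and choose a cartesian lift $\overline{u} : u^*S \to S$ of $u$ in $E$. The hypothesis $p\cdot F = p$ guarantees that $F(u^*S)$ and $F(S)$ lie over $b$ and $p(S)$ respectively, so that $F(\overline{u})$ is again a morphism over $u$. Consequently $s\circ F(\overline{u}) : F(u^*S) \to S$ is a morphism over $u$ (using that $s$ is vertical), and the universal property of the cartesian morphism $\overline{u}$ factors it uniquely as $\overline{u}\circ r$ with $r : F(u^*S) \to u^*S$ vertical. This $r$ is the sought vertical $F$-algebra structure on $u^*S$, and by construction $\overline{u}$ is promoted to a morphism of $F$-algebras $(u^*S,r)\to (S,s)$.

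The heart of the argument is to check that $\overline{u}$ is cartesian in the category of vertical algebras. Given any vertical algebra $(T,t)$, an algebra morphism $g : (T,t)\to (S,s)$, and a factorization $p(g) = u\circ w$ in $B$, cartesianness of $\overline{u}$ in $E$ yields a unique $\tilde{h} : T \to u^*S$ over $w$ with $\overline{u}\circ\tilde{h}=g$. It then remains to verify that $\tilde{h}$ is an algebra morphism, i.e.\ that $\tilde{h}\circ t = r\circ F(\tilde{h})$. Both composites are morphisms $F(T)\to u^*S$ over $w$ (invoking $p\cdot F = p$ together with the verticality of $r$ and $t$), and post-composing with $\overline{u}$ sends each of them to $s\circ F(g)$: the left side because $g$ is an algebra morphism, the right side because $\overline{u}\circ r = s\circ F(\overline{u})$ and $F(\overline{u}\circ\tilde{h}) = F(g)$. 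The uniqueness clause of the cartesian property of $\overline{u}$ then forces the two composites to coincide, so $\tilde{h}$ is indeed an algebra morphism; its uniqueness among algebra morphisms is inherited from its uniqueness as a morphism in $E$.

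The step I expect to be the main obstacle is exactly this last verification—transporting the cartesian lift from $E$ to a cartesian lift in the algebra category. The crucial role of $p\cdot F = p$ is to keep all the arrows in sight over the correct base morphisms, so that the single universal property of $\overline{u}$ can be used twice: once to define the algebra structure $r$, and once both to show that the reindexing map is an algebra morphism and to obtain the uniqueness of the lift. Functoriality of $F$, applied to the identity $\overline{u}\circ\tilde{h}=g$, is what makes the two candidate composites agree after post-composition with $\overline{u}$.
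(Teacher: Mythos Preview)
Your proof is correct and follows essentially the same approach as the paper: construct the algebra structure on $u^*S$ by factoring $s\circ F(\overline{u})$ through the cartesian lift $\overline{u}$, and then verify cartesianness in the algebra category by using the uniqueness clause of $\overline{u}$ again to show the induced factorization is an algebra morphism. Your account is in fact slightly more explicit than the paper's about why the relevant composites lie over the correct base morphism, but the argument is the same.
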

\begin{proof}

Let $r : F(R)\to R$ be an algebra over $X \in B$. Let $a:Y\to X$ be a morphism
in $B$. Let $\overline{a} : a^* R \to R$ be the associated cartesian morphism in $E$.
We define the reindexing of $r$ along $a$ as follows: the base object is $a^* R$, and the algebra structure 
$\rho:F(a^* R)\to R$ is given by the unique morphism which factors 
$\xymatrix{F(a^* R) \ar[r]^{F(\overline{a})} & F(R) \ar[r]^r & R}$
through the cartesian morphism $\overline{a} : a^* R \to R$. Thus, the square
\[
\xymatrix{
F(a^*R) \ar[r]^{F(\overline{a})} 
\ar[d]_\rho
&
F(R)
\ar[d]^r
\\
a^* R \ar[r]_{\overline{a}}
&
R
}
\]
commutes, so $\overline{a}$ is a morphism of algebras between $\rho$
and $r$. 
Next, we prove that it is a cartesian morphism: let $s:F(S)\to S$ be a vertical algebra over an object $Z$ of $B$, and $v:s\to r$ be a morphism of algebras such that there exists $b:Z\to Y$ such that $p(v)=\xymatrix{Z \ar[r]^b & Y \ar[r]^a & X}$. 
We need to show that there exists a unique algebra morphism $w:s\to \rho$ such that $v=\overline{a}\circ w$ and $p(w)=b$. Uniqueness follows from the fact that $\overline{a}$ is cartesian for the fibration $p:E\to B$.
Moreover, as $\overline{a}$ is cartesian, we get a morphism $w:S\to a^* R$. We turn it into an algebra morphism by showing that the following square commutes:
\[
\xymatrix{
F(S)\ar[r]^{F(w)}\ar[d]_s & F(a^*R)\ar[d]^\rho
\\
S \ar[r]_w &  a^*R
}
\]
As $\overline{a}$ is cartesian and both $w$ and $F(w)$ are sent to $b$ by $p$, it is enough to show equalities of both morphisms after postcomposing with $\overline{a}$. This follows from $v$ being an algebra morphism.
\end{proof}
We want to apply this lemma for proving Lemma~\ref{prop:model-model-fibration}. 
We thus need to show 
 that $\mathcal{U}^\Sigma : \GMon^\Sigma \to \Mon^\Sigma$ is a fibration:
\begin{lemma}
\label{l:fibration-U}
\label{l:fibration-trans}
The forgetful functors $\GMon \to \Mon$ and $\mathcal{U}^\Sigma : \GMon^\Sigma \to \Mon^\Sigma$ are fibrations.
\end{lemma}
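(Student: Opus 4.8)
The plan is to establish the two fibrations in turn, reducing the second to the first. For the forgetful functor $\GMon\to\Mon$, I would construct cartesian lifts explicitly by pulling back the reduction structure along a monad morphism. Concretely, fix a reduction monad $R$ with underlying monad $\monadfromred{R}$ and a monad morphism $a : T \to \monadfromred{R}$. By reindexing, $a$ induces a $T$-module morphism $a^*\redof{R} : a^*\Redof{R} \to a^*(\monadfromred{R}\times\monadfromred{R})$, and it also induces the canonical $T$-module morphism (which I denote $a\times a$) from $T\times T$ to $a^*(\monadfromred{R}\times\monadfromred{R})$, coming from the $T$-module morphism $\Theta(T)\to a^*\Theta(\monadfromred{R})$ and the fact that reindexing commutes with products. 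I would then define the reduction monad $a^*R$ to have underlying monad $T$ and module of reductions $\Redof{a^*R}$ given by the pullback of these two morphisms in the category $\Mod(T)$:
\[
\xymatrix{
\Redof{a^*R}\pullback{}\ar[r]\ar[d]_{\redof{a^*R}} & a^*\Redof{R}\ar[d]^{a^*\redof{R}} \\
T\times T\ar[r]_-{a\times a} & a^*(\monadfromred{R}\times\monadfromred{R})
}
\]
the left-hand projection furnishing the required module morphism $\redof{a^*R} : \Redof{a^*R}\to T\times T$.

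The candidate cartesian morphism $a^*R \to R$ is then the pair $(a,\alpha)$, where $\alpha$ is the top projection of the pullback, viewed as a natural transformation $\Redof{a^*R}\to\Redof{R}$; the compatibility with source and target demanded of a reduction monad morphism is precisely the commutation of the pullback square. To prove that $(a,\alpha)$ is cartesian, I would take any reduction monad $S$ over a monad $P$, a morphism $(g,\beta):S\to R$ of reduction monads, and a factorization $g = a\circ b$ of its underlying monad morphism through some $b:P\to T$. The map $\beta$ (suitably reindexed along $b$) together with $\redof{S}$ precomposed with $b$ are compatible over $a^*(\monadfromred{R}\times\monadfromred{R})$, so the universal property of the pullback in $\Mod(T)$ yields a unique $T$-module morphism $\gamma:\Redof{S}\to b^*\Redof{a^*R}$, and hence a unique $(b,\gamma):S\to a^*R$ with $(a,\alpha)\circ(b,\gamma)=(g,\beta)$. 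I expect the main obstacle to be the module-theoretic bookkeeping: one must check that the pointwise pullback carries a genuine $T$-module structure compatible with reindexing, and that the maps produced above are actual module morphisms (and not merely natural transformations) before the slice-categorical universal property applies. These verifications are routine but slightly delicate.

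For the second functor I would avoid repeating the construction and instead invoke stability of fibrations under base change. Recall that $\GMon^\Sigma$ is defined in Section~\ref{ss:reduction-sigma-monads} as the pullback of $\GMon\to\Mon$ along the forgetful functor $\Mon^\Sigma\to\Mon$, with $\mathcal{U}^\Sigma:\GMon^\Sigma\to\Mon^\Sigma$ the associated projection. Since the pullback of a fibration along an arbitrary functor is again a fibration, with cartesian lifts computed in the fibration leg, the first part immediately yields that $\mathcal{U}^\Sigma$ is a fibration. Concretely, given a reduction $\Sigma$-model lying over a model $M$ of $\Sigma$ and a morphism $a:M'\to M$ in $\Mon^\Sigma$, the cartesian lift is obtained by pulling back the reduction structure along the underlying monad morphism of $a$ exactly as above, and equipping the resulting reduction monad with the $\Sigma$-action already carried by $M'$.
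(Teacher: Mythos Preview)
Your argument is correct, but the route differs from the paper's in the first half. For $\GMon\to\Mon$ you build cartesian lifts by hand, pulling back $\redof{R}$ against $a\times a$ inside $\Mod(T)$ and then verifying the universal property directly. The paper instead observes that $\GMon$ sits in a pullback square
\[
\xymatrix{
\GMon \ar[r]\ar[d]\pullback{} & V(\LMod)\ar[d]^{\codom}\\
\Mon \ar[r]_-{\Theta\times\Theta} & \LMod
}
\]
where $V(\LMod)$ is the category of vertical arrows in $\LMod$. Since $\LMod\to\Mon$ has fibered finite limits, the codomain functor $\codom$ is a fibration (a standard fact, cited from Jacobs), and fibrations are stable under pullback (cited from Borceux); this immediately gives that $\GMon\to\Mon$ is a fibration without ever naming the cartesian lifts. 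Your second step---reducing $\mathcal{U}^\Sigma$ to the first case via the defining pullback of $\GMon^\Sigma$ and stability under base change---is exactly what the paper does.

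The two approaches amount to the same computation: unwinding the cartesian lifts of $\codom$ and transporting them through the pullback yields precisely your explicit construction. The paper's version is terser and offloads the ``routine but slightly delicate'' module bookkeeping you flag (compatibility of reindexing with pullbacks, etc.) onto the cited general results. Your version has the compensating virtue of making the lifted reduction monad $a^*R$ visible, which is sometimes useful when one later needs to compute with it.
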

\begin{proof}

We have the two following pullbacks:
\[
\xymatrix{
\GMon^\Sigma\ar[r]\ar[d]_{\mathcal{U}^\Sigma}\pullback{} &
\GMon\ar[r]\ar[d]\pullback{}
& V(\LMod)\ar[d]^{\codom}
\\
\Mon^\Sigma\ar[r]&
\Mon\ar[r]_{\Theta\times\Theta} &
\LMod
}
\]
where $V(\LMod)$ is the full subcategory of arrows of $\LMod$ which are vertical (that is, they are mapped to the identity monad morphism by the functor from $\LMod$ to $\Mon$),
and $\codom$ maps such an arrow to its codomain. 
By \cite[Propositions 4 and 8]{ahrens_et_al:LIPIcs:2018:9671}, the category $\LMod$ has fibered finite limits, so that 
$\codom$ is a fibration (\cite[Exercise 9.4.2 (i)]{JacobsCLTT}).

Now, Proposition 8.1.15 of \cite{borceux_1994} states that a pullback of a fibration is a fibration. Thus, the middle functor $\GMon\to\Mon$ is a fibration, and then, $\mathcal{U}^\Sigma:\GMon^\Sigma\to \Mon^\Sigma$ also is.
\end{proof}
Finally, gathering all these lemmas yields a proof that the category of models
of $\mathcal{S}$ is indeed fibered over the category of models of $\Sigma$:
\begin{proof}[Proof of Lemma~\ref{prop:model-model-fibration}]
Apply Lemma~\ref{l:vert-alg-fibration} with the fibration $p=\mathcal{U}^\Sigma$ (Lemma~\ref{l:fibration-U}) and $F=F_\mathcal{S}$.
\end{proof}

\subsection{Effectivity}
\label{ss:initiality}
In this section, we prove that $\mathcal{S}$ has an initial model, provided that
there exists an initial model of $\Sigma$.
The category of models of $\mathcal{S}$ is fibered over the category of models
of $\Sigma$. A promising candidate for the initial model is the initial object,
if it exists, in the fiber category over the initial model of $\Sigma$:
\begin{lemma}
\label{lem:initial-fibration}
Let $p:E\to B$ be a fibration, $b_0$ be an initial object in $B$ and $e_0$ be an object over $b_0$ that is initial in the fiber category over $b_0$. Then $e_0$ is initial in $E$.
\end{lemma}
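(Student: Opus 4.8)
The plan is to verify directly that $e_0$ satisfies the universal property of an initial object in $E$, by combining the initiality of $b_0$ in $B$, the cartesian lifting supplied by the fibration $p$, and the initiality of $e_0$ within the fiber over $b_0$.

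First I would fix an arbitrary object $e$ of $E$ and set $b := p(e)$. Since $b_0$ is initial in $B$, there is a unique morphism $u : b_0 \to b$. As $p$ is a fibration, I can choose a cartesian lift of $u$ at $e$, namely an object $u^* e$ over $b_0$ together with a cartesian morphism $\overline{u} : u^* e \to e$ lying over $u$. Now $u^* e$ belongs to the fiber category over $b_0$, so the initiality of $e_0$ there produces a unique vertical morphism $\phi : e_0 \to u^* e$. The composite $\overline{u} \circ \phi : e_0 \to e$ is then the desired morphism, giving existence.

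For uniqueness, I would take an arbitrary morphism $g : e_0 \to e$ of $E$. Its image $p(g) : b_0 \to b$ must coincide with $u$, again because $b_0$ is initial in $B$; hence $g$ lies over $u$. By the universal property of the cartesian morphism $\overline{u}$, and since $u = u \circ \id_{b_0}$, there is a unique vertical morphism $h : e_0 \to u^* e$ with $g = \overline{u} \circ h$. But $h$ is a morphism in the fiber over $b_0$, so initiality of $e_0$ there forces $h = \phi$, whence $g = \overline{u} \circ \phi$. This shows that the morphism $e_0 \to e$ is unique, so $e_0$ is initial in $E$.

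I do not expect any serious obstacle: the argument is a routine assembly of universal properties. The one point to handle with care is that the factorization $g = \overline{u} \circ h$ through the cartesian morphism yields a morphism $h$ that is genuinely vertical (i.e.\ lies over $\id_{b_0}$), so that the initiality of $e_0$ in the fiber---rather than in all of $E$---can be invoked; this is exactly what the defining universal property of cartesian morphisms guarantees, once one observes that $g$ lies over $u = u \circ \id_{b_0}$.
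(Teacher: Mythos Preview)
Your proof is correct and is the standard argument for this folklore result. In fact, the paper states this lemma without proof, treating it as well known; your argument supplies exactly the routine verification one would expect, combining the three universal properties (initiality in $B$, cartesian lifting, initiality in the fiber) in the natural way.
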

In the following, we thus construct the initial object in a fiber category over
a given model $R$ of $\Sigma$.
This fiber category can be characterized as a category of algebras:
\begin{lemma}
\label{l:fiber-cat-model}
The fiber category over a given model $R$ of $\Sigma$ through the fibration from models of $\mathcal{S}$ (Lemma~\ref{prop:model-model-fibration}) is the category of algebras of the endofunctor $F_{\mathcal{S}|R}=\coprod_i F_{\mathcal{A}_i|R}$ on the slice category $\Mod(R)/R^2$.
\end{lemma}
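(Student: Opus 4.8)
The plan is to unfold the fibration of Lemma~\ref{prop:model-model-fibration} and recognize its fiber over $R$ as a fiberwise algebra category; this is an instance of the general fact that the fiber over an object of the vertical-algebra fibration built in Lemma~\ref{l:vert-alg-fibration} is the category of algebras of the endofunctor induced on that fiber. Recall from Proposition~\ref{prop:models-vert-alg} that a model of $\mathcal{S}$ is a vertical $F_\mathcal{S}$-algebra $r : F_\mathcal{S}(\mathcal{R}) \to \mathcal{R}$ in $\GMon^\Sigma$, and that the fibration of Lemma~\ref{prop:model-model-fibration} sends it to $\mathcal{U}^\Sigma(\mathcal{R})$. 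Hence the fiber over $R$ consists of the vertical $F_\mathcal{S}$-algebras whose carrier lies in $(\mathcal{U}^\Sigma)^{-1}(R)$, with morphisms the $F_\mathcal{S}$-algebra morphisms lying over $\id_R$.

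First I would describe the fiber $(\mathcal{U}^\Sigma)^{-1}(R)$ explicitly. Following the pullback presentation of $\GMon^\Sigma$ used in Lemma~\ref{l:fibration-U}, a reduction $\Sigma$-model whose underlying $\Sigma$-model is exactly $R$ is nothing but a pair $(M, p : M \to R \times R)$ of an $R$-module $M$ and an $R$-module morphism $p$, and a vertical morphism between two such is exactly an $R$-module morphism commuting with the maps to $R \times R$. This identifies $(\mathcal{U}^\Sigma)^{-1}(R)$ with the slice category $\Mod(R)/R^2$.

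Next I would verify that $F_\mathcal{S}$ restricts to the endofunctor $F_{\mathcal{S}|R}$ on this fiber. Since $\mathcal{U}^\Sigma \cdot F_\mathcal{S} = \mathcal{U}^\Sigma$ by Lemma~\ref{l:def-alt-action}, $F_\mathcal{S}$ preserves the fiber over $R$, and Lemma~\ref{l:def-alt-action} identifies the restriction of each summand $F_{\mathcal{A}_i}$ with $F_{\mathcal{A}_i|R}$. As $F_\mathcal{S} = \coprod_i F_{\mathcal{A}_i}$ is a pointwise coproduct and colimits in $\Mod(R)$, hence in $\Mod(R)/R^2$, are computed pointwise, the induced endofunctor on the fiber is $\coprod_i F_{\mathcal{A}_i|R} = F_{\mathcal{S}|R}$.

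Combining the two identifications, a vertical $F_\mathcal{S}$-algebra over $R$ is exactly an $F_{\mathcal{S}|R}$-algebra $r : F_{\mathcal{S}|R}(M, p) \to (M, p)$ in $\Mod(R)/R^2$ --- verticality of $r$ being precisely the assertion that $r$ is a morphism of the fiber, i.e.\ of the slice --- and a fiber morphism is exactly an $F_{\mathcal{S}|R}$-algebra morphism. This yields the claimed isomorphism of categories. I expect the only delicate point to be the coproduct bookkeeping of the third paragraph, namely that restricting the pointwise coproduct $\coprod_i F_{\mathcal{A}_i}$ to the fiber returns the coproduct $\coprod_i F_{\mathcal{A}_i|R}$ of the restricted functors; this rests on colimits in module categories being pointwise, and the remaining verifications are a routine unfolding of definitions.
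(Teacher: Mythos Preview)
Your proposal is correct and matches the paper's approach; in fact the paper states this lemma without proof, treating it as an immediate unfolding of Proposition~\ref{prop:models-vert-alg} and Lemma~\ref{l:def-alt-action}, which is precisely what you carry out. The identification of the fiber of $\mathcal{U}^\Sigma$ over $R$ with $\Mod(R)/R^2$, the verticality of each $F_{\mathcal{A}_i}$, and the fiberwise computation of the coproduct are exactly the points one needs, and your handling of them is accurate.
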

Thus, our task is to construct the initial algebra of some specific endofunctor.
Adámek's theorem \cite{Adamek1974} provides a sufficient condition for the existence of an initial algebra:
\begin{lemma}[Adámek]
\label{l:adamek}
Let $F$ be a finitary endofunctor on a cocomplete category $C$. Then the category of algebras of $F$ has an initial object.
\end{lemma}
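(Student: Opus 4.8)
The plan is to realize the initial algebra as the colimit of the \emph{initial sequence} of $F$, following Adámek's original argument. Since $C$ is cocomplete, it possesses an initial object $0$ (the colimit of the empty diagram), so we may form the $\omega$-chain
\[
0 \xrightarrow{\,!\,} F0 \xrightarrow{F!} F^2 0 \xrightarrow{F^2 !} \cdots
\]
whose first arrow $!:0\rar F0$ is the unique map out of the initial object and whose subsequent arrows are obtained by iterating $F$. Cocompleteness again provides a colimit $L := \colim_{n} F^n 0$ together with a colimit cocone $(\iota_n : F^n 0 \rar L)_{n}$.

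The key step is to equip $L$ with an algebra structure. Because $F$ is finitary and the indexing poset $\omega$ is filtered, $F$ preserves this colimit, so $F(L)$ is canonically $\colim_n F^{n+1}0$ with coprojections $F(\iota_n):F^{n+1}0\rar F(L)$. The family $(\iota_{n+1})_n$ is a cocone on this shifted chain, and hence induces a morphism $\alpha : F(L)\rar L$ characterised by $\alpha\circ F(\iota_n)=\iota_{n+1}$. (In fact $\alpha$ is an isomorphism, since the shifted chain is cofinal, but we only need it as an algebra structure.) This yields an $F$-algebra $(L,\alpha)$.

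It remains to prove initiality. Given any algebra $\beta : F(A)\rar A$, I would build a cocone $(a_n : F^n 0 \rar A)_n$ over the initial sequence by recursion: take $a_0 : 0 \rar A$ to be the unique such map and set $a_{n+1} := \beta\circ F(a_n)$. An easy induction shows the $a_n$ commute with the chain maps, so the universal property of $L$ yields a unique mediating morphism $h : L \rar A$ with $h\circ\iota_n=a_n$. To see that $h$ is an algebra morphism, I would compare $h\circ\alpha$ and $\beta\circ F(h)$ after precomposing with each coprojection $F(\iota_n)$ of $F(L)$: the left side gives $h\circ\iota_{n+1}=a_{n+1}$, while the right side gives $\beta\circ F(h\circ\iota_n)=\beta\circ F(a_n)=a_{n+1}$. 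Since a colimit cocone is jointly epimorphic, the two composites agree, so $h$ is an algebra morphism; uniqueness follows because any algebra morphism $L\rar A$ necessarily restricts to the cocone $(a_n)$ and is therefore the mediating map $h$.

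The main obstacle I anticipate is the bookkeeping around the shift: one must keep straight which coprojections belong to $F(L)$ and which to $L$, so that the defining relation $\alpha\circ F(\iota_n)=\iota_{n+1}$ is used correctly and the algebra law for $h$ reduces, component by component, to the recursion $a_{n+1}=\beta\circ F(a_n)$. Everything else is routine diagram chasing, made uniform by the fact that the colimit cocone $(\iota_n)$—and likewise $(F(\iota_n))$—is jointly epimorphic, which is exactly what lets all the required equalities be checked on the components $F^n 0$.
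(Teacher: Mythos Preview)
Your argument is correct and is precisely Ad\'amek's original construction of the initial algebra as the colimit of the initial $\omega$-chain. The paper itself does not give a proof of this lemma: it simply cites Ad\'amek's result and remarks that the initial object ``can be computed as a colimit of a chain, but we do not rely here on the exact underlying construction.'' So you have supplied the standard proof that the paper merely invokes.
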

This initial object can be computed as a colimit of a chain, but we do not rely here on the exact underlying construction.

The first requirement to apply this lemma is that the base category is cocomplete, and this is indeed the case:
\begin{lemma}
\label{l:cocomplete-fiber-cat}
The category $\Mod(R)/R^2$ is cocomplete for any monad $R$.
\end{lemma}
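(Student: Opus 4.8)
The plan is to derive this from the cocompleteness of $\Mod(R)$ together with the standard fact that slice categories inherit colimits from their base. Recall from Section~\ref{ss:monads-modules} that $\Mod(R)$ has arbitrary (small) limits and colimits, computed pointwise. I would then invoke the general result that for any cocomplete category $\mathcal{C}$ and any object $X$, the slice $\mathcal{C}/X$ is again cocomplete, applied to $\mathcal{C}=\Mod(R)$ and $X = R^2$.

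To keep the argument self-contained, I would spell out why the forgetful (domain) functor $\Mod(R)/R^2 \to \Mod(R)$ creates colimits. Given a small diagram $D:J\to \Mod(R)/R^2$ with components $d_j : M_j \to R^2$, first form the colimit $M := \colim_{j} M_j$ in $\Mod(R)$, with injections $\iota_j : M_j \to M$. Since every morphism of $D$ is a commuting triangle over $R^2$, i.e.\ $d_{j'}\circ D(\alpha) = d_j$ for each $\alpha : j\to j'$, the family $(d_j)_j$ constitutes a cocone from the underlying diagram $j\mapsto M_j$ to the constant object $R^2$; by the universal property of $M$ there is then a unique module morphism $d:M\to R^2$ with $d\circ \iota_j = d_j$ for all $j$. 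I would then check that $(M,d)$, together with the $\iota_j$ viewed as morphisms in the slice, is the colimit of $D$ in $\Mod(R)/R^2$: any cocone in the slice is in particular a cocone in $\Mod(R)$, so it factors uniquely through $M$, and the mediating morphism automatically commutes over $R^2$ by uniqueness of factorizations through the $\iota_j$. Running this construction for every small $J$ yields cocompleteness of $\Mod(R)/R^2$.

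I do not expect any genuine obstacle here: the only points requiring care are the appeal to pointwise cocompleteness of $\Mod(R)$ (already established) and the observation that the induced structure map $d:M\to R^2$ is well-defined by the universal property of the colimit --- both entirely routine. The argument is purely formal and uses no special feature of $R^2$, so it applies verbatim to a slice of $\Mod(R)$ over any module; this is precisely the generality needed later, where the endofunctor $F_{\mathcal{S}|R}$ of Lemma~\ref{l:fiber-cat-model} lives on $\Mod(R)/R^2$ and Adámek's theorem (Lemma~\ref{l:adamek}) will be applied.
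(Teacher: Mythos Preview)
Your proposal is correct and matches the paper's approach exactly: the paper cites cocompleteness of $\Mod(R)$ and then invokes the dual of \cite[Exercise~V.1.1]{maclane} to conclude that any slice of a cocomplete category is cocomplete. Your version simply unfolds that exercise by showing the domain functor creates colimits, which is fine and arguably more self-contained.
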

\begin{proof}
The category of modules $\Mod(R)$ over a given monad $R$ is cocomplete \cite[Proposition 4]{ahrens_et_al:LIPIcs:2018:9671}, so any of its slice categories is, by the dual of \cite[Exercise V.1.1]{maclane}, in particular $\Mod(R)/R^2$.
\end{proof}

Let us show that the finitarity requirement of Lemma~\ref{l:adamek} is also satisfied for the case of a signature with a single reduction rule:

\begin{lemma}
\label{lem:finitary-elem-arity}
 Let $\mathcal{A}=(\mathcal{V},(n_i,\phyp_i)_{i\in I},(0,\pcon))$ be a normalized reduction rule over $\Sigma$, and
 $R$ be a model of $\Sigma$. Then, $F_{\mathcal{A}|R}$ is finitary.
\end{lemma}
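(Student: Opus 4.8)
The plan is to prove that $F_{\mathcal{A}|R}$ preserves filtered colimits, which is the finitarity hypothesis required by Adámek's theorem (Lemma~\ref{l:adamek}). First I would make the action of $F_{\mathcal{A}|R}$ explicit: writing $\mathcal{A}=(\mathcal{V},(n_i,\phyp_i)_{i\in I},(0,\pcon))$, the object $F_{\mathcal{A}|R}(M,q)$ attached to $(M,q:M\to R\times R)$ has underlying $R$-module the fibered product over $\mathcal{V}(R)$ of the pullbacks $\phyp_i^*(M^{(n_i)})$, each formed along the fixed map $\phyp_{i,R}:\mathcal{V}(R)\to R^{(n_i)}\times R^{(n_i)}$ from the derivative $q^{(n_i)}:M^{(n_i)}\to R^{(n_i)}\times R^{(n_i)}$; its structural map to $R\times R$ is $\pcon_R$ precomposed with the projection to $\mathcal{V}(R)$. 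The key observation is that $\mathcal{V}(R)$, the morphisms $\phyp_{i,R}$, the integers $n_i$, and $\pcon_R$ are all fixed data independent of the argument $(M,q)$, so the only dependence of $F_{\mathcal{A}|R}$ on its argument is through the derivatives $M^{(n_i)}$ and the subsequent pullbacks.

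Next I would reduce the verification to the category $\Set$. Filtered categories are connected, hence the forgetful functor $\Mod(R)/R^2\to\Mod(R)$ creates filtered colimits; and colimits in $\Mod(R)$ are computed pointwise in $\Set$ (as recalled in Section~\ref{ss:monads-modules}). Consequently a filtered colimit in $\Mod(R)/R^2$ is obtained by taking the pointwise filtered colimit of the underlying modules and equipping it with the induced map to $R\times R$. It therefore suffices to show that each elementary operation from which $F_{\mathcal{A}|R}$ is built preserves filtered colimits computed pointwise in $\Set$.

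Finally I would treat the three operations in turn. Derivation $M\mapsto M^{(n)}$ satisfies $M^{(n)}(X)=M(X+n)$, hence commutes with the pointwise colimits of $\Mod(R)$, in particular with filtered ones. The pullback $\phyp_i^*(-)$ along the fixed map $\phyp_{i,R}$ is a finite limit in $\Set$, and the fibered product $\prod_{i\in I}{}_{\mathcal{V}(R)}(-)$ is a finite limit as well, because $I$ is finite. The decisive fact is that in $\Set$ finite limits commute with filtered colimits; this is exactly where the finiteness of the index set $I$ of hypotheses enters. Composing these preservation properties shows that $F_{\mathcal{A}|R}$ preserves filtered colimits, as required. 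I expect the main obstacle to be purely organizational, namely checking that the structural maps to $R\times R$ of the colimit and of $F_{\mathcal{A}|R}$ applied to the colimit agree under these identifications; the genuine mathematical content is confined to the commutation of finite limits with filtered colimits in $\Set$ together with the finiteness of $I$.
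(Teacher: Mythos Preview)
Your proposal is correct and follows essentially the same route as the paper: reduce to the underlying module via the slice projection (which creates colimits), express $F_{\mathcal{A}|R}$ on that level as a finite limit involving derivatives of the argument, and conclude by the commutation of finite limits with filtered colimits in $\Set$, using that limits and colimits in $\Mod(R)$ are computed pointwise. The only cosmetic difference is that the paper packages the pullbacks and the fibered product into one explicit finite diagram, whereas you treat derivation, pullback, and fibered product as three separate colimit-preserving steps; also, the slice projection creates \emph{all} colimits, so your appeal to connectedness of filtered categories is unnecessary (though harmless).
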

\begin{proof}
In this proof, we denote by $F$ the endofunctor $F_{\mathcal{A}|R}$ on $\Mod(R)/R^2$, 
 by $\pi : D/d\to D$ the projection for a general slice category, and by $\alpha : \pi \to d$ the natural transformation from $\pi$ to the functor constant at $d$ induced by the underlying morphism of a slice object: $\alpha_p:\pi(p)\to d$.
Note that $\pi$ creates colimits, by the dual of \cite[Exercise V.1.1]{maclane}.

Given a filtered diagram we want to show that the image by $F$ of the colimiting cocone is colimiting. As $\pi$ creates colimits, this is enough to show that the image by $\pi\cdot F$ of the colimiting cocone is colimiting. Thus, it is enough to prove that $\pi\cdot F : \Mod(R)/R^2 \to \Mod(R)$ is finitary.

Given any $q \in \Mod(R)/R^2$ the module $\pi( F(q))$ is $\Hyp_\mathcal{A}(R)$, which can be computed as the limit of the following finite diagram:
\[
     \xymatrix{
     & & \mathcal{V}(R)
     \ar[rd]_{\phyp_{i',R}} 
     \ar[ld]_{\phyp_{i,R}}
     \ar[rrd] &
     \\
   &  R^{(n_i)}
     \times R^{(n_i)}
     & &
     R^{(n_{i'})}
     \times R^{(n_{i'})}
     &
     \dots
     \\
  &   \pi(q)^{(n_i)}\ar[u]^{\alpha_q^{(n_i)}}
     &&
     \pi(q)^{(n_{i'})}\ar[u]^{\alpha_q^{(n_{i'})}}
     }
     \]
     
     Let $J:\CC \to \Mod(R)/R^2$ be a filtered diagram. 
     As $\pi$ preserves colimits (since it creates them), $\pi( F(\colim J))$ is the limit of the following diagram:
     \[
     \xymatrix{
     &
     & \mathcal{V}(R)
    \ar[rd]_{\phyp_{i',R}} 
     \ar[ld]_{\phyp_{i,R}}
     \ar[rrd] &
     \\
     & R^{(n_i)}
     \times R^{(n_i)}
     & &
     R^{(n_{i'})}
     \times R^{(n_{i'})}
     &
     \dots
     \\
     &
     \colim \pi(J)^{(n_i)}\ar[u]^{ \alpha_{\colim J}^{(n_i)}}
     &&
     \colim \pi(J)^{(n_{i'})}\ar[u]^{ \alpha_{\colim J}^{(n_{i'})}}
     }
     \]
Now, as limits and colimits are computed pointwise in the category of modules, and as finite limits commute with filtered colimits in $\Set$ (\cite[Section~IX.2, Theorem~1]{maclane}), we have that $\pi( F(\colim J))$, as the limit of such a diagram, is canonically isomorphic to the colimit of $\pi\cdot F\cdot J$. 
\end{proof}

Now, consider a signature $\mathcal{S}$ with a family of reduction rules
$(\mathcal{A}_i)_i$. The functor that we are concerned with is
$F_{\mathcal{S}|R}=\coprod_i F_{\mathcal{A}_i|R}$, for a given model $R$ of $\Sigma$:

\begin{lemma}
\label{l:finitary-big-F}
For any model $R$ of $\Sigma$, the functor
$F_{\mathcal{S}|R}=\coprod_i F_{\mathcal{A}_i|R}$ is finitary.
\end{lemma}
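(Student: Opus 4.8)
The plan is to reduce this statement entirely to the single-rule case of Lemma~\ref{lem:finitary-elem-arity} together with one general categorical fact: that coproducts commute with filtered colimits. Since by Lemma~\ref{l:cocomplete-fiber-cat} the category $\Mod(R)/R^2$ is cocomplete, the $I$-indexed coproduct $F_{\mathcal{S}|R}=\coprod_i F_{\mathcal{A}_i|R}$ is a well-defined endofunctor, computed objectwise by $F_{\mathcal{S}|R}(q)=\coprod_i F_{\mathcal{A}_i|R}(q)$, and by Lemma~\ref{lem:finitary-elem-arity} each summand $F_{\mathcal{A}_i|R}$ is finitary.

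Given a filtered diagram $J:\CC\to\Mod(R)/R^2$, I would then compute, using that each summand preserves the filtered colimit,
\[
F_{\mathcal{S}|R}(\colim J)=\coprod_i F_{\mathcal{A}_i|R}(\colim J)\iso\coprod_i\colim (F_{\mathcal{A}_i|R}\cdot J),
\]
and it only remains to exchange the coproduct with the filtered colimit,
\[
\coprod_i\colim (F_{\mathcal{A}_i|R}\cdot J)\iso\colim \coprod_i (F_{\mathcal{A}_i|R}\cdot J)=\colim (F_{\mathcal{S}|R}\cdot J).
\]
This last isomorphism is an instance of the interchange of colimits: a coproduct is itself a colimit over the discrete index category $I$, and colimits commute with colimits, so a coproduct commutes in particular with the filtered colimit indexed by $\CC$. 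Tracking these isomorphisms shows they are induced by the canonical comparison cocone, so $F_{\mathcal{S}|R}$ sends the colimiting cocone of $J$ to a colimiting cocone, as required.

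The only genuine point to verify is the interchange step, namely that the canonical comparison map between $\colim \coprod_i$ and $\coprod_i \colim$ is invertible; this is a Fubini-type theorem for iterated colimits, valid in any cocomplete category, and it is precisely where the cardinality of $I$ is irrelevant, since coproducts commute with \emph{all} colimits rather than merely finite ones. If one prefers a hands-on confirmation, the same conclusion follows concretely: colimits in $\Mod(R)/R^2$ are created by the projection to $\Mod(R)$, colimits in $\Mod(R)$ are computed pointwise, and in $\Set$ a coproduct of filtered colimits is again the corresponding filtered colimit of coproducts. Everything else is formal bookkeeping of the comparison morphisms, so I do not expect any real obstacle here; the substantive work was already carried out in Lemma~\ref{lem:finitary-elem-arity}.
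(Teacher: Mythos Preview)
Your proposal is correct and follows essentially the same approach as the paper: each summand is finitary by Lemma~\ref{lem:finitary-elem-arity}, and the coproduct of finitary functors is finitary because colimits commute with colimits. The paper's proof is just this one sentence (citing Mac Lane for the interchange), while you have spelled out the computation in more detail.
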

\begin{proof}
This is a coproduct of finitary functors (by Lemma~\ref{lem:finitary-elem-arity}), and so is finitary as
 colimits commute with colimits, by \cite[Equation V.2.2]{maclane}.
\end{proof}

Now we are ready to tackle the proof of our main result:
\begin{proof}[Proof of Theorem~\ref{thm:alg-2-sig-initial}]
We assume that $\Sigma$ is effective; let $R$ be the initial model of $\Sigma$. We want to show that $\mathcal{S}$ has an initial model.
 We apply Lemma~\ref{lem:initial-fibration} with $p$ the fibration from models
 of $\mathcal{S}$ to models of $\Sigma$ (Lemma~\ref{prop:model-model-fibration}): we are left with providing an initial object in the fiber category over $R$.
 By Lemma~\ref{l:fiber-cat-model}, this boils down to constructing an initial algebra for the endofunctor ${F_{\mathcal{S}|R}}$ on the category $\Mod(R)/R^2$. We apply Lemma~\ref{l:adamek}: $\Mod(R)/R^2$ is indeed cocomplete by Lemma~\ref{l:cocomplete-fiber-cat}, and $F_{\mathcal{S}|R}$ is finitary by Lemma~\ref{l:finitary-big-F}).
\end{proof}

\section{Example: Lambda calculus with explicit substitutions}
\label{s:ex-lex}

Here, we give a  signature specifying the reduction monad of the lambda calculus
with explicit substitutions as described in~\cite{lambda-ex}. One feature of this example is that it involves operations subject to some equations, and on top of this syntax with equations, a ``multigraph of reductions''.

In Section~\ref{ss:sig-monad-lambda-ex}, we present the underlying signature for monads, and in Section~\ref{ss:reduction-rules-lambda-ex}, we list the reduction rules of the signature. 

\subsection{Signature for the Monad of the Lambda Calculus with Explicit Substitutions}
We give here the signature for the monad of the lambda calculus with explicit substitutions: first the syntactic operations, and then the equation that the explicit substitution must satisfy.
\label{ss:sig-monad-lambda-ex}
\subsubsection{Operations}
\label{sss:1-sig-lambda-ex}
The lambda calculus with explicit substitutions extends the lambda calculus with an explicit unary substitution operator $t[x/u]$. 
Here, the variable $x$ is assumed not to occur freely in $u$.
In our setting, it is specified
as an operation $\s_X : \LC'(X) \times \LC(X) \to \LC(X)$.
It is thus specified by the signature $\Theta'\times \Theta$. 
An action of this signature in a monad $R$ yields a map $\s_X:R(X+\{*\})\times R(X) \to R(X)$ for each set $X$, where $\s_X(t,u)$ is meant to model the explicit substitution $t[*/u]$.

\begin{definition}
The signature $\Upsilon_{\LCex}$ for the monad of the lambda calculus with explicit substitutions without equations is the coproduct of $\Theta'\times \Theta$ and $\Sigma_\LC$.
\end{definition}

\subsubsection{Equation}
The syntax of lambda calculus with explicit substitutions of \cite{lambda-ex}
is subject to the equation (see \cite[Figure 1, ``Equations'']{lambda-ex})
\begin{equation}
\label{eq:comm-subs}
t[x/u][y/v] = t[y/v][x/u]  
\quad
\text{if $y\notin \fv(u)$ and $x\notin\fv(v)$} \enspace .
\end{equation}
We rephrase it as an equality between two parallel $\Upsilon_{\LCex}$-module morphisms from
$\Theta''\times\Theta\times \Theta$, modeling the metavariables $t$, $u$, and
$v$, to $\Theta$:
\begin{equation}
\label{eq:subst-explicit}
  \xymatrix@C=25pt@R=2pt{
    \Theta'' \times \Theta\times \Theta \ar[rr]^{\Theta''\times \iota \times \Theta} &&
    \Theta'' \times \Theta' \times \Theta \ar[rr]^{\s'\times \Theta} & &
    \Theta'\times \Theta \ar[r]^-{\s} &
    \Theta
    \\
    \Theta'' \times \Theta\times\Theta
    \ar[rr]_{\Theta''\times \Theta\times \iota}
    &&
    \Theta'' \times \Theta\times\Theta'
    \ar[rr]_{\langle  \dswap\s\circ\pi_{1,3} , \pi_2\rangle}
    &&
    \Theta'\times \Theta \ar[r]_-{\s} &
    \Theta
    }
\end{equation}
Here, $\iota$ denotes the canonical morphism $\Theta \to \Theta'$ as before, and $\dswap\esubst$ is
the composition of $\esubst'$ with $\swap : \Theta'' \to \Theta''$ swapping the
two fresh variables.

Now we are ready to define the signature of the lambda calculus monad with explicit substitutions:
\begin{definition}
\label{d:signature-monad-lcex}
The signature $\Sigma_\LCex$ of the lambda calculus monad with explicit
substitutions  consists of $\Upsilon_\LCex$
and the single $\Sigma_\LCex$-equation stating the equality between the two
morphisms of Equation~\ref{eq:subst-explicit}.
\end{definition}

\begin{lemma}
\label{l:signature-monad-lcex-effective}
The signature $\Sigma_\LCex$ for monads is effective
\end{lemma}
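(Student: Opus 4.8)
The plan is to recognize $\Sigma_\LCex$ as an \emph{algebraic} 2-signature for monads in the sense of \cite{FSCD2019}, and then to conclude by the effectivity of algebraic signatures, \cite[Theorem~32]{FSCD2019}, which guarantees that the category of models of any algebraic 2-signature carries an initial object. All the work therefore goes into matching Definition~\ref{d:signature-monad-lcex} against the algebraic template.

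First I would check that the underlying 1-signature $\Upsilon_\LCex$ is algebraic. By definition it is the coproduct of $\Sigma_\LC$ with the 1-signature $\Theta'\times\Theta$ for the explicit-substitution operator $\s$. The signature $\Sigma_\LC$ is algebraic, being the coproduct of the arity $\Theta\times\Theta$ (for $\app$) and the arity $\Theta'$ (for $\abs$); and $\Theta'\times\Theta$ is itself a single algebraic arity, namely a finite product of derivatives of the tautological module $\Theta$. Since an algebraic 1-signature is by definition a coproduct of such arities, $\Upsilon_\LCex$ is algebraic.

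Next I would verify that the sole equation of $\Sigma_\LCex$, displayed in Equation~\eqref{eq:subst-explicit}, is an elementary equation. Its domain $\Theta''\times\Theta\times\Theta$ is a finite product of derivatives of $\Theta$, hence an elementary module, and its codomain is $\Theta$. The two parallel legs are built compositionally from the operation $\s$, its derivative $\s'$, the canonical insertions $\iota:\Theta\to\Theta'$, the swap $\swap:\Theta''\to\Theta''$ underlying $\dswap{\s}$, and the projections $\pi_{1,3}$, $\pi_2$; each of these is a morphism of $\Upsilon_\LCex$-modules, so both composites are genuine parallel $\Upsilon_\LCex$-module morphisms sharing the elementary domain $\Theta''\times\Theta\times\Theta$ and the codomain $\Theta$. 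This exhibits $(\Upsilon_\LCex,\{e\})$ as an algebraic 2-signature, and \cite[Theorem~32]{FSCD2019} then gives effectivity directly.

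The main obstacle I anticipate is precisely this middle step: one must make sure that the two legs are not merely expressions that typecheck, but that $\swap$ and the derived operation $\s'$ really assemble, naturally in the model $R$, into $\Upsilon_\LCex$-module morphisms, and that the resulting composites are truly parallel with an elementary domain and codomain. This is the point at which the algebraicity hypothesis of \cite[Theorem~32]{FSCD2019} is being cashed in; the remaining verification (naturality of $\swap$ in the fresh variables, compatibility with the module substitution, agreement of sources and targets) is routine and I would not grind through it in full.
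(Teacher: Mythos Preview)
Your proposal is correct and follows essentially the same route as the paper: the paper's proof is a one-liner invoking Corollary~\ref{cor:effective-if-2-sig-alg} (itself a wrapper around \cite[Theorem~32]{FSCD2019}), and you simply unpack the verification that $\Sigma_\LCex$ is algebraic before citing the same theorem directly. The extra detail you supply---checking that $\Upsilon_\LCex$ is a coproduct of algebraic arities and that the equation has elementary domain and codomain $\Theta$---is exactly what is implicit in the paper's appeal to algebraicity.
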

\begin{proof}
 This is a direct consequence of Corollary~\ref{cor:effective-if-2-sig-alg}.
 \end{proof}

\subsection{Reduction Rules for Lambda Calculus with Explicit Substitutions}
\label{ss:reduction-rules-lambda-ex}
The reduction signature for the lambda calculus with explicit substitutions consists of two components: the first one is the signature for monads $\Sigma_\LCex$ of Definition~\ref{d:signature-monad-lcex};
the second one is the list of reduction rules that we enumerate here, taken from \cite[Figure 1, ``Rules'']{lambda-ex}.
Except for congruence, none of them involve hypotheses.

First, let us state the congruence rules (that are implicit in \cite{lambda-ex}):
\[
\frac{
\redrule{T}{T'} 
}
{\redrule{\app(T,U)}{\app(T',U)} }
\quad
\frac{
\redrule{U}{U'} 
}
{\redrule{\app(T,U)}{\app(T,U')} }
\quad
\frac{
\redrule{T}{T'} 
}
{\redrule{\abs(T)}{\abs(T')} }
\]
\[
\frac{
\redrule{T}{T'} 
}
{\redrule{\esubst(T,U)}{\esubst(T',U)} }
\quad
\frac{
\redrule{U}{U'} 
}
{\redrule{\esubst(T,U)}{\esubst(T,U')} }
\]
They are translated into reduction rules through the protocol described in Section~\ref{ss:protocole-reduction-rules}.

\begin{figure}
\[
  \AxiomC{}
  \RightLabel{\scriptsize{$\beta$-red}}
  \UnaryInfC{$\redrule{(\lambda x.t)u}{t[x/u]}$}
   \DisplayProof 
\quad
\AxiomC{$x\notin \fv(t)$}
\RightLabel{\scriptsize{Gc}}
\UnaryInfC{$\redrule{t[x/u]}{t}$ }
\DisplayProof
   \quad
   \AxiomC{}
   \RightLabel{\scriptsize{var$[]$}}
   \UnaryInfC{$\redrule{x[x/u]}{u}$}
   \DisplayProof 
    \]
     \bigskip
    \[
      \AxiomC{}
      \RightLabel{\scriptsize{$\app[]$}}
      \UnaryInfC{$\redrule{(t\,u)[x/v]}{t[x/v]\,u[x/v]}$}
      \DisplayProof
      \quad
      \AxiomC{}
      \RightLabel{\scriptsize{$\abs[]$}}
      \UnaryInfC{$\redrule{(\lambda y.t)[x/v]}{\lambda y.t[x/v]}$}
\DisplayProof
\]
\bigskip
\[
\AxiomC{$x\notin\fv(v)$}
\AxiomC{$y\in\fv(u)$}
\RightLabel{\scriptsize{$[][]$}}
\BinaryInfC{$\redrule{t[x/u][y/v]}{t[y/v][x/u[y/v]]}$}
\DisplayProof
    \]
\caption{Reduction rules of lambda calculus with explicit substitutions.}
\label{fig:lambdaex-reds}
\Description{The figure specifies reduction rules of lambda calculus with explicit substitutions à la Kesner.}
\end{figure}

\begin{figure}
  \begin{equation*}
    % beta reduction
    \AxiomC{}
    \RightLabel{\scriptsize{$\beta$-red}}
    \UnaryInfC{$\redrule{\app(\abs(T),U)}{\esubst(T,U)}$}
    \DisplayProof
    \quad
    % 
    % Gc
    \AxiomC{}
    \RightLabel{\scriptsize{Gc}}
    \UnaryInfC{$\redrule{\esubst(\iota(T),U)}{T}$}
    \DisplayProof
    \end{equation*}
    \bigskip
    \begin{equation*}
    \AxiomC{}
    \RightLabel{\scriptsize{$\app[]$}}
    \UnaryInfC{$\redrule{\esubst(\app(T,U),V)}{\app(\esubst(T,V), \esubst(U,V) )}$}
    \DisplayProof
    \quad
  \end{equation*}
  \bigskip
  \begin{equation*}
    % beta reduction
    \AxiomC{}
    \RightLabel{\scriptsize{$\var[]$}}
    \UnaryInfC{$\redrule{\esubst(*,T)}{T}$}
    \DisplayProof
    \quad
    % 
    % app[]
    \AxiomC{}
    \RightLabel{\scriptsize{$\abs[]$}}
    \UnaryInfC{$\redrule{\esubst(\abs'(T),V)}{\abs(\dswap{\esubst}(T,\iota(V)))}$}
    \DisplayProof
  \end{equation*}
  \bigskip
  \begin{equation*}
    \AxiomC{}
    \RightLabel{\scriptsize{$[][]$}}
    \UnaryInfC{$\redrule{\esubst(\esubst'(T,\kappa(U)),V)}{\esubst(\dswap{\esubst}(T,\iota(V)),\esubst(\kappa(U),V))}$}
    \DisplayProof
  \end{equation*}

  \begin{align*}
    \kappa &: \Theta_* \to \Theta' \\
      \dswap{\s} & : \Theta'' \times \Theta' \to \Theta'
    \end{align*}
     \flushleft
    where $\Theta_*$ is the 1-hole context $\Sigma_{\LCex}$-submodule of
    $\Theta'$ (Definition~\ref{d:one-hole-context-submodule}), and
    $\dswap{\s}$ 
      is defined as the composition
      \[
      \xymatrix{
      \Theta'' \times\Theta' \ar[r]^{\swap\times \Theta'} &
                                      \Theta'' \times \Theta' \ar[r]^-\s &
                                      \Theta'
                                      }
      \]
      Here, $\swap$ exchanges the fresh variables:
\begin{align*}
          \swap_{X,R} & : R((X+\{*_1\}) + \{ *_2\}) \to R((X+\{*_1\}) + \{ *_2\})(t) \\
    \swap_{X,R} & : t \mapsto t\monsubst{*_1 := *_2 ; *_2 := *_1}
                  \end{align*}
\caption{Reduction rules of Figure~\ref{fig:lambdaex-reds} reformulated in our setting.} 
\label{fig:reds-lambdaex-details}
\Description{The figure specifies reduction rules of lambda calculus with explicit substitutions à la Kesner, formulated in our framework.}
\end{figure}

Figure~\ref{fig:lambdaex-reds} gives Kesner's rules.
Five out of six of Kesner's rules translate straightforwardly, see Figure~\ref{fig:reds-lambdaex-details}.
Note how the explicit weakening $\iota : \Theta\to\Theta'$ accounts for
the side condition $x\notin\fv(t)$ of the Gc-rule in Figure~\ref{fig:lambdaex-reds}.

Expressing the side condition $y \in \fv(u)$ of the [][]-rule of Figure~\ref{fig:lambdaex-reds} requires
the definition of the $\Sigma_\LCex$-module $\Theta_*$ such that $\LCex_*$ is
the submodule of $\LCex'$ of terms that really depend on the fresh variable.
 We propose an approach based on the informal intuitive idea of defining
 inductively the submodule $R_*$ of elements in $R'$ having at least one occurrence of the fresh variable $*$ as
 follows, for a given model $R$ of $\Sigma_\LCex$:
   \begin{itemize}
   \item $\eta(*)\in R_*(X)$, for any set $X$;
   \item (application)
     \begin{itemize}
       \item if $t\in R(X)$ and $u\in R_*(X)$, then $\app(\iota (t),u)\in R_*(X)$ 
       \item if $t\in R_*(X)$ and $u\in R(X)$, then $\app(t,\iota(u))\in R_*(X)$ 
       \item if $t\in R_*(X)$ and $u\in R_*(X)$, then $\app(t,u)\in R_*(X)$ 
     \end{itemize}
   \item if $t \in R_*(X+\{ x \})$,
        then $\lambda x.t \in R_*(X)$;
   \item (explicit substitution)
     \begin{itemize}
       \item if $t\in R(X+\{ x\})$ and $u\in R_*(X)$, then $\iota(t)[x/u]\in R_*(X)$;
       \item if $t\in R_*(X+\{x\})$ and $u\in R(X)$, then $t[x/\iota(u)]\in R_*(X)$;
       \item if $t\in R_*(X+\{x\})$ and $u\in R_*(X)$, then $t[x/u]\in R_*(X)$.
     \end{itemize}
       \end{itemize}
       Guided by this intuition, we now formally define 
  a $\Sigma_\LCex$-module $\Theta_*$ equipped with a morphism $\kappa:\Theta_*\to\Theta'$.

    The previous informal inductive
    definition is translated as an initial algebra for an endofunctor
    on the category of $\Sigma_\LCex$-modules, which is cocomplete (colimits
    are computed pointwise). This endofunctor maps a $\Sigma_\LCex$-module $M$
    to the coproduct of the following $\Sigma_\LCex$-modules:
    \begin{itemize}
    \item the terminal $\Sigma_\LCex$-module $1$, playing the rôle of the fresh variable;
      \item the coproduct $M\times \Theta + \Theta \times M + M \times M$, one summand
        for each case of the application;
        \item the derived module $M'$ for abstraction;
      \item the coproduct $M'\times \Theta  + \Theta'\times M +
  M'\times M$, one summand
        for each case of the explicit substitution.
      \end{itemize}
  This functor is finitary, so the initial algebra exists thanks to Ad\'amek's
  theorem (already cited, as Theorem~\ref{l:adamek}).
      Unfortunately,
      the resulting $\Sigma_{\LCex}$-module does not yield
      the module that we are expecting in the case of the monad $\LCex$: it does
      not satisfy Equation~\ref{eq:subst-explicit}, and thus contains more terms
      than necessary.
      To obtain the desired $\Sigma_{\LCex}$-module, we 
       equip $\Theta'$ with its
      canonical algebra structure, inducing a morphism from the initial algebra,
      and we define $\Theta_*$ as the image of this morphism, thus equipped with
      an inclusion $\kappa : \Theta_* \to \Theta'$.

      \begin{definition}
        \label{d:one-hole-context-submodule}
        We define  the \textbf{$\Sigma_\LCex$-module of ``one-hole contexts''} to be $\Theta_*$, equipped with an inclusion $\kappa : \Theta_*\to\Theta'$.
        \end{definition}
        \begin{rem}
          Such a definition can be worked out for any algebraic signature for monads.
          \end{rem}

\noindent
Now we define the signature of the reduction monad of lambda calculus with explicit substitutions:
\begin{definition}
\label{d:sig-red-monad-lex}
The reduction signature $\mathcal{S}_\LCex$ of the lambda calculus reduction monad with explicit substitutions consists of the signature $\Sigma_\LCex$ of Definition~\ref{d:signature-monad-lcex} and all the reduction rules specified in this section.
\end{definition}
\begin{lemma}
\label{l:sig-red-monad-lex-eff}
The reduction signature $\mathcal{S}_\LCex$ is effective.
\end{lemma}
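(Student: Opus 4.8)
The plan is to deduce this directly from the general effectivity criterion, Theorem~\ref{thm:alg-2-sig-initial}, which guarantees that a reduction signature $(\Sigma,\mathfrak{R})$ is effective as soon as its underlying signature $\Sigma$ for monads is effective. By Definition~\ref{d:sig-red-monad-lex}, the reduction signature $\mathcal{S}_\LCex$ is exactly the pair $(\Sigma_\LCex,\mathfrak{R})$, where $\Sigma_\LCex$ is the signature for monads of Definition~\ref{d:signature-monad-lcex} and $\mathfrak{R}$ is the family of reduction rules collected in this section, namely the five congruence rules together with the rules of Figure~\ref{fig:reds-lambdaex-details}. Hence the lemma reduces to two verifications: that $\Sigma_\LCex$ is effective, and that $\mathfrak{R}$ is a genuine family of reduction rules over $\Sigma_\LCex$.

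First I would discharge the effectivity of the underlying signature for monads by invoking Lemma~\ref{l:signature-monad-lcex-effective}, which already establishes (through Corollary~\ref{cor:effective-if-2-sig-alg}) that $\Sigma_\LCex$ is effective. This supplies precisely the hypothesis required by Theorem~\ref{thm:alg-2-sig-initial}.

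Second I would check that each rule of $\mathfrak{R}$ is well formed in the sense of Definition~\ref{def:elementary-arity}: for every rule, its module of metavariables and the term-pairs for its hypotheses and conclusion must be assembled from $\Sigma_\LCex$-modules and $\Sigma_\LCex$-module morphisms. The congruence rules and the rules $\beta$-red, Gc, $\app[]$, $\var[]$, and $\abs[]$ of Figure~\ref{fig:reds-lambdaex-details} use only the tautological module $\Theta$, its derivatives, finite products, and the morphisms $\app$, $\abs$, $\esubst$, $\iota$, $\dswap\esubst$, and $\swap$, each of which is a $\Sigma_\LCex$-module morphism. The single delicate case is the $[][]$-rule, whose data involve the one-hole context module $\Theta_*$ and the inclusion $\kappa:\Theta_*\to\Theta'$; these are exactly the objects provided by Definition~\ref{d:one-hole-context-submodule}, so this rule too is a legitimate reduction rule over $\Sigma_\LCex$.

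With both verifications in place, Theorem~\ref{thm:alg-2-sig-initial} applies to $(\Sigma_\LCex,\mathfrak{R})$ and yields that $\mathcal{S}_\LCex$ is effective. I expect the only genuine subtlety to reside in the second step, namely confirming that the one-hole context module $\Theta_*$ turns the $[][]$-rule into a bona fide reduction rule over $\Sigma_\LCex$; once this is settled, all remaining work has already been carried out by Theorem~\ref{thm:alg-2-sig-initial} and Lemma~\ref{l:signature-monad-lcex-effective}, so no further construction is needed.
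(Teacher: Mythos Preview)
Your proposal is correct and follows exactly the paper's approach: apply Theorem~\ref{thm:alg-2-sig-initial}, using Lemma~\ref{l:signature-monad-lcex-effective} for the effectivity of $\Sigma_\LCex$. Your additional discussion checking that each rule (including the $[][]$-rule via $\Theta_*$) is a well-formed reduction rule over $\Sigma_\LCex$ is more explicit than the paper, which takes this for granted, but it is harmless and arguably clarifying.
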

\begin{proof}
Apply Theorem~\ref{thm:alg-2-sig-initial}. The underlying signature for monads is effective by Lemma~\ref{l:signature-monad-lcex-effective}.
\end{proof}

\section{Recursion}
\label{sec:recursion}
In this section, we derive, for any effective reduction signature $\mathcal{S}$,
a recursion principle from initiality.
In Section~\ref{ss:rec}, we state this recursion principle, then we give an
example of application in Section~\ref{ss:fixpoint-to-beta}, by translating lambda
calculus with a fixpoint operator to lambda calculus.
In Section~\ref{ss:fixpoint-to-beta}, we apply this principle to translate
lambda calculus with explicit substitutions into lambda calculus with 
\emph{unary congruent  substitution}.
Then, in Section~\ref{ss:induction-principle}, we translate this latter variant
of lambda calculus into lambda calculus closed under identity and
composition of reductions.

\subsection{Recursion Principle for Effective Signatures}
\label{ss:rec}

The recursion principle associated to an effective signature provides a way to construct 
a morphism
from the reduction monad underlying the initial model
of that signature to a given reduction monad.

\begin{proposition}[Recursion principle]
Let $\mathcal{S}$ be an effective reduction signature, and
   $R$ be the reduction monad underlying the initial model. Let $T$
  be a reduction monad. Any action $\tau$ of $\mathcal{S}$ in $T$ induces a reduction monad morphism $\hat \tau : R \to T$.  
\end{proposition}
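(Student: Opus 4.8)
The plan is to recognize this statement as the standard corollary of initiality: an action of $\mathcal{S}$ in a reduction monad is precisely the extra data that upgrades that reduction monad to an object of the category of models $\GMon^{\mathcal{S}}$, and initiality of $\widehat{\mathcal{S}}$ then furnishes a unique morphism of models whose underlying reduction monad morphism is the sought $\hat\tau$. No genuine new construction is required; the content is entirely bookkeeping plus one appeal to the universal property.

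First, I would observe that the pair $(T,\tau)$ of the reduction monad $T$ together with the action $\tau$ is an object of $\GMon^{\mathcal{S}}$. Indeed, by the definition of a model of a reduction signature, a model of $\mathcal{S}=(\Sigma,\mathfrak{R})$ is a reduction monad equipped with an action of $\mathcal{S}$, and an action of $\mathcal{S}$ in $T$ consists exactly of an action of $\Sigma$ in the underlying monad $\monadfromred{T}$ together with an action of each reduction rule of $\mathfrak{R}$ in $T$. Thus $(T,\tau)$ carries precisely the structure needed to be an object of $\GMon^{\mathcal{S}}$.

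Next, since $\mathcal{S}$ is effective, the category $\GMon^{\mathcal{S}}$ has an initial object $\widehat{\mathcal{S}}$, whose underlying reduction monad is $R$ by hypothesis. Initiality yields a unique morphism of models $\widehat{\mathcal{S}}\to(T,\tau)$ in $\GMon^{\mathcal{S}}$. I would then compose with the forgetful functor from $\GMon^{\mathcal{S}}$ to $\GMon$, which factors through the forgetful functor $\GMon^{\mathcal{S}}\to\GMon^{\Sigma}$ recorded just before the statement, followed by the projection $\GMon^{\Sigma}\to\GMon$ obtained from the pullback defining reduction $\Sigma$-models. The image of the model morphism under this composite is a morphism of reduction monads $\hat\tau:R\to T$, as required.

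The only step demanding any care is verifying that the data packaged in the action $\tau$ matches exactly the object-data of $\GMon^{\mathcal{S}}$; once this identification is made explicit, there is no real obstacle, since the conclusion is a direct instance of the universal property of an initial object. The statement is stated as a \emph{principle} precisely because it simply reinterprets the unique arrow out of the initial model as a recursor.
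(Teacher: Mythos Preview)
Your proposal is correct and follows essentially the same approach as the paper: the action $\tau$ turns $T$ into a model of $\mathcal{S}$, initiality of $\widehat{\mathcal{S}}$ yields a unique model morphism, and $\hat\tau$ is its underlying reduction monad morphism. The paper's proof is simply a terser version of what you wrote.
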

\begin{proof}
The action $\tau$ defines a model $M$ of $\mathcal{S}$. By initiality, there is a
unique model morphism from the initial model to $M$, and $\hat \tau$ is the
reduction monad morphism underlying it.
\end{proof}

In the next sections, we illustrate this principle. 

\subsection{Translation of lambda calculus with fixpoint to lambda calculus}
\label{ss:fixpoint-to-beta}

In this section, we consider the signature
 $\mathcal{S}_{\LCfix} $ of Example~\ref{ex:redsig-lc-fixpoint} for the lambda calculus
 with an explicit fixpoint operator.

 We build, by recursion, a reduction monad morphism from the initial model
$\LCfix$ of this signature to $\LCclot$, the \enquote{closure under identity
  and composition of reductions} (Definition~\ref{d:transitive-closure-redmonad})
of the initial model
 $\LCb$ of the signature $\mathcal{S}_{\LCb}$ (Example~\ref{ex:variants-lc}).

As explained in Section~\ref{ss:rec}, we need to define an action of
$\mathcal{S}_{\LCfix}$ in  $\LCclot$.
Note that $\mathcal{S}_\LCfix$ is an extension of $\mathcal{S}_{\LCb}$
(Example~\ref{ex:redsig-lc-fixpoint}). 
First, we focus on the core $\mathcal{S}_{{\LCb}}$ part:
 we show that the reduction monad $\LCclot$ inherits the canonical action of
 $\mathcal{S}_{\LCb}$ in $\LCb$.
\begin{lemma}
  \label{l:action-lcbered-star}
  There is an action of $\mathcal{S}_{\LCb}$ in $\LCclot$.
\end{lemma}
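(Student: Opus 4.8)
The plan is to build the required action component by component: by definition, an action of $\mathcal{S}_{\LCb}$ in $\LCclot$ consists of an action of $\Sigma_\LC$ in the underlying monad together with an action of each of the four reduction rules $\beta$-Red, $\abs$-Cong, $\app$l-Cong and $\app$r-Cong. First I would observe that, by Definition~\ref{d:transitive-closure-redmonad}, the monad underlying the reduction monad $\LCclot$ is still the monad $\LC$; hence $\LCclot$ inherits verbatim the action of $\Sigma_\LC$ carried by the initial model $\LCb$, and is thereby a reduction $\Sigma_\LC$-model. It remains to produce actions of the reduction rules. Throughout I would use the description of a reduction of $\LCclot$ from $s$ to $t$ as a finite chain $s = s_0 \rightsquigarrow \dots \rightsquigarrow s_k = t$ of $\LCb$-reductions, i.e.\ as an element of the summand $\Redof{\LCb}^k$ of $\Redof{\LCclot} = \coprod_{n} \Redof{\LCb}^n$.

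For the hypothesis-free rule $\beta$-Red the action is obtained by transport along the canonical inclusion of single reductions as length-one chains, $\Redof{\LCb} = \Redof{\LCb}^1 \hookrightarrow \Redof{\LCclot}$: composing the $\beta$-Red action of $\LCb$ with this inclusion yields, for each $(T,U)$, the length-one chain $\redfib{\app(\abs(T),U)}{T\monsubst{*:=U}}$, which is the sought reduction in $\LCclot$.

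The three congruence rules each have a single hypothesis, and here I would lift the $\LCb$-action \emph{step-wise}. Using the alternative description of an action from Lemma~\ref{l:alt-def-action}, it suffices to send each instance of the hypotheses to a reduction of the conclusion. Given, say for $\app$l-Cong, a chain $T = T_0 \rightsquigarrow \dots \rightsquigarrow T_k = T'$ together with a term $U$, I would apply the $\app$l-Cong action of $\LCb$ to each step to obtain the chain $\app(T_0,U) \rightsquigarrow \dots \rightsquigarrow \app(T_k,U)$, an element of the summand $\Redof{\LCb}^k$; the rules $\app$r-Cong and $\abs$-Cong are handled symmetrically (for $\abs$-Cong the chain lives in the once-derived reduction module, which commutes with the coproduct-and-limit description of $\Redof{\LCclot}$ since all the relevant (co)limits are computed pointwise). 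The underlying mathematical fact being used is elementary: a relation that is stable under a term operation remains stable under that operation after reflexive-transitive closure.

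The hard part will be the bookkeeping needed to certify that these step-wise assignments are genuine morphisms in the sense of Definition~\ref{d:action-red-rule}. Concretely I would check three things. First, each assignment is a morphism of $\LC$-modules: it is defined separately on each summand $\Redof{\LCb}^k$ of the (derived, where relevant) hypothesis module and is natural in the set of free variables. Second, it respects source and target, i.e.\ the produced chain runs between the two terms prescribed by the conclusion term-pair, which is immediate from the corresponding property of the $\LCb$-action applied to the endpoints of the chain. Third, and most delicately, it commutes with substitution: substitution in $\Redof{\LCclot}$ is computed chain-wise, so substituting and then applying the congruence agrees with applying the congruence and then substituting, precisely because the $\LCb$-congruence-actions are themselves module morphisms. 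Granting these verifications, the four actions assemble into the desired action of $\mathcal{S}_{\LCb}$ in $\LCclot$.
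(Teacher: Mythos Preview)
Your proposal is correct and follows essentially the same approach as the paper: the paper's proof sketch likewise notes that the underlying monad is unchanged and that the only real work is for the congruence rules with hypotheses, where one lifts the $\LCb$-action step-wise along a chain of reductions (the paper illustrates this with $\abs$-Cong). Your write-up is simply more explicit about the module-morphism bookkeeping than the paper's terse sketch.
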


We have formalized a proof of this statement in Agda.%
\footnote{
  The source code can be downloaded from
  \url{https://github.com/amblafont/unary-subst-LCstar/blob/master/fiberlambda.agda.}}

\begin{proof}
  The challenge is to give an action of reduction rules with hypotheses:
  now the input reductions of the rule may be actually sequences of reductions.
  This concerns congruence for application and abstraction.
  We take the example of abstraction: suppose we have a sequence of reductions
  $r_1\dots r_n$ going from $t_0$ to $t_n$. We want to provide a reduction between
  $\abs(t_0)$ and $\abs(t_n)$. 
  For each $i$, we have a reduction between $\abs(t_{i-1})$ and $\abs(t_i)$.
  By composing the corresponding sequence, we obtain the desired reduction.
\end{proof}

 The action for the extra parts of $\mathcal{S}_\LCfix$ requires the following:
\begin{itemize}
  \item An operation $\fix : {\LCclot}' \to \LCclot$.
    A fixpoint combinator $Y$ is a closed term with the property that for any other term $t$, the
term $\app(Y,t)$ $\beta$-reduces in some steps to $\app(t,\app(Y,t))$.
    Here, we choose a fixpoint combinator $Y$ (for example, the one of Curry), and set
    $\fix_X(t)=\app(Y,\abs(t))$, in accordance with \cite[Section 8.4]{ahrens_et_al:LIPIcs:2018:9671}.
 \item An action of the reduction rule
\begin{prooftree}
  \AxiomC{}
  \UnaryInfC{$\redrule{\fix(T)}{T\monsubst{*:=\fix(T)}}$}
\end{prooftree}
We denote by $r \in \Redof{\LCclot}(\{*\})$ a reduction between
$\app(Y,*)$ and $\app(*,\app(Y,*))$. Then, $r$ induces an $\LC$-module
morphism $\hat{r} : \LC\to \Redof{\LCclot}$ by mapping an element $t\in\LC(X)$
to $r\monsubst{*:=t}$.
We define the action of this reduction rule as the composition of the following reductions:
\[
  \app(Y,\abs(t)) \rightsquigarrow_{\hat{r}(\abs(t))} \app(\abs(t),\app(Y,\abs(t)))
      \rightsquigarrow_{\beta} t\monsubst{*:= \app(Y,\abs(t))}
\]
    \item an action of the congruence rule
        \[
          \frac{
           \redrule{T}{T'}
          }
          {
            \redrule{\fix(T)}{\fix(T')}
          }
        \]
        This can be defined in the obvious way using the congruences of application and abstraction.
  \end{itemize}

In more concrete terms, our translation is a kind of compilation which replaces each occurrence of the explicit fixpoint operator $\fix(t)$
with $\app(Y,\abs ( t))$, and each fixpoint reduction with a composite of $\beta$-reductions.

\subsection{Translation of Lambda Calculus with Explicit Substitutions into Lambda
  Calculus with Congruent Unary Substitution} 
\label{sec:translate-rec}

Here, we consider the reduction signature $\mathcal{S}_\LCex = (\Sigma_\LCex,
\mathfrak{R}_\LCex)$ introduced in Definition~\ref{d:sig-red-monad-lex}. 
 The underlying monad of the initial model $\LCex$ is the monad 
of lambda calculus with an application and abstraction operation, and an
explicit substitution operator $\LCex'\times\LCex\to \LCex$ satisfying Equation~\ref{eq:subst-explicit}, for $R=\LCex$.
The associated reduction monad has all the rules specified in Section~\ref{s:ex-lex}.
 
We build, by recursion, a reduction monad morphism from the initial model
$\LCex$ of this signature to $\LCcong$, a variant of the lambda calculus
specified by the signature $\mathcal{S}_{\LCb}$ (Example~\ref{ex:variants-lc}) extended with
the \emph{congruence for unary  substitution}:
\[
  \frac{\redrule{T}{T'}}{\redrule{U\monsubst{*:=T}}{U\monsubst{*:=T'}}}
\]
Note that this reduction rule accounts for the reflexivity rule, and 
makes congruences for application (but not congruence for abstraction) redundant:
\begin{description}
\item[Reflexivity] Any $U\in \LC(X)$ can be weakened into $\iota(U)\in\LC'(X)$.
  Then, consider any reduction $m:\redfib{T}{T'}$ . The action of
  the reduction rule above yields a reduction between $\iota(U)\monsubst{*:=T} =
  U$ and $\iota(U)\monsubst{*:=T'}=U$. By choosing $m$ adequately  
  (for example, take the $\beta$-reduction between $\app(\abs(*),\abs(*))$ and
  $\abs(*)$), this yields an action of the reflexivity reduction rule.
  \item[Congruence] Consider the left congruence rule (the cases of the right one
    and the congruence for abstraction are similar): from any reduction
    $m:\redfib{T}{T'}$, we want a reduction between $\app(T,U)$ and
    $\app(T',U)$, for $T,T',U\in \LC(X)$. We obtain it by applying the action of
    unary congruent substitution to $m$ for the term $\app(*,U)$. One checks
    that this indeed defines an action for the left congruence reduction rule of application.
\end{description}

\noindent
As explained in Section~\ref{ss:rec}, we need to define an action of $\mathcal{S}_{\LCex}$ in  $\LCcong$:
\begin{itemize}
\item the operations of application and abstraction are those of $\LCcong$ as
  the initial model of $\Sigma_\LC$ (recall that the underlying monad of $\LCcong$
  is just $\LC$);
  \item the explicit substitution operation ${\LCcong}'\times\LCcong \to \LCcong$ is defined
using the monadic substitution, mapping a pair $(t,u)\in\LCcong(X+\{*\})\times\LCcong(X)$ to the monadic substitution $t\monsubst{*:= u}$;
\item Equation~\ref{eq:subst-explicit} for the underlying monad is satisfied thanks to the usual
monadic equations;
\item the action of the congruence rules for application and abstraction are
  induced by the action of the congruence rule for unary substitution, as
  explained above;
  \item $\LCcong$ has already an action for $\beta$-reduction;
  \item all the actions for the remaining reduction rules involving explicit
    substitution (except the congruences for explicit substitution that are
    discussed below) are given by an action of the reflexivity reduction rule;
    \item
 the non-obvious actions are the ones of the congruence rules for explicit substitution:
\begin{align*}
     \frac{  \redrule{T}{T'}}
    {  \redrule{T[x/U]}{T'[x/U]} }
    \qquad
         \frac{\redrule{U}{U'}}
    {  \redrule{T[x/U]}{T[x/U']} }
    \end{align*}
    The left one is obtained from the substitution of the module of reductions (see Remark~\ref{r:congr-weak-subst}).  
    The right one is exactly given
    by the action of the congruence rule for unary substitution.
    \end{itemize}

Finally, by the recursion principle, we get a reduction monad morphism from $\LCex$ to $\LCcong$.
This translation replaces the explicit substitution operator $t[x/u]$ with the
corresponding monadic substitution $t\monsubst{x:=u}$, and all the reductions 
are translated to 
reflexivity except for the ones for the $\beta$-reduction and congruences.

\subsection{Translation of Lambda Calculus with Congruent Unary 
  Substitution into Lambda Calculus}
\label{ss:induction-principle}
In the previous section, we translated lambda calculus with explicit
substitution into lambda calculus with congruent unary  substitution. In
this section, we translate this variant of lambda calculus into $\LCclot$
(introduced in Section~\ref{ss:fixpoint-to-beta}),  
the closure under identity and composition of reductions 
the initial model
 $\LCb$ of the signature $\mathcal{S}_{\LCb}$ (Example~\ref{ex:variants-lc}).

As per Section~\ref{ss:rec}, we need to define an action in $\LCclot$ of
the signature $\mathcal{S}_{\LCb}$ extended with the reduction rule:
\begin{equation}
  \frac{\redrule{T}{T'}}{\redrule{U\monsubst{*:=T}}{U\monsubst{*:=T'}}}
  \label{eq:unary-subst-cong}
\end{equation}
Thanks to Lemma~\ref{l:action-lcbered-star}, we have an action of
$\mathcal{S}_{\LCb}$ in $\LCclot$.
Thus, the main challenge consists in equipping $\LCclot$ with an action of the rule~\eqref{eq:unary-subst-cong}.
   \begin{proposition}
     The reduction monad $\LCclot$ can be equipped
     with an action of the 
      rule~\eqref{eq:unary-subst-cong}.
    \end{proposition}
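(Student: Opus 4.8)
The plan is to produce, for each set $X$, each pair $T,T'\in\LCclot(X)$, each context $U\in\LC(X+\{*\})$, and each reduction $m:\redfib{T}{T'}$ in $\LCclot(X)$, a reduction $\phi_X(T,T',U,m):\redfib{U\monsubst{*:=T}}{U\monsubst{*:=T'}}$, and then to check that the assignment $(T,T',U,m)\mapsto\phi_X(T,T',U,m)$ is a morphism of $\LC$-modules, so that it qualifies as an action of the (normalized) rule~\eqref{eq:unary-subst-cong} in the alternative sense of Lemma~\ref{l:alt-def-action}. The key observation is that the statement is nothing but closure of reduction under the context $U$. I would therefore define $\phi$ by structural recursion on $U$, reusing the congruence actions already available in $\LCclot$ from Lemma~\ref{l:action-lcbered-star}, together with the reflexivity and composition of reductions built into the closure $\LCclot=\LCb^*$ (Definition~\ref{d:transitive-closure-redmonad}).

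Concretely, the recursion runs as follows. If $U=\et(*)$ is the fresh variable, then $U\monsubst{*:=T}=T$ and $U\monsubst{*:=T'}=T'$, so I set $\phi_X(T,T',U,m):=m$. If $U=\et(x)$ for some $x\in X$, then both sides equal $x$ and I take the identity (reflexivity) reduction, which exists since $\LCclot$ is a reflexive closure. If $U=\app(U_1,U_2)$, the recursive calls give reductions $r_i:\redfib{U_i\monsubst{*:=T}}{U_i\monsubst{*:=T'}}$; since $U\monsubst{*:=T}=\app(U_1\monsubst{*:=T},U_2\monsubst{*:=T})$, I combine $r_1$ and $r_2$ using the congruence action for application in $\LCclot$ obtained in Lemma~\ref{l:action-lcbered-star} by composing the left and right congruences. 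Finally, if $U=\abs(U_0)$, I apply the recursion to $U_0$ after weakening $T$, $T'$, and $m$ along the bound variable, obtaining a reduction in $\LCclot$ over the enlarged set of variables, and then push it back down with the congruence action for abstraction.

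The routine part is that each clause type-checks, i.e.\ the constructed reduction really has source $U\monsubst{*:=T}$ and target $U\monsubst{*:=T'}$; this follows from the equations governing substitution under $\app$ and $\abs$ in the monad $\LC$. The main obstacle is the verification that $\phi$ is a genuine $\LC$-module morphism rather than merely a family of set maps: I must check naturality in $X$ and, more delicately, compatibility with substitution, namely $\phi_Y(T\monsubst{g},T'\monsubst{g},U\monsubst{g^{+}},m\monsubst{g})=\phi_X(T,T',U,m)\monsubst{g}$ for every $g:X\to\LC(Y)$, where $g^{+}$ extends $g$ over the fresh variable. I would establish this by a second induction on $U$, using that the substitution of reductions in $\LCclot$ (Remark~\ref{r:congr-weak-subst}) commutes with the congruence actions and with composition; the abstraction case is the most technical, as it mixes the weakening over the bound variable with the substitution $g$, and is exactly where the bookkeeping of fresh variables must be handled with care. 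Note that multiple occurrences of $*$ in $U$ cause no difficulty: each leaf $\et(*)$ contributes one copy of $m$, and the congruence clauses thread these copies together into a single composite reduction, which is well defined precisely because $\LCclot$ is closed under composition.
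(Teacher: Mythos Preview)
Your proposal is correct and follows the same strategy as the paper: structural recursion on the context $U$ (which the paper formalizes via initiality of $\LC'\cdot j$ as a $\Psi$-algebra in $[\Set_0,\Set]$), followed by a second induction on $U$ to establish the module-morphism property. The paper flags one subtlety you do not mention: in the substitution-compatibility check, the variable case $U=\eta(x)$ sends $U\monsubst{g^{+}}$ to $\iota(g(x))$, the weakening of an \emph{arbitrary} term, so before the main induction goes through one must first prove the auxiliary lemma $\phi(\_,\_,\iota(t),m)=\refl(t)$ by a separate induction on $t$; you may want to add this step explicitly rather than relying on the abstraction case as the sole source of difficulty.
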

We have formalized a proof of this statement in Agda.%
\footnote{
  The source code can be downloaded from
  \url{https://github.com/amblafont/unary-subst-LCstar/blob/master/fiberlambda.agda.}}
    \begin{proof}
      We will write $r:R\to\LC\times\LC$ in place of $\redof{\LCclot}:\Redof{\LCclot}\to\LC\times\LC$.

    Such  an action is equivalently given (see Lemma~\ref{l:def-alt-action}) by
     a morphism $\alpha :
    \LC'\times R\to R$ such that the following
    diagram commutes, where $q_X(t,m) = \bigl(t\monsubst{*:=\source{}(m)}, t\monsubst{*:=\target{}(m)}\bigr)$.
    \begin{equation}
      \label{eq:action-unary-subst-commute}
      \xymatrix@R=0.7pc{
        \LC'\times R\ar@{-->}[rr]^\alpha
        \ar[rd]_{q}
        &&  R
       \ar[ld]^{r}
       \\
       & \LC\times\LC
      }
    \end{equation}

    We first construct the collection of functions $(\alpha_X)_X$ with $\alpha_X:\LC'(X)\times
    R(X)\to R(X)$ and then show the two required properties, i.e., that it
    commutes with substitution (thus inducing a $\LC$-module morphism), and that
    it satisfies Equation~\ref{eq:action-unary-subst-commute}.

    The construction of the collection of functions (without naturality conditions) is worked
    out in the functor category $[\Set_0,\Set]$, where $\Set_0$ is the
    discretized category of sets (this base category allows us to get rid of
    naturality conditions). This is done by recursion on the first argument.
 More formally, we exploit some initiality
property of
$\LC'\cdot j$, where $j:\Set_0\to\Set$ is the inclusion of the discretized category
of sets into sets. Indeed, $\LC'\cdot j$ is the initial algebra of the
endofunctor
\begin{align*}
  \Psi:[\Set_0,\Set]  \to [\Set_0,\Set]
  %\\
  \qquad
  F  \mapsto  j+1+F\times F + F'
  \end{align*}
where $F'$ is the functor mapping a set $X$ to $F(X+1)$.

The two properties that we want to show about the collection of functions $(\alpha_X)_X$ are
then proved by \emph{induction} on the first argument, again exploiting initiality
of $\LC'\cdot j$, as we explain below.
The proof goes as follows:
\begin{enumerate}
\item construct (by initiality) a morphism from $\LC'\cdot j$ to
  the exponential of $R\cdot j$ with itself, that is, to the functor
    $R^{R} :  \Set_0 \to\Set$ defined on objects by
    $X  \mapsto R(X)^{R(X)}$;
  \item show that the induced morphism from $\LC'\cdot j \times
    R\cdot j \to R\cdot j$ yields a $\LC$-module
    morphism $\alpha : \LC'\times R\to R$;
  \item show the commutation required by
    Equation~\ref{eq:action-unary-subst-commute}.
\end{enumerate}
Note how working in the functor category $[\Set_0,\Set]$ allows us to define the
functor $R^{R}$ as
above, without worrying about the functorial action on morphisms.
Below we  sometimes omit the explicit precomposition with $j$
in order to simplify the notation.
Now we perform the steps listed above.
\begin{enumerate}
\item As we argued before, $\LC'\cdot j$ is the initial algebra of $\Psi$, so
  our task consists in equipping $R^{R}$ with an
  algebra structure for $\Psi$, that we split into the following four components,
  using the universal properties of the coproduct and the exponential in the
  category $[\Set_0,\Set]$:
  \begin{enumerate}
  \item The morphism $j\times R  \to R$ corresponds to
    the case of variables.
We expect that the resulting module morphism $\alpha : \LC'\times R \to R$ satisfies
    $\alpha_X(\eta(x), m)=\refl_X(x)$ for any $x\in X$,
    where
    $\refl:\LC\to R$ maps a term to the reflexive reduction on itself
   and $\eta:\Id\to\LC'$ is the unit of the monad $\LC$.
    Accordingly, we define the morphism $j\times R\to R$ as mapping a pair $(x,m)\in X\times
    R(X)$ to $\refl_X(\eta(x))$.
  \item
    The morphism $R\to R$ corresponds to the case of
    the fresh variable $*$. We expect that $\alpha_X(*,m)=m$. Accordingly, the
    required morphism is taken as the identity on $R$.
    \item The morphism
      $(R\times R)^{R} \to R^{R}$ 
      corresponds to the case of an application. We expect that
      $\alpha_X(\app(t,u),m)=\appcongr(\alpha_X(t,m),\alpha_X(u,m))$, where
      $\appcongr:R\times R \to R$ is
      the action of the reduction rule of congruence for application defined as
      $\appcongr(m_1,m_2)=\trans(\appcongr_1(m_1),\appcongr_2(m_2))$, where
      $\trans$ denotes an action of the transitivity reduction rule with which we can
      equip $\LCclot$ (by concatenating sequences of reductions).
      Accordingly, the morphism is defined as $\appcongr^{R}$.
      \item The morphism $R'^{R'}\to R^R$ corresponds to the case of an
        abstraction. We expect that
        $\alpha_X(\abs(t),m)=\abscongr_X(\alpha_{X+1}(t,R\iota_X(m)))$, where
        $\iota : \Id \to \Id'$ is the canonical inclusion.
        Accordingly, we take $\abscongr^{R\iota}$ as the the required morphism.
  \end{enumerate}
  By initiality, we get an algebra morphism from $\LC\cdot j$ to $R^R$, which by
  uncurrying yields a morphism $\alpha:\LC'\cdot j \times R\to R$.
  \item Upgrading $\alpha$ into a module morphism from $\LC'\times R$ to $R$ consists
    in showing compatibility with substitution in the following sense: for any
    map $f : X \to \LC(Y)$, for any pair $(t,m)\in\LC'(X)\times R(X)$, the
    equality 
$
  \alpha_X(t,m)\monsubst{f} = \alpha_Y(t\monsubst{f}, m\monsubst{f})
$
is satisfied.
This is shown by induction on $t\in\LC'(X)$. The case of variables
requires a preliminary step: for $t=\eta(x)$, the equation amounts to
$\refl(f(x))=\alpha(\LC i(f(x)),m)$, which is not straightforward.
We hence first prove by induction on $t\in\LC(X)$ that $\alpha(\LC i(t),m)=\refl(t)$.
We do not detail these straightforward inductions, but rather
explain the general methodology to perform induction on $\LC'$ (the case of
$\LC$ is similar). Suppose given, for each $t\in \LC'(X)$, a predicate
$P_X(t)$. Then, one can form the functor $\LC'_{|P}:\Set_0\to\Set$ mapping a set
$X$ to the subset of $\LC'(X)$ satisfying the predicate $P_X$. It follows that  $\LC'_{|P}$
embeds into $\LC$, and if $\LC'_{|P}$ inherits the algebra structure for $\Psi$
through this embedding, then by initiality we get a section of the embedding,
which exactly translates the fact that any term $t\in\LC'(X)$ satisfies the property.

\item It remains to show the commutation of Diagram~\ref{eq:action-unary-subst-commute}.
  Again, an induction on the first argument (thus exploiting initiality of
  $\LC'\cdot j$) is enough to conclude.
  \qedhere
\end{enumerate}
\end{proof}

\section{Conclusion and future work}

We introduced the notions of reduction monad and 
reduction signature. 
For each such signature, we defined a category of models, equipped with a
forgetful functor to the category of reduction monads.
We say that a reduction signature is effective if its category of models has an initial object;
then, we say that the reduction monad underlying the initial object is generated by the signature.
Our main result identifies a simple sufficient condition for a reduction signature to be effective.

This work is the first step towards a theory for the algebraic specification of programming languages and their semantics. 
Future work could include
\begin{itemize}
\item generalizing our notion of signature to encompass richer languages;
\item extending our work to simply-typed languages;
\item proving modularity results for our signatures and their models, analogous
    to that of \cite[Theorem~27]{FSCD2019}.
\end{itemize}
Part of this is already under way~\cite{lafont2019signatures,HHL}.

\begin{acks}
We thank the referees for their careful reading and thoughtful and constructive criticism.
Furthermore, we thank Tom Hirschowitz for many discussions about this work and, in particular, for sharing his knowledge about related work.

Ahrens acknowledges the support of the \grantsponsor{}{Centre for Advanced Study (CAS)}{} in Oslo, Norway, which funded and hosted the research project \emph{\grantnum{}{Homotopy Type Theory and Univalent Foundations}} during the 2018/19 academic year.

Lafont has been supported by the \grantsponsor{}{European Research Council}{} under Grant No.~\grantnum{CoqHoTT}{637339}.

Maggesi has been supported by \grantsponsor{GNSAGA-INdAM}{Gruppo Nazionale per le Strutture Algebriche, Geometriche e le loro Applicazioni (GNSAGA-INdAM)}{https://www.altamatematica.it/gnsaga/} and \grantsponsor{MIUR}{Ministero dell'Istruzione, dell'Università e della Ricerca (MIUR)}{https://www.miur.gov.it}.
\end{acks}

\bibliography{strengthened}

\end{document}